\newtheorem{theorem}{Theorem}
\newtheorem{lemma}{Lemma}
\newtheorem{remark}{Remark}
\theoremstyle{definition}
\newtheorem{definition}{Definition}
\newtheorem{problem}{Problem}
\title{\LARGE \bf
Auxiliary-Variable Adaptive Control Barrier Functions}
\author{Shuo Liu$^{1}$, Wei Xiao$^{2}$ and Calin Belta$^{3}$
\thanks{This work was supported in part by the NSF under grant IIS-2024606 at Boston University.}
\thanks{$^{1}$S. Liu is with the Department of Mechanical Engineering, Boston
University, Brookline, MA, 02215, USA. 
        {\tt\small \{liushuo\}@bu.edu}}%
\thanks{$^{2}$W. Xiao is with the Computer Science and Artificial Intelligence Lab, Massachusetts Institute of Technology, Cambridge, MA, USA 
        {\tt\small weixy@mit.edu}}%
\thanks{$^{3}$C. Belta is with the Department of Electrical and Computer Engineering and the Department of Computer Science, University of Maryland, College Park, MD, USA  
        {\tt\small calin@umd.edu}}
}
\begin{document} 
\maketitle

\begin{abstract}
This paper addresses the challenge of ensuring safety and feasibility in control systems using Control Barrier Functions (CBFs). Existing CBF-based Quadratic Programs (CBF-QPs) often encounter feasibility issues due to mixed relative degree constraints, input nullification problems, and the presence of tight or time-varying control bounds, which can lead to infeasible solutions and compromised safety. To address these challenges, we propose Auxiliary-Variable Adaptive Control Barrier Functions (AVCBFs), a novel framework that introduces auxiliary variables in auxiliary functions to dynamically adjust CBF constraints without the need of excessive additional constraints. The AVCBF method ensures that all components of the control input explicitly appear in the desired-order safety constraint, thereby improving feasibility while maintaining safety guarantees. Additionally, we introduce an automatic tuning method that iteratively adjusts AVCBF hyperparameters to ensure feasibility and safety with less conservatism. We demonstrate the effectiveness of the proposed approach in adaptive cruise control and obstacle avoidance scenarios, showing that AVCBFs outperform existing CBF methods by reducing infeasibility and enhancing adaptive safety control under tight or time-varying control bounds.
\end{abstract}

\begin{IEEEkeywords}
Adaptive Control Barrier Functions, Safety-Critical Control, Feasibility Enhancement, Relative Degree Reduction.
\end{IEEEkeywords}

\section{Introduction}
\label{sec:Introduction}
Safety is the primary concern in
the design and operation of autonomous systems. Many existing works enforce safety as constraints in optimal control problems using Barrier Functions (BF) and Control Barrier Functions (CBF). BFs are Lyapunov-like functions \cite{tee2009barrier} whose use can be traced back to optimization problems \cite{boyd2004convex}. They have been utilized to prove set invariance \cite{aubin2011viability}, \cite{prajna2007framework} to control multi-robot systems
\cite{glotfelter2017nonsmooth},\cite{lindemann2020barrier}, manipulators \cite{yu2024efficient}, unmanned aerial vehicles \cite{tayal2024control}, legged robots \cite{peng2024real} and soft robots \cite{patterson2024safe}. CBFs are extensions of BFs for control systems.  It has been shown that stabilizing an affine control system to admissible states, while minimizing a quadratic cost subject to state and control constraints, can be mapped to a sequence of Quadratic Programs (QPs) \cite{ames2016control} by unifying CBFs and Control Lyapunov Functions (CLFs) \cite{ames2012control}. In its original form, this approach, which in this paper we will refer to as CBF-CLF, works only for safety constraints with relative degree one. Exponential CBFs \cite{nguyen2016exponential} were introduced to accommodate higher relative degrees. A more general form of exponential CBFs, called High-Order CBFs (HOCBFs), has been proposed in \cite{xiao2021high}.  However, the aforementioned CBF-CLF-QP cannot always ensure the safety of an affine control system due to several unresolved challenges.


The first challenge is that some control inputs may not appear at certain states (input nullification, as discussed in \cite{lindemann2018control}) or across all states (when the CBF constraint has a mixed relative degree in systems with multiple control inputs, as discussed in \cite{xiao2022control}) in a CBF constraint. Consequently, these missing control inputs fail to effectively influence the system state, leading to the CBF's performance in ensuring system safety being substantially compromised. The second challenge arises in CBF-CLF-QP because the CBF constraint is enforced only at discrete sampling times. Between these sampling intervals, the system may evolve dynamically and potentially violate the safety constraint, as the control input cannot continuously adjust to maintain safety. This phenomenon, referred to as the inter-sampling effect as discussed in \cite{singletary2020control,breeden2021control,yang2019self}, compromises the system's overall safety. The third challenge is that the aforementioned CBF-based QP can become infeasible, particularly when tight or time-varying control bounds are present, as conflicts may arise between the CBF constraints and the control bounds.

There are several approaches that aim to enhance the feasibility of the CBF method while guaranteeing safety. One can formulate CBFs as constraints under a Nonlinear Model Predictive Control (NMPC) framework, which allows the controller to predict future state information up to a horizon larger than one. This leads to a less aggressive control strategy \cite{zeng2021enhancing}. However, the corresponding
optimization is overall nonlinear and non-convex, which could be computationally expensive for nonlinear systems. A convex MPC with linearized, discrete-time CBFs under an iterative approach was proposed in \cite{liu2023iterative,liu2024iterative,liu2024safety} to address the above challenges, but this comes at the price of losing safety guarantees. The works in  \cite{gurriet2018online,singletary2019online,gurriet2020scalable,chen2021backup} recently developed approaches in which a known backup set or backup policy is defined that can be used to extend the safe set to a larger viable set
to enhance the feasible space of the system in a finite time horizon
under input constraints. This backup approach has further been generalized to infinite time horizons \cite{squires2018constructive} \cite{breeden2021high}. One limitation of these approaches is that they require prior knowledge of finding appropriate backup sets, policy or nominal control law, which may be difficult to be predefined. Another limitation is that they only focus on feasibility, which may introduce over-aggressive or over-conservative control strategies. Sufficient conditions have been proposed in \cite{xiao2022sufficient} to guarantee the feasibility of the CBF-based QP, but they may be hard to find for general constrained control problems. All these approaches only consider time-invariant control limitations.

In order to account for time-varying control bounds, adaptive CBFs (aCBFs) \cite{xiao2021adaptive} have been proposed by introducing penalty functions in HOCBFs constraints, which provide flexible and adaptive control strategies over time. However, this approach requires extensive hyperparameter tuning, especially when the inter-sampling effect cannot be ignored. Moreover, this approach, along with the previously mentioned methods, cannot handle safety constraints with mixed relative degrees. To address feasibility issues while ensuring safety, this article proposes a novel approach to safety-critical control problems, introducing a new type of aCBFs called Auxiliary-Variable Adaptive Control Barrier Functions (AVCBFs). Compared to HOCBFs and original aCBFs, AVCBFs are designed to more effectively handle safety constraints that involve high-order or mixed relative degrees. Specifically, the contributions of this paper are as follows:

\begin{itemize}
\item We propose Auxiliary-Variable Adaptive CBFs (AVCBFs), which can be applied to the design of safety constraints regardless of whether the constraints have any relative degree or mixed relative degrees.

\item We show that AVCBFs can improve the feasibility of the CBF method under tight and time-varying control bounds. The proposed AVCBFs maintain a structure analogous to existing CBF methods, avoiding the need for excessive additional constraints. Furthermore, AVCBFs preserve the adaptive properties of aCBFs \cite{xiao2021adaptive}, while ensuring non-overshooting control policies near the boundaries of safe sets.  

\item We propose a novel parametrization method for AVCBFs that automatically and adaptively tunes their hyperparameters to ensure both safety and feasibility. This method is activated only when the safety-feasibility criterion is detected to be unsatisfied within a future horizon. By leveraging iterative hyperparameter tuning and safety-feasibility criterion evaluations, the parametrization method guarantees safety and feasibility even in scenarios where manually tuned AVCBF fails to maintain these properties.

\item We demonstrate the effectiveness of the proposed methods on adaptive cruise control (ACC) and obstacle avoidance problems with tight and time-varying control bounds, and compare it with existing CBF methods. The results show that the proposed approaches can generate more feasible, safer, and adaptive control solutions compared to existing methods, without requiring design of excessive additional constraints and complicated hyperparameter-tuning procedures.
\end{itemize}

This work is a significant extension of our conference
paper \cite{liu2023auxiliary}, in which we introduced AVCBFs specifically for safety constraints with high relative degrees. In \cite{liu2023auxiliary}, the hyperparameters related to AVCBFs are manually tuned. In addition to developing a parametrization method for
AVCBF to automatically tune its hyperparameters to ensure both safety and feasibility, this article also includes technical details
that extend the design of AVCBFs for constraints with mixed relative degrees and more complex simulation results.

\section{Preliminaries}
\label{sec:Preliminaries}

Consider an affine control system expressed as 
\begin{equation}
\label{eq:affine-control-system}
\dot{\boldsymbol{x}}=f(\boldsymbol{x})+g(\boldsymbol{x})\boldsymbol{u},
\end{equation}
 where $\boldsymbol{x}\in \mathbb{R}^{n}, f:\mathbb{R}^{n}\to\mathbb{R}^{n}$ and $g:\mathbb{R}^{n}\to\mathbb{R}^{n\times q}$ are locally Lipschitz, and $\boldsymbol{u}\in \mathcal U\subset \mathbb{R}^{q}$ denotes the control constraint set, which is defined as 
 
\begin{equation}
\label{eq:control-constraint}
\mathcal U \coloneqq \{\boldsymbol{u}\in \mathbb{R}^{q}:\boldsymbol{u}_{min}\le \boldsymbol{u} \le \boldsymbol{u}_{max} \}, \end{equation}
 with $\boldsymbol{u}_{min},\boldsymbol{u}_{max}\in \mathbb{R}^{q}$ (the vector inequalities are interpreted componentwise).
 
\begin{definition}[Class $\kappa$ function~\cite{Khalil:1173048}]
\label{def:class-k-f}
A continuous function $\alpha:[0,a)\to[0,+\infty],a>0$ is called a class $\kappa$ function if it is strictly increasing and $\alpha(0)=0.$
\end{definition}

\begin{definition}
\label{def:forward-inv}
A set $\mathcal C\subset \mathbb{R}^{n}$ is forward invariant for system \eqref{eq:affine-control-system} if its solutions for some $\boldsymbol{u} \in \mathcal U$ starting from any $\boldsymbol{x}(0) \in \mathcal C$ satisfy $\boldsymbol{x}(t) \in \mathcal C, \forall t \ge 0.$
\end{definition}

\begin{definition}
\label{def:relative-degree}
The relative degree of a differentiable function $b:\mathbb{R}^{n}\to\mathbb{R}$ is the minimum number of times we need to differentiate it along dynamics \eqref{eq:affine-control-system} until every component of $\boldsymbol{u}$ explicitly shows. 
\end{definition}
The relative degree defined above can also be referred to as the high-order relative degree, which does not include the case of mixed relative degree (discussed in \cite{xiao2022control}).
\begin{lemma}[\!\!\!\cite{glotfelter2017nonsmooth}]
\label{lem:for-invariance}
 Let \( b : [t_0, t_1] \to \mathbb{R} \) be a continuously differentiable function. If  
\( \dot{b}(t) \geq -\alpha(b(t)) \) for all \( t \in [t_0, t_1] \), where \( \alpha \) is a class \( \mathcal{K} \) function of its argument, and \( b(t_0) \geq 0 \), then \( b(t) \geq 0 \) for all \( t \in [t_0, t_1] \).
\end{lemma}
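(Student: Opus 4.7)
The plan is to prove the lemma by contradiction, combined with a scalar comparison argument against the autonomous ODE $\dot{y} = -\alpha(y)$.

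First I would assume, for the sake of contradiction, that the conclusion fails, so that there exists some $\tau \in (t_0, t_1]$ with $b(\tau) < 0$. I would define the first crossing time $t^{*} := \inf\{\, t \in [t_0, t_1] : b(t) < 0 \,\}$. Using continuity of $b$ together with the hypothesis $b(t_0) \ge 0$, I would argue that $t^{*} \in [t_0, t_1)$, that $b(t^{*}) = 0$, and that there is a sequence $t_k \searrow t^{*}$ along which $b(t_k) < 0$.

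Next I would exploit the differential inequality at the crossing. Since $b(t^{*}) = 0$ and $\alpha(0) = 0$ by Definition~\ref{def:class-k-f}, the hypothesis gives $\dot{b}(t^{*}) \ge -\alpha(b(t^{*})) = 0$, so $b$ is non-decreasing at $t^{*}$. To rule out $b$ actually dipping below zero in every right-neighborhood of $t^{*}$, I would compare $b$ with the constant function $y(t) \equiv 0$, which is a solution of $\dot{y} = -\alpha(y)$ with $y(t^{*}) = 0$ (again using $\alpha(0)=0$). A standard scalar comparison argument then yields $b(t) \ge y(t) = 0$ for all $t \in [t^{*}, t_1]$, contradicting the existence of the sequence $t_k$. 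To formalize the comparison step I could equivalently work with the perturbed inequality $\dot{b}_\varepsilon \ge -\alpha(b_\varepsilon) + \varepsilon$ for $b_\varepsilon := b + \varepsilon$, derive strict positivity of $b_\varepsilon$, and then let $\varepsilon \to 0^{+}$.

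The main obstacle is that a class-$\mathcal{K}$ function $\alpha$ need not be Lipschitz at the origin, so solutions of $\dot{y} = -\alpha(y)$ issuing from $y(t^{*}) = 0$ may fail to be unique; one cannot invoke the textbook comparison lemma blindly. I would handle this by observing that the argument only requires $y \equiv 0$ to be \emph{a} solution (as an equilibrium), not the unique one, and by using the above $\varepsilon$-perturbation or a Nagumo-style tangent-cone argument to conclude forward invariance of the set $\{b \ge 0\}$ without needing strict uniqueness. A secondary technical point is that $\alpha$ is only defined on $[0,a)$; this is not a real issue because the proof never evaluates $\alpha$ outside this interval, and the eventual contradiction shows the orbit of $b$ remains in $[0,\infty)$ on $[t_0,t_1]$.
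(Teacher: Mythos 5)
The paper does not actually prove Lemma~\ref{lem:for-invariance}; it imports it verbatim from \cite{glotfelter2017nonsmooth}, so there is no in-paper argument to compare yours against, and I can only judge the sketch on its own terms. The first-crossing-time skeleton is the standard one, and you are right to worry about non-uniqueness of $\dot y=-\alpha(y)$ at the origin. The genuine gap is in your final paragraph, where you dismiss the domain restriction $\alpha:[0,a)\to[0,+\infty]$ as ``not a real issue'' because ``the proof never evaluates $\alpha$ outside this interval.'' This is exactly backwards: any correct proof \emph{must} invoke the differential inequality at times where $b(t)<0$, and therefore needs $\alpha$ extended to negative arguments (an extended class $\mathcal{K}$ function with $\alpha(s)<0$ for $s<0$, which is how this lemma is normally stated in the CBF literature). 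If the hypothesis is only available where $b(t)\ge 0$, the statement is false: take $b(t)=-(t-c)^{3}$ on an interval containing $c$ in its interior and $\alpha(r)=3r^{2/3}$; then $b(t_0)>0$, and $\dot b(t)=-3(t-c)^{2}=-\alpha(b(t))$ holds with equality at every $t$ with $b(t)\ge0$, yet $b(t)<0$ for all $t>c$. Your own argument exhibits the same hole: at the crossing time you only extract $\dot b(t^{*})\ge 0$, which is perfectly compatible with $b$ dipping negative immediately afterwards (the cubic above does exactly this), and the comparison against $y\equiv 0$ on $[t^{*},t_1]$ silently needs $\dot b\ge-\alpha(b)$ to keep holding on a region where $b$ is, by assumption, negative.

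Once the extension of $\alpha$ is granted, the heavy machinery becomes unnecessary and the proof closes in a few lines: if $b(\tau)<0$ for some $\tau$, set $t^{*}\coloneqq\sup\{t\in[t_0,\tau]:b(t)\ge0\}$; then $b(t^{*})=0$ and $b<0$ on $(t^{*},\tau]$, where consequently $\dot b\ge-\alpha(b)>0$, so $b(\tau)=b(t^{*})+\int_{t^{*}}^{\tau}\dot b(s)\,ds>0$, a contradiction. No comparison lemma, minimal solution, or Nagumo argument is required. Two smaller points: the perturbed inequality $\dot b_{\varepsilon}\ge-\alpha(b_{\varepsilon})+\varepsilon$ does not follow from the hypothesis --- monotonicity of $\alpha$ only gives $\dot b_{\varepsilon}\ge-\alpha(b_{\varepsilon})$, and the margin $\alpha(b_{\varepsilon})-\alpha(b_{\varepsilon}-\varepsilon)$ need not be bounded below by any positive constant for a general class $\mathcal{K}$ function; and if you do insist on the comparison route without Lipschitz continuity, you must compare against the \emph{minimal} solution of $\dot y=-\alpha(y)$ through $(t^{*},0)$, not an arbitrary one, and then separately show that this minimal solution is nonnegative --- which again reduces to the extended-domain argument above.
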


In this paper, safety is defined as the forward invariance of set $\mathcal C$. The relative degree of function $b$ is also referred to as the relative degree of constraint $b(\boldsymbol{x}) \ge 0$. For a constraint $b(\boldsymbol{x})\ge0$ with relative degree $m$, \ $b:\mathbb{R}^{n}\to\mathbb{R}$ and $\psi_{0}(\boldsymbol{x})\coloneqq b(\boldsymbol{x}),$ we can define a sequence of functions as $\psi_{i}:\mathbb{R}^{n}\to\mathbb{R},\ i\in \{1,...,m\}:$

\begin{equation}
\label{eq:sequence-f1}
\psi_{i}(\boldsymbol{x})\coloneqq\dot{\psi}_{i-1}(\boldsymbol{x})+\alpha_{i}(\psi_{i-1}(\boldsymbol{x})),\ i\in \{1,...,m\}, 
\end{equation}
where $\alpha_{i}(\cdot ),\ i\in \{1,...,m\}$ denotes a $(m-i)^{th}$ order differentiable class $\kappa$ function. A sequence of sets $\mathcal C_{i}$ are defined based on \eqref{eq:sequence-f1} as
\begin{equation}
\label{eq:sequence-set1}
\mathcal C_{i}\coloneqq \{\boldsymbol{x}\in\mathbb{R}^{n}:\psi_{i}(\boldsymbol{x})\ge 0\}, \ i\in \{0,...,m-1\}. 
\end{equation}

\begin{definition}[HOCBF~\cite{xiao2021high}]
\label{def:HOCBF}
Let $\psi_{i}(\boldsymbol{x}),\ i\in \{1,...,m\}$ be defined by \eqref{eq:sequence-f1} and $\mathcal C_{i},\ i\in \{0,...,m-1\}$ be defined by \eqref{eq:sequence-set1}. A function $b:\mathbb{R}^{n}\to\mathbb{R}$ is a High Order Control Barrier Function (HOCBF) with relative degree $m$ for system \eqref{eq:affine-control-system} if there exist $(m-i)^{\text{th}}$ order differentiable class $\kappa$ functions $\alpha_{i},\ i\in \{1,...,m\}$ such that
\begin{equation}
\label{eq:highest-HOCBF}
\begin{split}
\sup_{\boldsymbol{u}\in \mathcal U}[L_{f}^{m}b(\boldsymbol{x})+L_{g}L_{f}^{m-1}b(\boldsymbol{x})\boldsymbol{u}+O(b(\boldsymbol{x}))
+\\
\alpha_{m}(\psi_{m-1}(\boldsymbol{x}))]\ge 0,
\end{split}
\end{equation}
$\forall \boldsymbol{x}\in \mathcal C_{0}\cap,...,\cap \mathcal C_{m-1},$ where $O(\cdot)=\sum_{i=1}^{m-1}L_{f}^{i}(\alpha_{m-1}\circ\psi_{m-i-1})(\boldsymbol{x})$; $L_{f}^{m}$ denotes the $m^{\text{th}}$ Lie derivative along $f$ and $L_{g}$ denotes the matrix of Lie derivatives along the columns of $g$. $\psi_{i}(\boldsymbol{x})\ge0$ is referred to as the $i^{\text{th}}$ order HOCBF constraint. We assume that $L_{g}L_{f}^{m-1}b(\boldsymbol{x})\boldsymbol{u}\ne0$ on the boundary of set $\mathcal C_{0}\cap,...,\cap \mathcal C_{m-1}.$ 
\end{definition}

\begin{theorem}[Safety Guarantee~\cite{xiao2021high}]
\label{thm:safety-guarantee}
Given a HOCBF $b(\boldsymbol{x})$ from Def. \ref{def:HOCBF} with corresponding sets $\mathcal{C}_{0}, \dots,\mathcal {C}_{m-1}$ defined by \eqref{eq:sequence-set1}, if $\boldsymbol{x}(0) \in \mathcal {C}_{0}\cap \dots \cap \mathcal {C}_{m-1},$ then any Lipschitz controller $\boldsymbol{u}$ that satisfies the constraint in \eqref{eq:highest-HOCBF}, $\forall t\ge 0$ renders $\mathcal {C}_{0}\cap \dots \cap \mathcal {C}_{m-1}$ forward invariant for system \eqref{eq:affine-control-system}, $i.e., \boldsymbol{x} \in \mathcal {C}_{0}\cap \dots \cap \mathcal {C}_{m-1}, \forall t\ge 0.$
\end{theorem}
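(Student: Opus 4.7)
The plan is to prove Theorem~\ref{thm:safety-guarantee} by applying Lemma~\ref{lem:for-invariance} iteratively down the chain of functions $\psi_m, \psi_{m-1}, \ldots, \psi_0$ defined by \eqref{eq:sequence-f1}, using the HOCBF constraint \eqref{eq:highest-HOCBF} to start the induction and the assumption on the initial condition to seed each application.

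First I would unpack the HOCBF constraint. The quantity inside the supremum in \eqref{eq:highest-HOCBF} is exactly $\psi_m(\boldsymbol{x})$, because expanding $\psi_m(\boldsymbol{x}) = \dot{\psi}_{m-1}(\boldsymbol{x}) + \alpha_m(\psi_{m-1}(\boldsymbol{x}))$ and iteratively substituting \eqref{eq:sequence-f1} yields the top-order Lie derivatives $L_f^m b(\boldsymbol{x}) + L_g L_f^{m-1} b(\boldsymbol{x})\boldsymbol{u}$ together with the $O(b(\boldsymbol{x}))$ term collecting all lower-order Lie derivatives of the composed class-$\kappa$ functions, plus the trailing $\alpha_m(\psi_{m-1}(\boldsymbol{x}))$. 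Hence any Lipschitz controller satisfying \eqref{eq:highest-HOCBF} along the trajectory guarantees
\begin{equation*}
\dot{\psi}_{m-1}(\boldsymbol{x}(t)) + \alpha_m(\psi_{m-1}(\boldsymbol{x}(t))) \ge 0, \quad \forall t \ge 0.
\end{equation*}

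Next I would perform a finite downward induction. For the base step, the displayed inequality together with $\psi_{m-1}(\boldsymbol{x}(0)) \ge 0$ (since $\boldsymbol{x}(0) \in \mathcal{C}_{m-1}$) lets me invoke Lemma~\ref{lem:for-invariance} with $b \coloneqq \psi_{m-1}$ and class $\kappa$ function $\alpha_m$, concluding $\psi_{m-1}(\boldsymbol{x}(t)) \ge 0$ for all $t \ge 0$, i.e.\ $\mathcal{C}_{m-1}$ is forward invariant. For the inductive step, once $\psi_{i}(\boldsymbol{x}(t)) \ge 0$ is established, the defining relation \eqref{eq:sequence-f1} gives $\dot{\psi}_{i-1}(\boldsymbol{x}(t)) + \alpha_{i}(\psi_{i-1}(\boldsymbol{x}(t))) \ge 0$, and with $\psi_{i-1}(\boldsymbol{x}(0)) \ge 0$ from $\boldsymbol{x}(0) \in \mathcal{C}_{i-1}$, another application of Lemma~\ref{lem:for-invariance} yields $\psi_{i-1}(\boldsymbol{x}(t)) \ge 0$. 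Iterating down to $i=1$ gives $\psi_0(\boldsymbol{x}(t)) = b(\boldsymbol{x}(t)) \ge 0$, and collecting all the levels produces forward invariance of $\mathcal{C}_0 \cap \cdots \cap \mathcal{C}_{m-1}$.

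The hard part will not be the induction itself, which is essentially mechanical once the identification of \eqref{eq:highest-HOCBF} with $\psi_m(\boldsymbol{x}) \ge 0$ is made. The main obstacle is verifying that this identification is clean: one must carefully check that the $(m-i)^{\text{th}}$-order differentiability assumption on $\alpha_i$ is precisely what is needed for the iterated Lie derivatives constituting $O(b(\boldsymbol{x}))$ to exist, and that the trajectory $\boldsymbol{x}(t)$ remains in the common domain $\mathcal{C}_0 \cap \cdots \cap \mathcal{C}_{m-1}$ throughout $[0,t]$ so that Lemma~\ref{lem:for-invariance} can be applied on a nontrivial interval at each level. A standard way to handle the latter is to argue by continuity on a maximal interval where the trajectory stays inside the intersection and show this interval cannot be strictly bounded, which closes the proof.
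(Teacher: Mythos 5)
Your proposal is correct and follows the standard argument: identifying the left-hand side of \eqref{eq:highest-HOCBF} with $\psi_m(\boldsymbol{x})$ and then applying Lemma~\ref{lem:for-invariance} iteratively down the chain $\psi_{m-1},\dots,\psi_0$ is exactly how this result is established in the cited reference, and it mirrors the proof the paper gives for its own AVCBF analogue (Theorem~\ref{thm:safety-guarantee-3}). Note that the paper itself states Theorem~\ref{thm:safety-guarantee} without proof, citing \cite{xiao2021high}, so there is no in-paper proof to diverge from; your reconstruction matches the intended argument.
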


\begin{definition}[CLF~\cite{ames2012control}]
\label{def:control-l-f}
A continuously differentiable function $V:\mathbb{R}^{n}\to\mathbb{R}$ is an exponentially stabilizing Control Lyapunov Function (CLF) for system \eqref{eq:affine-control-system} if there exist constants $c_{1}>0, c_{2}>0,c_{3}>0$ such that for $\forall \boldsymbol{x} \in \mathbb{R}^{n}, c_{1}\left \|  \boldsymbol{x} \right \| ^{2} \le V(\boldsymbol{x}) \le c_{2}\left \|  \boldsymbol{x} \right \| ^{2},$
\begin{equation}
\label{eq:clf}
\inf_{\boldsymbol{u}\in \mathcal U}[L_{f}V(\boldsymbol{x})+L_{g}V(\boldsymbol{x})\boldsymbol{u}+c_{3}V(\boldsymbol{x})]\le 0.
\end{equation}
\end{definition}

The existing works \cite{nguyen2016exponential, xiao2021high} combine HOCBFs \eqref{eq:highest-HOCBF} for systems with high relative degree with quadratic costs to formulate safety-critical optimization problems. CLFs \eqref{eq:clf} can also be incorporated into optimization problems (see \cite{xiao2022sufficient, xiao2021adaptive}) if exponential convergence of certain states is desired. In these works, time is discretized into time intervals $[t_{k},t_{k+1})$ where $t_{0}=0, t_{N}=T,k\in \{0,...,N-1\}$,
and an optimization problem with constraints
given by HOCBFs and CLFs is solved in each
time interval. Since the state value $\boldsymbol{x}(t_{k})$ is fixed
at the beginning of the interval, these
constraints are linear in control, therefore each optimization problem is a QP. The optimal control $\boldsymbol{u}^{\ast}(t_{k})$ obtained by solving each QP is applied at the beginning of the interval and held constant for the
whole interval. During each interval, the state is updated using dynamics \eqref{eq:affine-control-system}. Since the controller updates only occur at specific sampling instances $t_{k},$ the system evolves uncontrolled between two updates within the sampling interval $[t_{k},t_{k+1}).$ This uncontrolled behavior can lead to the inter-sampling effect in \cite{singletary2020control}. Another form of uncontrolled behavior arises when the coefficient of the control input in the CBF constraint is partially zero for all states (as seen when the CBF candidate has a mixed relative degree in systems with multiple control inputs \cite{xiao2022control}) or at certain states (input nullification discussed in \cite{lindemann2018control}). These uncontrolled behaviors compromise safety.

This method, referred to as CBF-CLF-QP, ensures safety based on the assumption that the quadratic program (QP) remains feasible at every time step and that the uncontrolled behaviors are negligible. However, the feasibility is not guaranteed, especially under tight or time-varying control bounds. Moreover, mitigating the uncontrolled behaviors requires appropriate selection of both CBF candidates and the hyperparameters associated with them. However, the choice of CBF candidates and hyperparameters remains myopic. The authors of \cite{xiao2021adaptive} proposed a new type of HOCBF called PACBF, which introduced a time-varying penalty variable \( p_{i}(t) \) in front of the class \( \kappa \) function in the \( i^{\text{th}} \) order HOCBF constraint (\( i \in \{1, \dots, m\} \)), aiming to maximize the feasibility of solving CBF-CLF-QPs. However, the formulation of PACBFs requires the design of numerous additional constraints. Defining these constraints may not be straightforward and can lead to complex hyperparameter-tuning processes when both feasibility and uncontrolled behaviors are considered simultaneously, making manual hyperparameter tuning challenging for ensuring safety. To address these issues, we introduce Auxiliary-Variable Adaptive Control Barrier Functions (AVCBFs) and a parametrization method that ensures safety and feasibility by automatically tuning hyperparameters, as detailed in Sec. \ref{sec:Auxiliary-Variable Adaptive Control Barrier Functions}.

\section{Problem Formulation and Approach}
\label{sec:Problem Formulation and Approach}
Our goal is to generate a control strategy for system \eqref{eq:affine-control-system} such that it converges to a desired state, some measure of spent energy is minimized, safety is satisfied, and control limitations are observed. 

\textbf{Objective:} We consider the cost  
\begin{equation}
\label{eq:cost-function-1}
\begin{split}
 J(\boldsymbol{u}(t))=\int_{0}^{T} 
 \| \boldsymbol{u}(t) \| ^{2}dt+Q\left \| \boldsymbol{x}(T)-\boldsymbol{x}_{e} \right \| ^{2},
\end{split}
\end{equation}
where $\left \| \cdot \right \|$ denotes the 2-norm of a vector, and $T>0$ denotes the ending time; $Q>0$ denotes a weight factor and $\boldsymbol{x}_{e} \in \mathbb{R}^{n}$ is a desired state, which is assumed to be an equilibrium for system \eqref{eq:affine-control-system}. $Q\left \| \boldsymbol{x}(T)-\boldsymbol{x}_{e} \right \| ^{2}$ denotes state convergence.

\textbf{Safety Requirement:} System \eqref{eq:affine-control-system} should always satisfy one or more safety requirements of the form: 
\begin{equation}
\label{eq:Safety constraint}
b(\boldsymbol{x})\ge 0, \boldsymbol{x} \in \mathbb{R}^{n}, \forall t \in [0, T],
\end{equation}
where $b:\mathbb{R}^{n}\to\mathbb{R}$ is assumed to be a continuously differentiable equation. 

\textbf{Control Limitations:} The controller $\boldsymbol{u}$ should always satisfy \eqref{eq:control-constraint} for all $t \in [0, T].$

A control policy is \textbf{feasible} if all constraints derived from previously mentioned requirements are satisfied and mutually non-conflicting during period $[0, T]$. In this paper, we consider the following problem:

\begin{problem}
\label{prob:SACC-prob}
Find a feasible control policy for system \eqref{eq:affine-control-system} such that cost \eqref{eq:cost-function-1} is minimized.
\end{problem}

In \cite{xiao2021high}, the authors defined a HOCBF to enforce \eqref{eq:Safety constraint}. They also used a relaxed CLF to realize the state convergence in \eqref{eq:cost-function-1}. Since the cost is quadratic in $\boldsymbol{u}$, the Prob. \ref{prob:SACC-prob} using CBF-CLF-QPs was formulated as:
\begin{equation}
\label{eq:optimal control-cost}
\begin{split}
\min_{u(t),\delta(t)} \int_{0}^{T}(\left \| \boldsymbol{u}(t) \right \| ^{2}+Q\delta^{2}(t))dt.
\end{split}
\end{equation}
subject to
\begin{subequations}
\label{eq:hard constraints}
\begin{align}
L_{f}^{m}b(\boldsymbol{x})+L_{g}L_{f}^{m-1}&b(\boldsymbol{x})\boldsymbol{u}+O(b(\boldsymbol{x}))
+\alpha_{m}(\psi_{m-1}(\boldsymbol{x}))\ge 0,\label{subeq:HOCBF as 1}\\ 
L_{f}V(\boldsymbol{x})+&L_{g}V(\boldsymbol{x})\boldsymbol{u}+c_{3}V(\boldsymbol{x})\le \delta(t),\label{subeq:CLF as 2}\\
&\boldsymbol{u}_{min}\le \boldsymbol{u} \le \boldsymbol{u}_{max},\label{subeq:control bounds as 3}
\end{align}
\end{subequations}
where $V(\boldsymbol{x}(t))=(\boldsymbol{x}(t)-\boldsymbol{x}_{e})^{T}P(\boldsymbol{x}(t)-\boldsymbol{x}_{e}), P$ is positive definite, $c_{3}>0, Q>0$ and $\delta(t) \in \mathbb{R}$ is a relaxation variable (decision variable) to minimize for less violation of the strict CLF constraint. $b(\boldsymbol{x})$ has relative degree $m$ and $V(\boldsymbol{x})$ has relative degree 1. The above optimization problem is \textbf{feasible at a given state $\boldsymbol{x}$} if all the constraints define a non-empty set for the decision variables $\boldsymbol{u},\delta.$ As discussed in Sec. \ref{sec:Preliminaries}, the optimal control $\boldsymbol{u}^{\ast}(t_{k})$ obtained by solving each QP is applied at the beginning of the interval and held constant for the whole interval. However, the CBF-CLF-QPs could easily be infeasible at some $t_{k+1}$. In other words, after applying the constant vector $\boldsymbol{u}^{\ast}(t_{k})$ to system \eqref{eq:affine-control-system} for the time interval $[t_{k},t_{k+1})$, we may end up at a state $\boldsymbol{x}(t_{k+1})$ where the HOCBF constraint \eqref{subeq:HOCBF as 1} conflicts with the control bounds \eqref{subeq:control bounds as 3}, which would render the CBF-CLF-QP corresponding to getting $\boldsymbol{u}^{\ast}(t_{k+1})$ infeasible.

 \textbf{Approach:} In this paper, we introduce a time-varying auxiliary variable $a(t)$. Additionally, we will develop a new constraint $\psi_{0}(\boldsymbol{x},a(t))\ge 0$ based on the function $b(\boldsymbol{x}),$ and we require that when $\psi_{0}(\boldsymbol{x},a(t))\ge 0$ is satisfied, 
$b(\boldsymbol{x})\ge 0$ is also guaranteed to be satisfied. Therefore, 
$\psi_{0}(\boldsymbol{x},a(t))\ge 0$ will be the sufficient safety requirement. Similar to Eq. \eqref{eq:sequence-f1}, we need to differentiate $\psi_{0}$ multiple times. To ensure that the auxiliary variable $a(t)$ is smooth and can be differentiated multiple times, we design its auxiliary dynamic system, which is influenced by the auxiliary input $\nu$. Both $\boldsymbol{u}$ and $\nu$ will fully appear in the desired-order constraint,which will replace Eq. \eqref{subeq:HOCBF as 1} in QP to ensure the safety requirement.

Note that even if $b(\boldsymbol{x})$ has a mixed relative degree in systems with multiple control inputs, $\psi_{0}$ can be designed to ensure that all control inputs fully appear in a specific constraint. Moreover, $\nu$ is an unbounded decision variable incorporated into the QP formulation to dynamically adjust $a(t)$ based on the current state of the system and the Eq. \eqref{eq:optimal control-cost} becomes 
\begin{equation}
\label{eq:optimal control-cost new}
\min_{u(t),\delta(t),\nu(t)} \int_{0}^{T}(\left \| \boldsymbol{u}(t) \right \| ^{2}+Q\delta^{2}(t)+W(\nu(t)-a_{w})^{2})dt,
\end{equation}
where $W>0, a_{w} \in \mathbb{R}$ are hyperparameters related to $\nu(t).$
The variation of $a(t)$ provides additional flexibility for hyperparameter adaptation in each $\psi_{i}$ and reduces the likelihood of infeasible solutions. In terms of compromised safety and feasibility due to inter-sampling effects and inappropriate hyperparameter selection, we propose a parametrization method to tune hyperparameters (e.g., $Q_{t_{k}}, W_{t_{k}},a_{w,t_{k}},\alpha_{i,t_{k}}(\cdot)$)
such that the safety requirements and the control limitations are satisfied, i.e., $b(\boldsymbol{x}_{[0,T]})\ge 0,$ and $\boldsymbol{u}_{min}\le \boldsymbol{u}^{\ast}_{[0,T]} \le \boldsymbol{u}_{max}$, which remains a challenging problem for state-of-the-art methods.

\section{Auxiliary-Variable Adaptive Control Barrier Functions}
\label{sec:Auxiliary-Variable Adaptive Control Barrier Functions}
In this section, we introduce Auxiliary-Variable Adaptive Control Barrier Functions (AVCBFs) for safety-critical control.
We start with a simple example to motivate the need for AVCBFs.
\subsection{Motivation Example: Simplified Adaptive Cruise Control}
\label{subsec:SACC-problem}

Consider a Simplified Adaptive Cruise Control (SACC) problem with the dynamics of ego vehicle expressed as 
\begin{small}
\begin{equation}
\label{eq:SACC-dynamics}
\underbrace{\begin{bmatrix}
\dot{z}(t) \\
\dot{v}(t) 
\end{bmatrix}}_{\dot{\boldsymbol{x}}(t)}  
=\underbrace{\begin{bmatrix}
 v_{p}-v(t) \\
 0
\end{bmatrix}}_{f(\boldsymbol{x}(t))} 
+ \underbrace{\begin{bmatrix}
  0 \\
  1 
\end{bmatrix}}_{g(\boldsymbol{x}(t))}u(t),
\end{equation}
\end{small}
where $v_{p}>0, v(t)>0$ denote the velocity of the lead vehicle (constant velocity) and ego vehicle, respectively, $z(t)$ denotes the distance between the lead and ego vehicle and $u(t)$ denotes the acceleration (control) of ego vehicle, subject to the control constraints
\begin{equation}
\label{eq:simple-control-constraint}
u_{min}\le u(t) \le u_{max}, \forall t \ge0,
\end{equation}
where $u_{min}<0$ and $u_{max}>0$ are the minimum and maximum control input, respectively.

 For safety, we require that $z(t)$ always be greater than or equal to the safety distance denoted by $l_{p}>0,$ i.e., $z(t)\ge l_{p}, \forall t \ge 0.$ Based on Def. \ref{def:HOCBF}, let $\psi_{0}(\boldsymbol{x})\coloneqq b(\boldsymbol{x})=z(t)-l_{p}.$ From \eqref{eq:sequence-f1} and \eqref{eq:sequence-set1}, since the relative degree of $b(\boldsymbol{x})$ is 2, we have
\begin{equation}
\label{eq:SACC-HOCBF-sequence}
\begin{split}
&\psi_{1}(\boldsymbol{x})\coloneqq v_{p}-v(t)+k_{1}\psi_{0}(\boldsymbol{x})\ge 0
,\\
&\psi_{2}(\boldsymbol{x})\coloneqq -u(t)+k_{1}(v_{p}-v(t))+k_{2}\psi_{1}(\boldsymbol{x})\ge 0,
\end{split}
\end{equation}
where $\alpha_{1}(\psi_{0}(\boldsymbol{x}))\coloneqq k_{1}\psi_{0}(\boldsymbol{x}), \alpha_{2}(\psi_{1}(\boldsymbol{x}))\coloneqq k_{2}\psi_{1}(\boldsymbol{x}), k_{1}>0, k_{2}>0.$ The constant class $\kappa$ coefficients $k_{1}, k_{2}$ are always chosen to be small to provide the ego vehicle with a conservative control strategy for safety, i.e., smaller $k_{1}, k_{2}$ result in earlier braking (see~\cite{xiao2021high}). Suppose we aim to minimize the energy cost as $\int_{0}^{T} u^{2}(t)dt$. We can formulate the cost function in the QP with the constraint $\psi_{2}(\boldsymbol{x}) \geq 0$ and the control input constraint~\eqref{eq:simple-control-constraint} to obtain the optimal controller for the SACC problem. However, the feasible input set may become empty if $u(t) \leq k_{1} (v_{p} - v(t)) + k_{2} \psi_{1}(\boldsymbol{x}) < u_{\min}$, which leads to optimization infeasibility. In~\cite{xiao2021adaptive}, the authors introduced penalty variables in front of class $\kappa$ functions to enhance feasibility. This approach defines  
 $\psi_{0}(\boldsymbol{x}) \coloneqq b(\boldsymbol{x}) = z(t) - l_{p}$ as a PACBF, and additional constraints can be further defined as follows:
\begin{equation}
\label{eq:SACC-PACBF-sequence}
\begin{split}
\psi_{1}(\boldsymbol{x},p_{1})&\coloneqq v_{p}-v(t)+p_{1}(t)k_{1}\psi_{0}(\boldsymbol{x})\ge 0,\\
\psi_{2}(\boldsymbol{x},p_{1},&\boldsymbol{\nu})\coloneqq \nu_{1}(t)k_{1}\psi_{0}(\boldsymbol{x})+p_{1}(t)k_{1}(v_{p}
-v(t))\\
&-u(t)+\nu_{2}(t)k_{2}\psi_{1}(\boldsymbol{x},p_{1}(t))\ge 0,
\end{split}
\end{equation}
where $\nu_{1}(t)=\dot{p}_{1}(t),\nu_{2}(t)=p_{2}(t), p_{1}(t)\ge0,p_{2}(t)\ge0,\boldsymbol{\nu}=(\nu_{1}(t),\nu_{2}(t)).$ $p_{1}(t),p_{2}(t)$ are time-varying penalty variables, which alleviate the conservativeness of the control strategy and $\nu_{1}(t),\nu_{2}(t)$ are auxiliary inputs, which relax the constraints for $u(t)$ in $\psi_{2}(\boldsymbol{x},p_{1},\boldsymbol{\nu})\ge0$ and \eqref{eq:simple-control-constraint}. However, in practice, we need to define several additional constraints to make PACBF valid as shown in Eqs. (24)-(27) in \cite{xiao2021adaptive}. First, we need to define HOCBFs ($b_{1}(p_{1})=p_{1}(t),b_{2}(p_{2})=p_{2}(t))$ based on Def. \ref{def:HOCBF} to ensure $p_{1}(t)\ge0,p_{2}(t)\ge0.$ Next we need to define HOCBF ($b_{3}(p_{1})=p_{1,max}-p_{1}(t)$) to confine the value of $p_{1}(t)$ in the range $[0,p_{1,max}].$ We also need to define CLF ($V(p_{1})=(p_{1}(t)-p_{1}^{\ast})^{2}$) based on Def. \ref{def:control-l-f} to keep $p_{1}(t)$ close to a small value $p_{1}^{\ast}.$
 In fact, the strict monotonicity of a class $\kappa$ function ensures that the control feedback intensity increases as the function's input variable increases. This property allows the control strategy to vary smoothly without abrupt change, thereby reducing the risk of violating control constraints within inter-sampling intervals. Consequently, incorporating a class $\kappa$ function in the definition of a CBF enhances control safety. $b_{3}(p_{1}), V(p_{1})$ are necessary since the function $\psi_{0}(\boldsymbol{x},p_{1})\coloneqq p_{1}(t)k_{1}\psi_{0}(\boldsymbol{x})$ in first constraint in \eqref{eq:SACC-PACBF-sequence} is not a class $\kappa$ function with respect to $\psi_{0}(\boldsymbol{x}).$ This is because $p_{1}(t)k_{1}\psi_{0}(\boldsymbol{x})$ is not guaranteed to strictly increase with respect to $\psi_{0}(\boldsymbol{x})$ due to the time-varying nature of $p_{1}(t)$,
which is against Thm. \ref{thm:safety-guarantee}, therefore $\psi_{1}(\boldsymbol{x},p_{1})\ge 0$ in \eqref{eq:SACC-PACBF-sequence} may not guarantee $\psi_{0}(\boldsymbol{x})\ge 0,$ especially under inter-sampling effect. This illustrates why we have to limit the growth of $p_{1}(t)$ by defining $b_{3}(p_{1}),V(p_{1}).$ However, the way to choose appropriate values for $p_{1,max},p_{1}^{\ast}$ is not straightforward. We can imagine as the relative degree of $b(\boldsymbol{x})$ gets higher, the number of additional constraints we should define also gets larger, which results in complicated hyperparameter-tuning process. To address this issue, we introduce $a_{1}(t),a_{2}(t)$ in the form
\begin{small}
\begin{equation}
\label{eq:SACC-AVBCBF-sequence}
\begin{split}
\psi_{1}(\boldsymbol{x},\boldsymbol{a},\dot{a}_{1})\coloneqq a_{2}(t)(\dot{\psi}_{0}(\boldsymbol{x},a_{1}(t))
+k_{1}\psi_{0}(\boldsymbol{x},a_{1}(t)))\ge 0,\\
\psi_{2}(\boldsymbol{x},\boldsymbol{a},\dot{a}_{1},\boldsymbol{\nu})\coloneqq \nu_{2}(t)\frac{\psi_{1}(\boldsymbol{x},\boldsymbol{a},\dot{a}_{1}(t))}{a_{2}(t)} +a_{2}(t)(\nu_{1}(t)(z(t)\\
-l_{p})+2\dot{a}_{1}(t)(v_{p}-v(t))-a_{1}(t)u(t)+k_{1}\dot{\psi}_{0}(\boldsymbol{x},a_{1}(t)))\\
+k_{2}\psi_{1}(\boldsymbol{x},\boldsymbol{a},\dot{a}_{1}(t))\ge 0, 
\end{split}
\end{equation}
\end{small}
where $\psi_{0}(\boldsymbol{x},a_{1})\coloneqq a_{1}(t)b (\boldsymbol{x})=a_{1}(t)(z(t)-l_{p})$, $\boldsymbol{\nu}=[\nu_{1}(t),\nu_{2}(t)]^{T}=[\ddot{a}_{1}(t),\dot{a}_{2}(t)]^{T}$, $\boldsymbol{a}=[a_{1}(t),a_{2}(t)]^{T}$, $a_{1}(t),a_{2}(t)$ are time-varying auxiliary variables. Since $\psi_{0}(\boldsymbol{x},a_{1})\ge0,\psi_{1}(\boldsymbol{x},\boldsymbol{a},\dot{a}_{1})\ge 0$ will not be against $b(\boldsymbol{x})\ge 0,\dot{\psi}_{0}(\boldsymbol{x},a_{1})
+k_{1}\psi_{0}(\boldsymbol{x},a_{1})\ge 0$ iff $a_{1}(t)>0,a_{2}(t)>0,$ we need to ensure $a_{1}(t)>0,a_{2}(t)>0,$ which will be illustrated in Sec. \ref{sec:AVCBFs}.  $\nu_{1}(t),\nu_{2}(t)$ are auxiliary inputs which are used to alleviate the restriction of constraints for $u(t)$ in $\psi_{2}(\boldsymbol{x},\boldsymbol{a},\dot{a}_{1},\boldsymbol{\nu})\ge0$ and \eqref{eq:simple-control-constraint}. Different from the first constraint in \eqref{eq:SACC-PACBF-sequence}, $k_{1}\psi_{0}(\boldsymbol{x},a_{1})$ is still a class $\kappa$ function with respect to $\psi_{0}(\boldsymbol{x},a_{1}),$ therefore we do not need to define additional HOCBFs and CLFs like $b_{3}(p_{1}),V(p_{1})$ to limit the growth of $a_{1}(t).$
We can rewrite $\psi_{1} (\boldsymbol{x},\boldsymbol{a},\dot{a}_{1})$ in \eqref{eq:SACC-AVBCBF-sequence} as
\begin{equation}
\label{eq:SACC-AVBCBF-sequence-rewrite}
\begin{split}
\psi_{1}(\boldsymbol{x},\boldsymbol{a},\dot{a}_{1})\coloneqq a_{2}(t)a_{1}(t)(v_{p}-v(t)\\
+k_{1}(1+\frac{\dot{a}_{1}(t)}{k_{1}a_{1}(t)})b(\boldsymbol{x}))\ge 0.
\end{split}
\end{equation}
Compared to the first constraint in \eqref{eq:SACC-HOCBF-sequence}, $\frac{\dot{a}_{1}(t)}{a_{1}(t)}$ is a time-varying auxiliary term to alleviate the conservativeness of control that small $k_{1}$ originally has, which shows the adaptivity of auxiliary terms to constant class $\kappa$ coefficients. 
\subsection{Adaptive HOCBFs for Safety:\ AVCBFs}
\label{sec:AVCBFs}

Motivated by the SACC example in Sec. \ref{subsec:SACC-problem}, given a function $b:\mathbb{R}^{n}\to\mathbb{R}$ with relative degree $m$ for system \eqref{eq:affine-control-system} and a time-varying vector $\boldsymbol{a}(t)\coloneqq [a_{1}(t),\dots,a_{m}(t)]^{T}$ with positive components called auxiliary variables, the key idea in converting a regular HOCBF into an adaptive
one without defining excessive constraints is to place one auxiliary variable in front of each function in \eqref{eq:sequence-f1} similar to \eqref{eq:SACC-AVBCBF-sequence}. As described in Sec. \ref{subsec:SACC-problem}, we only need to ensure each $a_{i}(t)>0, i \in \{1,...,m\}.$ However, if a function $b:\mathbb{R}^{n}\to\mathbb{R}$ has a mixed relative degree for system \eqref{eq:affine-control-system} as seen in \cite{xiao2022control}, i.e., not all control inputs will first appear in the same constraint after the same number of differentiations, simply multiplying each auxiliary variable with each function will not ensure that all components of $\boldsymbol{u}$ completely appear. To address this issue, we first propose the following definition:
\begin{definition}[Minimum Relative Degree]
\label{def: minimum relative degree}
If a function $b:\mathbb{R}^{n}\to\mathbb{R}$ has a mixed or high-order relative degree for system \eqref{eq:affine-control-system}, the minimum relative degree, denoted by $\underline{m}$, is the smallest number of times $b$ needs to differentiate along dynamics \eqref{eq:affine-control-system} until at least one component of $\boldsymbol{u}$ explicitly shows. 
\end{definition}

Note that mixed relative degree is a special case of minimum relative degree. A function with high-order relative degree still satisfies the condition that at least one component of $\boldsymbol{u}$ explicitly appears after the required number of differentiations. Thus, high-order relative degree can indeed be considered another special case of minimum relative degree, specifically when the differentiation needed to reveal $\boldsymbol{u}$ (for all components) is the smallest across all possible scenarios. Given a function $b:\mathbb{R}^{n}\to\mathbb{R}$ with a minimum relative degree $\underline{m}$ for the system \eqref{eq:affine-control-system}, and a time-varying vector $\boldsymbol{a}(t)\coloneqq [a_{1}(t),\dots,a_{m_{a}}(t)]^{T}$, we introduce a set of positive auxiliary functions $\mathcal{A}_{i}(\boldsymbol{x},a_{i}(t))$ for $i \in \{1,...,m_{a}\}$, where $1 \le m_{a} \le \underline{m},$ and $ \boldsymbol{x}\in \mathbb{R}^{n}$ represents the states in \eqref{eq:affine-control-system}. These auxiliary functions will be placed in front of some functions in \eqref{eq:sequence-f1} to ensure that all components of $\boldsymbol{u}$ are revealed in $m_{a}^{\text{th}}$ function in \eqref{eq:sequence-f1}. Similarly to the auxiliary variables in \eqref{eq:SACC-AVBCBF-sequence}, these auxiliary functions are required to be positive.

 \begin{remark}[Auxiliary Functions with Reduced Relative Degree]
\label{rem:reduced degree}
We do not need to construct auxiliary functions up to the minimum relative degree \( \underline{m} \) because the introduction of auxiliary functions \( \mathcal{A}_{i}(\boldsymbol{x}, a_{i}(t)) \) for \( i \in \{1, \dots, m_{a}\} \), where \( 1 \leq m_{a} \leq \underline{m} \), is sufficient to reveal all components of \( \boldsymbol{u} \) in the \( m_a^{\text{th}} \) function in \eqref{eq:sequence-f1}. By allowing \( m_a \) to be flexible and less than or equal to \( \underline{m} \), we reduce computational complexity and avoid introducing unnecessary auxiliary functions while still ensuring that the control inputs are explicitly captured within the dynamics.
\end{remark}

To ensure each $\mathcal{A}_{i}(\boldsymbol{x},a_{i}(t))>0, i \in \{1,...,m_{a}\},$ we define auxiliary systems that contain auxiliary states $\boldsymbol{\pi}_{i}(t)$ and inputs $\nu_{i}(t)$, through which systems we can extend each $\mathcal{A}_{i}(\boldsymbol{x},a_{i}(t))$ to desired relative degree $m_{a}$, just like $b(\boldsymbol{x})$ has minimum relative degree $\underline{m}$
with respect to the dynamics \eqref{eq:affine-control-system}. Consider $m_{a}$ auxiliary systems in the form 
\begin{equation}
\label{eq:virtual-system}
\dot{\boldsymbol{\pi}}_{i}=F_{i}(\boldsymbol{\pi}_{i})+G_{i}(\boldsymbol{\pi}_{i})\nu_{i}, i \in \{1,...,m_{a}\},
\end{equation}
where $\boldsymbol{\pi}_{i}(t)\coloneqq [\pi_{i,1}(t),\dots,\pi_{i,m_{a}+1-i}(t)]^{T}\in \mathbb{R}^{m_{a}+1-i}$ denotes an auxiliary state with $\pi_{i,j}(t)\in \mathbb{R}, j \in \{1,...,m_{a}+1-i\}.$ $\nu_{i}\in \mathbb{R}$ denotes an auxiliary input for \eqref{eq:virtual-system}, $F_{i}:\mathbb{R}^{m_{a}+1-i}\to\mathbb{R}^{m_{a}+1-i}$ and $G_{i}:\mathbb{R}^{m_{a}+1-i}\to\mathbb{R}^{m_{a}+1-i}$ are locally Lipschitz. For simplicity, we just build up the connection between an auxiliary variable and the system as $a_{i}(t)=\pi_{i,1}(t), \dot{\pi}_{i,1}(t)=\pi_{i,2}(t),\dots,\dot{\pi}_{i,m_{a}-i}(t)=\pi_{i,m_{a}+1-i}(t)$ and make $\dot{\pi}_{i,m_{a}+1-i}(t)=\nu_{i}.$ The augmented systems are obtained as 
\begin{equation}
\label{eq:augmented-system}
\underbrace{\begin{bmatrix}
\dot{\boldsymbol{x}} \\
\dot{\boldsymbol{\pi}}_{i}
\end{bmatrix}}_{\dot{\boldsymbol{z}}_{i}}  
=\underbrace{\begin{bmatrix}
 f(\boldsymbol{x}) \\
 F_{i}(\boldsymbol{\pi}_{i})
\end{bmatrix}}_{\mathcal{F}_{i}(\boldsymbol{z}_{i})} 
+ \underbrace{\begin{bmatrix}
 g(\boldsymbol{x}) & \boldsymbol{0} \\
 \boldsymbol{0} & G_{i}(\boldsymbol{\pi}_{i}) 
\end{bmatrix}}_{\mathcal{G}_{i}(\boldsymbol{z}_{i})}\underbrace{\begin{bmatrix}
\boldsymbol{u}  \\
\nu_{i}
\end{bmatrix}}_{\dot{\boldsymbol{v}}_{i}}  
 i \in \{1,...,m_{a}\},
\end{equation}
where $\boldsymbol{z}_{i}\in \mathbb{R}^{n+m_{a}+1-i}, \boldsymbol{v}_{i}\in \mathbb{R}^{q+1}. ~\mathcal{F}_{i}:\mathbb{R}^{n+m_{a}+1-i}\to\mathbb{R}^{n+m_{a}+1-i}$ and $\mathcal{G}_{i}:\mathbb{R}^{n+m_{a}+1-i}\to\mathbb{R}^{(n+m_{a}+1-i)\times (q+1)}$ are locally Lipschitz. With these augmented systems, we can define many specific HOCBFs $h_{i}$ to enable $\mathcal{A}_{i}(\boldsymbol{z}_{i})$ to be positive. 
Given a function $h_{i}:\mathbb{R}^{n+m_{a}+1-i}\to\mathbb{R}$ with relative degree $m_{a}+1-i$ with respect to dynamics \eqref{eq:augmented-system}, we can define a sequence of functions $\varphi_{i,j}:\mathbb{R}^{n+m_{a}+1-i}\to\mathbb{R}, i \in\{1,...,m_{a}\}, j \in\{1,...,m_{a}+1-i\}:$
\begin{equation}
\label{eq:virtual-HOCBFs}
\varphi_{i,j}(\boldsymbol{z}_{i})\coloneqq\dot{\varphi}_{i,j-1}(\boldsymbol{z}_{i})+\alpha_{i,j}(\varphi_{i,j-1}(\boldsymbol{z}_{i})),
\end{equation}
where $\varphi_{i,0}(\boldsymbol{z}_{i})\coloneqq h_{i}(\boldsymbol{z}_{i}),$ $\alpha_{i,j}(\cdot)$ are $(m_{a}+1-i-j)^{th}$ order differentiable class $\kappa$ functions. Sets $\mathcal{B}_{i,j}$ are defined as
\begin{equation}
\label{eq:virtual-sets}
\mathcal B_{i,j}\coloneqq \{\boldsymbol{z}_{i}\in\mathbb{R}^{n+m_{a}+1-i}:\varphi_{i,j}(\boldsymbol{z}_{i})>0\}, \ j\in \{0,...,m_{a}-i\}. 
\end{equation}
Let $\varphi_{i,j}(\boldsymbol{z}_{i}),\ j\in \{1,...,m_{a}+1-i\}$ and $\mathcal B_{i,j},\ j\in \{0,...,m_{a}-i\}$ be defined by \eqref{eq:virtual-HOCBFs} and \eqref{eq:virtual-sets} respectively. By Def. \ref{def:HOCBF}, a function $h_{i}:\mathbb{R}^{n+m_{a}+1-i}\to\mathbb{R}$ is a HOCBF with relative degree $m_{a}+1-i$ for system \eqref{eq:virtual-system} if there exist class $\kappa$ functions $\alpha_{i,j},\ j\in \{1,...,m_{a}+1-i\}$ as in \eqref{eq:virtual-HOCBFs} such that
\begin{small}
\begin{equation}
\label{eq:highest-SHOCBF}
\begin{split}
\sup_{\boldsymbol{v}_{i}\in \mathbb{R}^{q+1}}[L_{\mathcal{F}_{i}}^{m_{a}+1-i}h_{i}(\boldsymbol{z}_{i})+L_{\mathcal{G}_{i}}L_{\mathcal{F}_{i}}^{m_{a}-i}h_{i}(\boldsymbol{z}_{i})\boldsymbol{v}_{i}+O_{i}(h_{i}(\boldsymbol{z}_{i}))\\
+ \alpha_{i,m_{a}+1-i}(\varphi_{i,m_{a}-i}(\boldsymbol{z}_{i}))] \ge \epsilon,
\end{split}
\end{equation}
\end{small}
$\forall \boldsymbol{z}_{i}\in \mathcal B_{i,0}\cap,...,\cap \mathcal B_{i,m_{a}-i}$. $O_{i}(\cdot)=\sum_{j=1}^{m_{a}-i}L_{\mathcal{F}_{i}}^{j}(\alpha_{i,m_{a}-i}\circ\varphi_{i,m_{a}-1-i})(\boldsymbol{z}_{i}) $ where $\circ$ denotes the composition of functions. $\epsilon$ is a positive constant which can be infinitely small. 

\begin{remark}
\label{rem:safety-guarantee-2}
If $h_{i}(\boldsymbol{z}_{i})$ is a HOCBF illustrated above and $\boldsymbol{z}_{i}(0) \in \mathcal {B}_{i,0}\cap \dots \cap \mathcal {B}_{i,m_{a}-i},$ then satisfying constraint in \eqref{eq:highest-SHOCBF} is equivalent to making $\varphi_{i,m_{a}+1-i}(\boldsymbol{z}_{i}(t))\ge \epsilon>0, \forall t\ge 0.$ Based on
\eqref{eq:virtual-HOCBFs}, since $\boldsymbol{z}_{i}(0) \in \mathcal {B}_{i,m_{a}-i}$ (i.e., $\varphi_{i,m_{a}-i}(\boldsymbol{z}_{i}(0))>0),$ then we have $\varphi_{i,m_{a}-i}(\boldsymbol{z}_{i}(t))>0$ (If there exists a $t_{1}\in (0,t_{2}]$, which makes $\varphi_{i,m_{a}-i}(\boldsymbol{z}_{i}(t_{1}))=0,$ then we have $\dot{\varphi}_{i,m_{a}-i}(\boldsymbol{z}_{i}(t_{1}))>0\Leftrightarrow \varphi_{i,m_{a}-i}(\boldsymbol{z}_{i}(t_{1}^{-}))\varphi_{i,m_{a}-i}(\boldsymbol{z}_{i}(t_{1}^{+}))<0,$ which is against the definition of $\alpha_{i,m_{a}+1-i}(\cdot),$ therefore $\forall t_{1}>0, \varphi_{i,m_{a}-i}(\boldsymbol{z}_{i}(t_{1}))>0,$ note that $t_{1}^{-},t_{1}^{+}$ denote the left and right limit). Based on \eqref{eq:virtual-HOCBFs}, since $\boldsymbol{z}_{i}(0) \in \mathcal {B}_{i,m_{a}-1-i},$ then similarly we have $\varphi_{i,m_{a}-1-i}(\boldsymbol{z}_{i}(t))>0,\forall t\ge 0.$ Repeatedly, we have $\varphi_{i,0}(\boldsymbol{z}_{i}(t))>0,\forall t\ge 0,$ therefore the sets $\mathcal {B}_{i,0},\dots,\mathcal {B}_{i,m_{a}-i}$ are forward invariant.
\end{remark}

For simplicity, we can make $h_{i}(\boldsymbol{z}_{i})=\mathcal{A}_{i}(\boldsymbol{z}_{i})= \mathcal{A}_{i}(\boldsymbol{x},a_{i}(t)).$ Based on Rem. \ref{rem:safety-guarantee-2}, each $\mathcal{A}_{i}(\boldsymbol{x},a_{i}(t))$ will be positive.

The remaining question is how to define an adaptive HOCBF to guarantee $b(\boldsymbol{x})\ge0$ with the assistance of auxiliary functions. Let $\boldsymbol{Z}(t)\coloneqq [\boldsymbol{z}_{1}(t),\dots,\boldsymbol{z}_{m_{a}}(t)]^{T}$ and $\boldsymbol{V}\coloneqq [\boldsymbol{v}_{1},\dots,\boldsymbol{v}_{m_{a}}]^{T}$ denote the augmented states and control inputs of system \eqref{eq:augmented-system}, respectively. We can define a sequence of functions 
\begin{small}
\begin{equation}
\label{eq:AVBCBF-sequence}
\begin{split}
&\psi_{0}(\boldsymbol{Z})\coloneqq \mathcal{A}_{1}(\boldsymbol{x},a_{1}(t))b(\boldsymbol{x}),\\
&\psi_{i}(\boldsymbol{Z})\coloneqq \mathcal{A}_{i+1}(\boldsymbol{x},a_{i+1}(t))(\dot{\psi}_{i-1}(\boldsymbol{Z})+\alpha_{i}(\psi_{i-1}(\boldsymbol{Z}))),
\end{split}
\end{equation}
\end{small}
where $i \in \{1,...,m_{a}-1\}, \psi_{m_{a}}(\boldsymbol{Z},\boldsymbol{V})\coloneqq \dot{\psi}_{m_{a}-1}(\boldsymbol{Z},\boldsymbol{V})+\alpha_{m_{a}}(\psi_{m_{a}-1}(\boldsymbol{Z})).$ We further define a sequence of sets $\mathcal{C}_{i}$ associated with \eqref{eq:AVBCBF-sequence} in the form 
\begin{equation}
\label{eq:AVBCBF-set}
\begin{split}
\mathcal C_{i}\coloneqq \{\boldsymbol{Z} \in \mathbb{R}^{n+\frac{m_{a}(m_{a}+1)}{2} } :\psi_{i}(\boldsymbol{Z})\ge 0\}, 
\end{split}
\end{equation}
where $i \in \{0,...,m_{a}-1\}.$
Since $\mathcal{A}_{i}(\boldsymbol{x},a_{i}(t))$ is a HOCBF with relative degree $m_{a}+1-i$ for \eqref{eq:virtual-system}, based on \eqref{eq:highest-SHOCBF}, we define a constraint set $\mathcal{U}_{\boldsymbol{V}}$ for $\boldsymbol{V}$ as 
\begin{equation}
\label{eq:constraint-up}
\begin{split}
\mathcal{U}_{\boldsymbol{V}}(\boldsymbol{Z})\coloneqq 
\{ \boldsymbol{V} \in \mathbb{R}^{q+m_{a}} :  
L_{\mathcal{F}_{i}}^{m_{a}+1-i} \mathcal{A}_{i}(\boldsymbol{x},a_{i}(t)) \\
+ L_{\mathcal{G}_{i}} L_{\mathcal{F}_{i}}^{m_{a}-i} \mathcal{A}_{i}(\boldsymbol{x},a_{i}(t)) \boldsymbol{v}_{i} 
+ O_{i}(\mathcal{A}_{i}(\boldsymbol{x},a_{i}(t))) \\
+ \alpha_{i,m_{a}+1-i} (\varphi_{i,m_{a}-i}(\mathcal{A}_{i}(\boldsymbol{x},a_{i}(t)))) 
\geq \epsilon, \\ i \in \{1,\dots,m_{a}\} \}.
\end{split}
\end{equation}
where $\varphi_{i,m_{a}-i}(\cdot)$ is defined similar to \eqref{eq:virtual-HOCBFs} and $\mathcal{A}_{i}(\boldsymbol{x},a_{i}(t))$ is ensured positive. $\epsilon$ is a positive constant which can be infinitely small. 

\begin{definition}[AVCBF]
\label{def:AVBCBF}
Let $\psi_{i}(\boldsymbol{Z}),\ i\in \{0,...,m_{a}\}$ be defined by \eqref{eq:AVBCBF-sequence} and $\mathcal C_{i},\ i\in \{0,...,m_{a}-1\}$ be defined by \eqref{eq:AVBCBF-set}. A function $b(\boldsymbol{x}):\mathbb{R}^{n}\to\mathbb{R}$ is an Auxiliary-Variable Adaptive Control Barrier Function (AVCBF) with minimum relative degree $\underline{m}$ for system \eqref{eq:affine-control-system} if every $\mathcal{A}_{i}(\boldsymbol{x},a_{i}(t)),i\in \{1,...,m_{a}\},m_{a}\le \underline{m}$ is a HOCBF with relative degree $m_{a}+1-i$ for the auxiliary system \eqref{eq:augmented-system}, and there exist $(m_{a}-j)^{th}$ order differentiable class $\kappa$ functions $\alpha_{j},j\in \{1,...,m_{a}-1\}$
and a class $\kappa$ functions $\alpha_{m_{a}}$ s.t.
\begin{small}
\begin{equation}
\label{eq:highest-AVBCBF}
\begin{split}
\sup_{\boldsymbol{V}\in \mathcal{U}_{V}}[\sum_{j=2}^{m_{a}-1}[(\prod_{k=j+1}^{m_{a}}\mathcal{A}_{k})\frac{\psi_{j-1}}{\mathcal{A}_{j}}\boldsymbol{H}_{j}] + \frac{\psi_{m_{a}-1}}{\mathcal{A}_{m_{a}}}\boldsymbol{H}_{m_{a}} \\ +(\prod_{i=2}^{m_{a}}\mathcal{A}_{i})b(\boldsymbol{x})\boldsymbol{H}_{1} +(\prod_{i=1}^{m_{a}}\mathcal{A}_{i})(L_{f}^{m_{a}}b(\boldsymbol{x})+L_{g}L_{f}^{m_{a}-1}b(\boldsymbol{x})\boldsymbol{u})\\+R(b(\boldsymbol{x}),\boldsymbol{Z})
+ \alpha_{m_{a}}(\psi_{m_{a}-1})] \ge 0,
\end{split}
\end{equation}
\end{small}
where $\boldsymbol{H}_{j}\coloneqq L_{\mathcal{F}_{j}}^{m_{a}+1-j}\mathcal{A}_{j}+L_{\mathcal{G}_{j}}L_{\mathcal{F}_{j}}^{m_{a}-j}\mathcal{A}_{j}\boldsymbol{v}_{j}, \forall \boldsymbol{Z} \in \mathcal C_{0}\cap,...,\cap \mathcal C_{m_{a}-1}$ and each $\mathcal{A}_{i}>0, i,j\in\{1,\dots,m_{a}\}.$ In \eqref{eq:highest-AVBCBF}, $R(b(\boldsymbol{x}),\boldsymbol{Z})$ denotes the remaining Lie derivative terms of $b(\boldsymbol{x})$ (or $\boldsymbol{Z}$) along $f$ (or $\mathcal{F}_{i},i\in\{1,\dots,m_{a}\}$) with degree less than $m_{a}$ (or $m_{a}+1-i$), which is similar to the form of $O(\cdot )$ in \eqref{eq:highest-HOCBF}.
\end{definition}

\begin{theorem}
\label{thm:safety-guarantee-3}
Given an AVCBF $b(\boldsymbol{x})$ from Def. \ref{def:AVBCBF} with corresponding sets $\mathcal{C}_{0}, \dots,\mathcal {C}_{m_{a}-1}$ defined by \eqref{eq:AVBCBF-set}, if $\boldsymbol{Z}(0) \in \mathcal {C}_{0}\cap \dots \cap \mathcal {C}_{m_{a}-1},$ then if there exists solution of Lipschitz controller $\boldsymbol{V}$ that satisfies the constraint in \eqref{eq:highest-AVBCBF} and also ensures $\boldsymbol{Z}\in \mathcal {C}_{m_{a}-1}$ for all $t\ge 0,$ then $\mathcal {C}_{0}\cap \dots \cap \mathcal {C}_{m_{a}-1}$ will be rendered forward invariant for system \eqref{eq:affine-control-system}, $i.e., \boldsymbol{Z} \in \mathcal {C}_{0}\cap \dots \cap \mathcal {C}_{m_{a}-1}, \forall t\ge 0.$ Moreover, $b(\boldsymbol{x})\ge 0$ is ensured for all $t\ge 0.$
\end{theorem}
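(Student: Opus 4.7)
The plan is to exploit the cascaded structure of \eqref{eq:AVBCBF-sequence}, combined with the positivity of every auxiliary function $\mathcal{A}_{i}$, and then apply Lemma \ref{lem:for-invariance} repeatedly, working from the top of the chain at $\psi_{m_{a}-1}$ down to $\psi_{0}$ and finally to $b(\boldsymbol{x})$.

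First, I would verify that the inequality \eqref{eq:highest-AVBCBF} is nothing other than $\psi_{m_{a}}(\boldsymbol{Z},\boldsymbol{V})\ge 0$, i.e., $\dot{\psi}_{m_{a}-1}+\alpha_{m_{a}}(\psi_{m_{a}-1})\ge 0$. This follows by an induction on $i$ from $\psi_{i}=\mathcal{A}_{i+1}(\dot{\psi}_{i-1}+\alpha_{i}(\psi_{i-1}))$ and the product rule. The terms coming from differentiating $b$ aggregate into $(\prod_{i=1}^{m_{a}}\mathcal{A}_{i})(L_{f}^{m_{a}}b+L_{g}L_{f}^{m_{a}-1}b\,\boldsymbol{u})$, those coming from differentiating each outermost $\mathcal{A}_{j}$ against its multiplicand produce the $\boldsymbol{H}_{j}$ pieces collected in the two explicit sums, and all lower-order cross terms gather into $R(b(\boldsymbol{x}),\boldsymbol{Z})$, mirroring the role of $O(\cdot)$ in \eqref{eq:highest-HOCBF}. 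This step is mechanical bookkeeping.

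Second, since by hypothesis the controller $\boldsymbol{V}$ also lies in $\mathcal{U}_{\boldsymbol{V}}(\boldsymbol{Z})$ from \eqref{eq:constraint-up}, each $\mathcal{A}_{i}(\boldsymbol{x},a_{i}(t))$, viewed as a HOCBF of relative degree $m_{a}+1-i$ for the augmented system \eqref{eq:augmented-system}, satisfies its own highest-order HOCBF inequality with slack $\epsilon>0$. Invoking Remark \ref{rem:safety-guarantee-2} with $h_{i}=\mathcal{A}_{i}$ then gives $\mathcal{A}_{i}(\boldsymbol{x}(t),a_{i}(t))>0$ for every $t\ge 0$ and every $i\in\{1,\dots,m_{a}\}$.

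Third, the main inductive argument proceeds top-down. The theorem assumes $\psi_{m_{a}-1}(t)\ge 0$ for all $t\ge 0$, providing the base case. For the inductive step, suppose $\psi_{i}(t)\ge 0$ for all $t\ge 0$; then from $\psi_{i}=\mathcal{A}_{i+1}(\dot{\psi}_{i-1}+\alpha_{i}(\psi_{i-1}))$ together with $\mathcal{A}_{i+1}>0$, I obtain $\dot{\psi}_{i-1}(t)\ge -\alpha_{i}(\psi_{i-1}(t))$ pointwise. Since $\boldsymbol{Z}(0)\in\mathcal{C}_{i-1}$ yields $\psi_{i-1}(0)\ge 0$, Lemma \ref{lem:for-invariance} delivers $\psi_{i-1}(t)\ge 0$ for all $t\ge 0$. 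Iterating this argument from $i=m_{a}-1$ down to $i=1$ establishes forward invariance of $\mathcal{C}_{0}\cap\cdots\cap\mathcal{C}_{m_{a}-1}$. Finally, $\psi_{0}=\mathcal{A}_{1}b\ge 0$ combined with $\mathcal{A}_{1}>0$ yields $b(\boldsymbol{x}(t))\ge 0$ for all $t\ge 0$.

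The one place I expect friction is Step 1: matching every term in \eqref{eq:highest-AVBCBF} against the fully expanded derivative of the recursive definition, because the $\mathcal{A}_{j}$ factors couple with $b$ at different depths and the residual $R(b(\boldsymbol{x}),\boldsymbol{Z})$ must be read off unambiguously. Once that identification is pinned down, the remainder of the proof is a clean downward sweep using Lemma \ref{lem:for-invariance} and the positivity guarantee from Remark \ref{rem:safety-guarantee-2}.
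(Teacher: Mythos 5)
Your proposal is correct and follows essentially the same route as the paper's proof: a top-down sweep from $\psi_{m_{a}-1}\ge 0$ (guaranteed by the hypothesis $\boldsymbol{Z}\in\mathcal{C}_{m_{a}-1}$ for all $t\ge 0$) through the recursion \eqref{eq:AVBCBF-sequence}, dividing out each positive $\mathcal{A}_{i}$ and invoking Lemma \ref{lem:for-invariance} at every level to conclude $\psi_{i-1}\ge 0$, terminating with $b(\boldsymbol{x})=\psi_{0}/\mathcal{A}_{1}\ge 0$. Your Steps 1 and 2 (identifying \eqref{eq:highest-AVBCBF} with $\psi_{m_{a}}\ge 0$ and securing $\mathcal{A}_{i}>0$ via Remark \ref{rem:safety-guarantee-2}) merely make explicit what the paper's proof asserts in its opening sentence and takes from Def. \ref{def:AVBCBF}, respectively.
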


\begin{proof}
If $b(\boldsymbol{x})$ is an AVCBF that is $m_{a}^{\text{th}}$ order differentiable, then satisfying constraint in \eqref{eq:highest-AVBCBF} while ensuring $\boldsymbol{Z} \in \mathcal {C}_{m_{a}-1}$ for all $t\ge 0$ is equivalent to make $\psi_{m_{a}-1}(\boldsymbol{Z})\ge 0, \forall t\ge 0.$ Since $\mathcal{A}_{m_{a}}(t)>0$, we have $\frac{\psi_{m_{a}-1}(\boldsymbol{Z})}{\mathcal{A}_{m_{a}}}\ge 0.$ Based on
\eqref{eq:AVBCBF-sequence}, since $\boldsymbol{Z}(0) \in \mathcal {C}_{m_{a}-2}$ (i.e., $\frac{\psi_{m_{a}-2}(\boldsymbol{Z}(0))}{\mathcal{A}_{m_{a}-1}(0)}\ge 0),\mathcal{A}_{m_{a}-1}(t)>0,$ then we have $\psi_{m_{a}-2}(\boldsymbol{Z})\ge 0$ (The proof of this follows from Lemma \ref{lem:for-invariance}), and also $\frac{\psi_{m_{a}-2}(\boldsymbol{Z})}{\mathcal{A}_{m_{a}-1}(t)}\ge 0.$ Based on \eqref{eq:AVBCBF-sequence}, since $\boldsymbol{Z}(0) \in \mathcal {C}_{m_{a}-3},\mathcal{A}_{m_{a}-2}(t)>0$ then similarly we have $\psi_{m_{a}-3}(\boldsymbol{Z})\ge 0$ and $\frac{\psi_{m_{a}-3}(\boldsymbol{Z})}{\mathcal{A}_{m_{a}-2}(t)}\ge 0,\forall t\ge 0.$ Repeatedly, we have $\psi_{0}(\boldsymbol{Z})\ge 0$ and $\frac{\psi_{0}(\boldsymbol{Z})}{\mathcal{A}_{1}(t)}\ge 0,\forall t\ge 0.$ Therefore the sets $\mathcal {C}_{0},\dots,\mathcal {C}_{m_{a}-1}$ are forward invariant and $b(\boldsymbol{x})=\frac{\psi_{0}(\boldsymbol{Z})}{\mathcal{A}_{1}(t)}\ge 0$ is ensured for all $t\ge 0$.
\end{proof}
Based on Thm. \ref{thm:safety-guarantee-3}, the safety regarding $b(\boldsymbol{x})=\frac{\psi_{0}(\boldsymbol{Z})}{\mathcal{A}_{1}(t)}\ge 0$ is guaranteed. The following theorem further addresses the feasibility of solving the CBF-CLF-QP discussed in Sec. \ref{sec:Problem Formulation and Approach}.
\begin{theorem}
\label{thm:feasibility-guarantee-3}
Given an AVCBF $b(\boldsymbol{x})$ from Def. \ref{def:AVBCBF} with corresponding sets $\mathcal{C}_{0}, \dots,\mathcal {C}_{m_{a}-1}$ defined by \eqref{eq:AVBCBF-set}, and the expression $\boldsymbol{H}_{j}\coloneqq L_{\mathcal{F}_{j}}^{m_{a}+1-j}\mathcal{A}_{j}+B_{j}\boldsymbol{u} + B_{j}^\nu \nu_{j}$ from \eqref{eq:highest-AVBCBF}, where
$B_{j}\coloneqq L_{\mathcal{G}_{j}}L_{\mathcal{F}_{j}}^{m_{a}-j}\mathcal{A}_{j}, B_{j}^\nu$ is the coefficient associated with $\nu_{j}.$
if $\boldsymbol{Z}(0) \in \mathcal {C}_{0}\cap \dots \cap \mathcal {C}_{m_{a}-1}$ and each $B_{j}^\nu, j\in \{1,...,m_{a}\}$ is always positive, then the optimization
problem for solving CLF-CBF-QP with constraints \eqref{subeq:CLF as 2},\eqref{subeq:control bounds as 3},\eqref{eq:constraint-up} and \eqref{eq:highest-AVBCBF} is point-wise
feasible for any $\boldsymbol{Z}$ lying inside $\mathcal {C}_{m_{a}-1},$ i.e., the optimization problem is solvable with a unique solution $\boldsymbol{V}^{\ast}$ for any $\boldsymbol{Z}$ when $\psi_{m_{a}-1}(\boldsymbol{Z})> 0$. 
\end{theorem}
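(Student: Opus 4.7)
The plan is to construct, at any state $\boldsymbol{Z}\in\mathcal{C}_{m_a-1}$ with $\psi_{m_a-1}(\boldsymbol{Z})>0$, an explicit feasible decision vector $(\boldsymbol{u},\delta,\nu_1,\dots,\nu_{m_a})$ for the QP. The key leverage is that the CLF relaxation $\delta$ and the auxiliary inputs $\nu_i$ are all unbounded decision variables, whereas the control bound \eqref{subeq:control bounds as 3} is the only hard box constraint on $\boldsymbol{u}$. I would fix any $\boldsymbol{u}\in[\boldsymbol{u}_{\min},\boldsymbol{u}_{\max}]$ (a non-empty set by \eqref{eq:control-constraint}) and then cascade through the remaining constraints, selecting $\delta$ and each $\nu_i$ sequentially using the positivity of the $\nu$-coefficients.

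The CLF constraint \eqref{subeq:CLF as 2} is satisfied by taking $\delta$ sufficiently large, since it is an unbounded slack. For each $i\in\{1,\dots,m_a\}$, the $i$-th auxiliary HOCBF constraint in \eqref{eq:constraint-up} is affine in $\boldsymbol{v}_i=[\boldsymbol{u}^{T},\nu_i]^{T}$ with coefficient $B_i^{\nu}>0$ on $\nu_i$ by hypothesis; hence for the already-fixed $\boldsymbol{u}$ it reduces to a one-sided lower bound $\nu_i\ge\underline{\nu}_i(\boldsymbol{Z},\boldsymbol{u})$. For the main AVCBF constraint \eqref{eq:highest-AVBCBF}, I would isolate the coefficient of $\nu_{m_a}$: since $\nu_{m_a}$ appears only inside $\boldsymbol{H}_{m_a}$, this coefficient equals $\tfrac{\psi_{m_a-1}(\boldsymbol{Z})}{\mathcal{A}_{m_a}}B_{m_a}^{\nu}$. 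By hypothesis $\psi_{m_a-1}(\boldsymbol{Z})>0$, by the HOCBF construction of $\mathcal{A}_i$ together with Remark \ref{rem:safety-guarantee-2} we have $\mathcal{A}_{m_a}>0$, and by assumption $B_{m_a}^{\nu}>0$, so this coefficient is strictly positive. After fixing $\boldsymbol{u}$ and $\nu_1,\dots,\nu_{m_a-1}$, the main constraint thus reduces to an additional lower bound on $\nu_{m_a}$, and taking $\nu_{m_a}$ above both its auxiliary-HOCBF bound and this new bound yields a fully feasible point. Uniqueness of $\boldsymbol{V}^{\ast}$ then follows from the strict convexity of the quadratic cost \eqref{eq:optimal control-cost new} in $(\boldsymbol{u},\delta,\nu_1,\dots,\nu_{m_a})$.

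The hardest step is verifying that $\nu_{m_a}$ is the only term with a strictly positive, nonvanishing coefficient available to dominate the main inequality: this requires carefully parsing the nested product in \eqref{eq:highest-AVBCBF} to confirm that no $\boldsymbol{H}_j$ with $j<m_a$ contributes another $\nu_{m_a}$ term, and invoking positivity of every $\mathcal{A}_i$ to rule out cancellations of the overall coefficient. A secondary subtlety is the dependence order, since the threshold for $\nu_{m_a}$ depends on the previously selected $\boldsymbol{u}$ and $\nu_1,\dots,\nu_{m_a-1}$ through the terms $\frac{\psi_{j-1}}{\mathcal{A}_j}\boldsymbol{H}_j$ in \eqref{eq:highest-AVBCBF}; the construction therefore proceeds in the order $\boldsymbol{u}\to\nu_1\to\dots\to\nu_{m_a-1}\to\nu_{m_a}$, but no circular dependency arises because each $\nu_i$ is chosen after everything its own constraint depends on.
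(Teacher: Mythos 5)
Your proposal is correct and rests on the same key mechanism as the paper's proof: each auxiliary input $\nu_j$ is unbounded and enters the constraints with a strictly positive coefficient, so the inequalities can always be satisfied by pushing the $\nu_j$'s upward, while $\boldsymbol{u}$ stays in its box and $\delta$ absorbs the CLF constraint. The one structural difference is how much positivity you need. The paper first runs a cascade from the hypothesis $\psi_{m_a-1}(\boldsymbol{Z})>0$ down through $\psi_{m_a-2}>0,\dots,\psi_0>0, b(\boldsymbol{x})>0$ (via Lemma \ref{lem:for-invariance} and Remark \ref{rem:safety-guarantee-2}), precisely so that \emph{every} $\nu_j$ coefficient in the rewritten constraint \eqref{eq:highest-AVBCBF-rewrite} --- namely $(\prod_{k=j+1}^{m_a}\mathcal{A}_k)\frac{\psi_{j-1}}{\mathcal{A}_j}B_j^{\nu}$, $\frac{\psi_{m_a-1}}{\mathcal{A}_{m_a}}B_{m_a}^{\nu}$, and $(\prod_{i=2}^{m_a}\mathcal{A}_i)\,b(\boldsymbol{x})B_1^{\nu}$ --- is positive, and then argues all the $\nu_i$'s can be raised simultaneously. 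Your sequential construction $\boldsymbol{u}\to\nu_1\to\dots\to\nu_{m_a}$ lets $\nu_{m_a}$ alone dominate the main inequality, and its coefficient $\frac{\psi_{m_a-1}}{\mathcal{A}_{m_a}}B_{m_a}^{\nu}$ is positive directly from the hypothesis, so you bypass the cascade entirely and need make no sign claim about $\frac{\psi_{j-1}}{\mathcal{A}_j}\boldsymbol{H}_j$ for $j<m_a$ (those terms are just constants once the earlier variables are fixed). This is a mild but genuine streamlining; it also makes explicit the existence of a feasible point and the uniqueness of the minimizer (strict convexity of the cost \eqref{eq:cost-function-3} over a nonempty closed polyhedron), which the paper's proof asserts but does not spell out. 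The only nit: cite \eqref{eq:cost-function-3} rather than \eqref{eq:optimal control-cost new} for the cost with all $m_a$ auxiliary inputs.
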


\begin{proof}
If $\boldsymbol{Z}(0) \in \mathcal {C}_{0}\cap \dots \cap \mathcal {C}_{m_{a}-1}$ and for any $\boldsymbol{Z}, \psi_{m_{a}-1}(\boldsymbol{Z})>0,$ we have $\frac{\psi_{m_{a}-1}(\boldsymbol{Z})}{\mathcal{A}_{m_{a}}}\coloneqq \dot{\psi}_{m_{a}-2}(\boldsymbol{Z})+\alpha_{m_{a}-1}(\psi_{m_{a}-2}(\boldsymbol{Z})) >0.$ Based on Lemma \ref{lem:for-invariance}, Rem. \ref{rem:safety-guarantee-2} and \eqref{eq:AVBCBF-sequence}, since $\boldsymbol{Z}(0) \in \mathcal {C}_{m_{a}-2}$ (i.e., $\psi_{m_{a}-2}(\boldsymbol{Z}(0)) \ge 0),$ we have $\psi_{m_{a}-2}(\boldsymbol{Z})>0,$ and also $\frac{\psi_{m_{a}-2}(\boldsymbol{Z})}{\mathcal{A}_{m_{a}-1}}\coloneqq \dot{\psi}_{m_{a}-3}(\boldsymbol{Z})+\alpha_{m_{a}-2}(\psi_{m_{a}-3}(\boldsymbol{Z})) >0.$ Based on Lemma \ref{lem:for-invariance}, Rem. \ref{rem:safety-guarantee-2} and \eqref{eq:AVBCBF-sequence}, since $\boldsymbol{Z}(0) \in \mathcal {C}_{m_{a}-3}$ (i.e., $\psi_{m_{a}-3}(\boldsymbol{Z}(0)) \ge 0),$ we have $\psi_{m_{a}-3}(\boldsymbol{Z})>0,$ and also $\frac{\psi_{m_{a}-3}(\boldsymbol{Z})}{\mathcal{A}_{m_{a}-2}}\coloneqq \dot{\psi}_{m_{a}-4}(\boldsymbol{Z})+\alpha_{m_{a}-3}(\psi_{m_{a}-4}(\boldsymbol{Z})) >0.$ Repeatedly, we have $\psi_{0}(\boldsymbol{Z})>0,$ and also $\frac{\psi_{0}(\boldsymbol{Z})}{\mathcal{A}_{1}}\coloneqq b(\boldsymbol{x}) >0.$ Rewrite \eqref{eq:constraint-up}, we have
\begin{equation}
\label{eq:constraint-up-rewrite}
\begin{split}
L_{\mathcal{F}_{i}}^{m_{a}+1-i}\mathcal{A}_{i}+B_{i}\boldsymbol{u}
+ B_{i}^\nu \nu_{i}  \ge
- O_{i}(\mathcal{A}_{i}) \\
- \alpha_{i,m_{a}+1-i} (\varphi_{i,m_{a}-i}(\mathcal{A}_{i})) + \epsilon.
\end{split}
\end{equation}
 Rewrite \eqref{eq:highest-AVBCBF}, we have
\begin{equation}
\label{eq:highest-AVBCBF-rewrite}
\begin{split}
\sum_{j=2}^{m_{a}-1}[(\prod_{k=j+1}^{m_{a}}\mathcal{A}_{k})\frac{\psi_{j-1}}{\mathcal{A}_{j}}(L_{\mathcal{F}_{j}}^{m_{a}+1-j}\mathcal{A}_{j}+B_{j}\boldsymbol{u} + B_{j}^\nu \nu_{j})] \\+ \frac{\psi_{m_{a}-1}}{\mathcal{A}_{m_{a}}}(L_{\mathcal{F}_{m_{a}}}\mathcal{A}_{m_{a}}+B_{m_{a}}\boldsymbol{u} + B_{m_{a}}^\nu \nu_{m_{a}})\\ +(\prod_{i=2}^{m_{a}}\mathcal{A}_{i})b(\boldsymbol{x})(L_{\mathcal{F}_{1}}^{m_{a}}\mathcal{A}_{1}+B_{1}\boldsymbol{u} + B_{1}^\nu \nu_{1}) \\ \ge -(\prod_{i=1}^{m_{a}}\mathcal{A}_{i})(L_{f}^{m_{a}}b(\boldsymbol{x})+L_{g}L_{f}^{m_{a}-1}b(\boldsymbol{x})\boldsymbol{u})\\-R(b(\boldsymbol{x}),\boldsymbol{Z})
- \alpha_{m_{a}}(\psi_{m_{a}-1}).
\end{split}
\end{equation}
Since the coefficients of $\nu_{i}$ in \eqref{eq:constraint-up-rewrite} and \eqref{eq:highest-AVBCBF-rewrite} are always positive (i.e., $B_{i}^\nu >0,(\prod_{k=j+1}^{m_{a}}\mathcal{A}_{k})\frac{\psi_{j-1}}{\mathcal{A}_{j}} B_{j}^\nu >0,\frac{\psi_{m_{a}-1}}{\mathcal{A}_{m_{a}}}B_{m_{a}}^\nu >0, (\prod_{i=2}^{m_{a}}\mathcal{A}_{i})b(\boldsymbol{x})B_{1}^\nu >0, i,j\in \{1,...,m_{a}\}),$ and there is no upper bound for  $\nu_{i}$, there always exists a positive $\nu_{i}$ such that the feasible region between \eqref{eq:constraint-up-rewrite} and \eqref{eq:highest-AVBCBF-rewrite} is non-empty. Furthermore, since the constraints \eqref{subeq:CLF as 2} and \eqref{subeq:control bounds as 3} are also satisfied, therefore the optimization problem for solving CLF-CBF-QP with constraints \eqref{subeq:CLF as 2},\eqref{subeq:control bounds as 3},\eqref{eq:constraint-up} and \eqref{eq:highest-AVBCBF} is point-wise
feasible for any $\boldsymbol{Z}$ lying inside $\mathcal {C}_{m_{a}-1}.$
\end{proof}

The coefficient of the control input $B_{j}$ in \eqref{eq:highest-AVBCBF-rewrite} can be specifically designed to ensure that all components of the control input are fully represented, while also guaranteeing that no component of the control input is zero for any state. Therefore, the previously mentioned CBF constraint with mixed relative degree \cite{xiao2022control} or with input nullification \cite{lindemann2018control} can both be handled. Since both Thm. \ref{thm:safety-guarantee-3} and Thm. \ref{thm:feasibility-guarantee-3} require ensuring $\psi_{m_{a}-1}(\boldsymbol{Z})> 0$, this condition can be considered a criterion for guaranteeing feasibility and safety. The remaining question is how to design an algorithm that efficiently adjusts the hyperparameters related to AVCBF based on this criterion to ensure both safety and feasibility. This motivates the parametrization method introduced in the next subsection.

\subsection{Optimal Control with AVCBFs: A Parametrization Method for Tuning Hyperparameters}
\label{subsec: optimal-control}
Reconsider an optimal control problem from \eqref{eq:optimal control-cost} as
\begin{small}
\begin{equation}
\label{eq:cost-function-2}
\begin{split}
 \min_{\boldsymbol{u},\delta }\int_{0}^{T} 
[ D(\left \| \boldsymbol{u} \right \| )+Q\delta^{2}]dt,
\end{split}
\end{equation}
\end{small}
where $\left \| \cdot \right \|$ denotes the 2-norm of a vector, $D(\cdot)$ is a strictly increasing function of its argument and $T>0$ denotes the ending time. $\delta$ is the relaxed variable that makes state convergence a soft constraint, and its corresponding weight factor is $Q$. Since we need to introduce auxiliary inputs $v_{i}$ to enhance the feasibility of optimization, we should reformulate the cost in \eqref{eq:cost-function-2} as
\begin{small}
\begin{equation}
\label{eq:cost-function-3}
\begin{split}
 \min_{\boldsymbol{u},\boldsymbol{\nu},\delta} \int_{0}^{T} 
 [D(\left \| \boldsymbol{u} \right \| )+\sum_{i=1}^{m}W_{i}(\nu_{i}-a_{i,w})^{2}+Q\delta^{2}]dt.
\end{split}
\end{equation}
\end{small}
In \eqref{eq:cost-function-3}, $W_{i}$ is a positive scalar and $a_{i,w}\in \mathbb{R}$ is the scalar to which we hope each auxiliary input $\nu_{i}$ converges. Both are chosen to tune the performance of the controller. We can formulate the CLFs, HOCBFs and AVCBFs introduced in Def. \ref{def:control-l-f}, Sec. \ref{sec:AVCBFs} and Def. \ref{def:AVBCBF} as constraints of the QP with cost function \eqref{eq:cost-function-3} to realize safety-critical control. Next we will show AVCBFs can enhance the feasibility of solving QP compared with classical HOCBFs in Def. \ref{def:HOCBF} and PACBF \cite{xiao2021adaptive}.

In augmented system \eqref{eq:augmented-system}, if we define $\mathcal{A}_{i}(\boldsymbol{x},a_{i}(t))=a_{i}(t)=\pi_{i,1}(t)=1, \dot{\pi}_{i,1}(t)=\dot{\pi}_{i,2}(t)=\cdots=\dot{\pi}_{i,m+1-i}(t)=0,$ then the way we construct functions and sets in \eqref{eq:virtual-HOCBFs} and \eqref{eq:virtual-sets} are exactly the same as \eqref{eq:sequence-f1} and \eqref{eq:sequence-set1}, which means classical HOCBF is in fact one specific case of AVCBF. Assume that the highest order HOCBF constraint \eqref{eq:highest-HOCBF} conflicts with control input constraints \eqref{eq:control-constraint} at $t=t_{b},$ i.e., we can not find a feasible controller $\boldsymbol{u}^{\ast}(t_{b})$ to satisfy \eqref{eq:highest-HOCBF} and \eqref{eq:control-constraint}. Instead, starting from a time slot $t=t_{a}$ which is just before $t=t_{b}$ ($t_{b}-t_{a}=N_{c}\bigtriangleup t$ where $\bigtriangleup t$ is an infinitely small time interval and $N_{c}$ is a positive time window), we change the control framework of classical HOCBFs to AVCBFs instantly. Suppose we can find appropriate hyperparameters for AVCBFs to ensure two constraints in \eqref{eq:constraint-up} and \eqref{eq:highest-AVBCBF} are satisfied given $\boldsymbol{u}$ constrained by \eqref{eq:control-constraint} on $[t_{a},t_{b}],$ then there exist solutions for the optimal control problem and the feasibility of solving QP is enhanced. Relying on AVCBFs, we can discretize the whole time period $[0,T]$ into several small time intervals like $[t_{a},t_{b}]$ to maximize the feasibility of solving QP under safety constraints. Besides safety and feasibility, another benefit of using AVCBFs is that the conservativeness of the control strategy can also be ameliorated. For example, from \eqref{eq:AVBCBF-sequence}, we can rewrite $\psi_{i}(\boldsymbol{Z})\ge 0$ as
\begin{equation}
\label{eq:AVCBF-rewrite}
\begin{split}
\dot{\phi}_{i-1}(\boldsymbol{Z})+k_{i}(1+\frac{\dot{\mathcal{A}}_{i}}{k_{i}\mathcal{A}_{i}}) \phi_{i-1}(\boldsymbol{Z})\ge0,
\end{split}
\end{equation}
where $\phi_{i-1}(\boldsymbol{Z})=\frac{\psi_{i-1}(\boldsymbol{Z})}{\mathcal{A}_{i}}$, $\alpha_{i}(\psi_{i-1}(\boldsymbol{Z}))=k_{i}\mathcal{A}_{i}\phi_{i-1}(\boldsymbol{Z})$, $k_{i}>0, i\in \{1,\dots,m_{a}\}.$ The term $\frac{\dot{\mathcal{A}}_{i}}{\mathcal{A}_{i}}$ can be adjusted adaptable to ameliorate the conservativeness of control strategy that $k_{i}\phi_{i-1}(\boldsymbol{Z})$ may have, i.e., the ego vehicle can brake earlier or later given time-varying control constraint $\boldsymbol{u}_{min}(t)\le \boldsymbol{u} \le\boldsymbol{u}_{max}(t),$ which confirms the adaptivity of AVCBFs to control constraint and conservativeness of control strategy. 
Unlike PACBFs, which introduce a non-negative time-varying penalty variable $p(t)$ in front of class $\kappa$ function, we don't require $1+\frac{\dot{\mathcal{A}}_{i}}{k_{i}\mathcal{A}_{i}}\ge 0$ (the time-varying coefficient of class $\kappa$ function can be negative if needed), which allows for a broader range of feasible control inputs. This flexibility enables AVCBFs to reduce conservativeness and improve adaptivity and robustness.
The remaining question is how to develop automatic hyperparameter-tuning algorithms to find appropriate hyperparameters to ensure safety and feasibility.

We use a gradient-ascent method as the parametrization method to find the optimal hyperparameters for AVCBFs such that the safety-feasibility criterion ($\psi_{m_{a}-1}(\boldsymbol{Z})> 0$) mentioned in Sec. \ref{sec:AVCBFs} is satisfied.  
There are multiple hyperparameters to consider, such as those related to HOCBFs (e.g., the class $\kappa$ function hyperparameters in \eqref{eq:virtual-HOCBFs}), those associated with AVCBFs (e.g., the class $\kappa$ function hyperparameters in \eqref{eq:AVBCBF-sequence}), as well as $W_{i}$ and $a_{i,w}$. However, not all of these hyperparameters necessarily need to be adjusted to satisfy both safety and feasibility. The hyperparameters of the class $\kappa$ functions in \eqref{eq:virtual-HOCBFs} generally do not need to be adjusted, as each $\mathcal{A}_{i}$ is guaranteed to be positive if $\nu_{i}$ is unbounded. The hyperparameters of the class $\kappa$ functions in \eqref{eq:AVBCBF-sequence} also do not need to be adjusted, as they vary dynamically with time-varying $\frac{\dot{\mathcal{A}}_{i}}{\mathcal{A}_{i}}$. In cost function \eqref{eq:cost-function-3}, we can tune hyperparameters $W_{i}$ and $a_{i,w}$ to adjust the corresponding ratio $\frac{\dot{\mathcal{A}}_{i}}{\mathcal{A}_{i}}$ to change the performance of the optimal controller. Tuning $a_{i,w}$ defines the nominal or target value for $\nu_{i}$ while tuning $W_{i}$ controls how strongly deviations from $a_{i,w}$ are penalized. In most cases, tuning $a_{i,w}$ first is preferable, since it directly sets the nominal operating point. Then, 
$W_{i}$ can be fine-tuned to balance constraint enforcement (e.g., we can determine whether to enforce the CLF constraint more strictly to guarantee state convergence or to prioritize minimizing energy with stronger enforcement). To reduce computational complexity, we only adjust the hyperparameter $a_{i,w}$ in Alg. \ref{alg:parametrization-cbf}.
\begin{algorithm}
\caption{The Parametrization Method}
\begin{algorithmic}[1]
\label{alg:parametrization-cbf}
\begin{small}
 \renewcommand{\algorithmicrequire}{\textbf{Input:}}
 \renewcommand{\algorithmicensure}{\textbf{Output:}}
 \REQUIRE An optimal control problem as stated in Prob. \ref{prob:SACC-prob} with initial states $\boldsymbol{x}(t_{0})$, initial hyperparameters including $a_{i,w}(t_{0}),i\in \{1,...,m_{a}\}$, a sampling interval $\bigtriangleup t$, an iteration limit $J_{m}>0$, a rollback horizon $N_{c}>0,$ a threshold $\varepsilon>0$, a learning rate $\gamma >0$ and the ending time $T>0$. 
 \ENSURE The optimal hyperparameters $a_{i,w}^{\ast}(t_{l})$ and number of iterations $j(t_{l})$ at some time steps $t_{l}, l\le \frac{T}{\bigtriangleup t}$. \\
 \STATE 
 Discretize time into invervals $[t_{k},t_{k+1})$ where $t_{N}=T,k\in \{0,...,N-1\}$, set $a_{i,w}(t_{k+1})=a_{i,w}(t_{0})$.
 \STATE 
 Solve \eqref{eq:cost-function-3} with constraints \eqref{subeq:CLF as 2},\eqref{subeq:control bounds as 3},\eqref{eq:constraint-up} and \eqref{eq:highest-AVBCBF} (CBF-CLF-QP) at each time step until $\psi_{m_{a}-1}(\boldsymbol{Z}(t_{f}))> \varepsilon$ is no longer satisfied for the first time. 
 \STATE
 Revert to the time step $t_{k}$ where $t_{k}=t_{f}-N_{c}\bigtriangleup t$ (if $t_{f}-N_{c}\bigtriangleup t <0, t_{k}=0$), set iteration $j=1, j(t_{l})=0, l\in \{0,\dots,f\}$.
    \FOR{$j \le J_{m}$}
        \WHILE{$t_{k+1} \le t_{f}$} 
          \STATE Solve CBF-CLF-QP with $a_{i,w}(t_{l}), l\in \{k,\dots,f\}$, get the minimum value of $\psi_{m_{a}-1}(\boldsymbol{Z}(t_{l}))$ as $\psi_{\text{min}}$.
          \STATE
          If $\psi_{\text{min}}>\varepsilon$, jump to $13^{\text{th}}$ step of Alg. \ref{alg:parametrization-cbf}.
          \STATE
          Evaluate $\frac{\partial \psi_{\text{min}}}{\partial a_{i,w}(t_{k})}$, $a_{i,w}(t_{k})\gets a_{i,w}(t_{k})+\gamma \frac{\partial \psi_{\text{min}}}{\partial a_{i,w}(t_{k})}$. 
          \STATE
          $j(t_{k})=j$, $k = k+1$.
        \ENDWHILE
        \STATE
        $k=k-N_{c}, j= j+1$.
    \ENDFOR
 \STATE
 $a_{i,w}^{\ast}(t_{l})= a_{i,w}(t_{l}),l\in \{0,\dots,f\}$.
 \RETURN $a_{i,w}^{\ast}(t_{l}), j(t_{l}),l\in \{0,\dots,f\}$.
\end{small}
\end{algorithmic}
\end{algorithm}

In Alg. \ref{alg:parametrization-cbf}, when reaching time step $t_{f}$ and finding that solving CBF-CLF-QP is infeasible, the process reverts to $N_{c}$ time steps earlier and retunes the hyperparameters to ensure feasibility. $N_{c}$ serves as a rollback horizon that enables the algorithm to proactively test feasibility over a past time window, allowing the algorithm to systematically re-evaluate and adjust hyperparameters before reaching infeasibility at $t_{f}$. This approach acts as an implicit feasibility test, preventing local infeasibility from propagating through future time steps. From $4^{\text{th}}$ step to $12^{\text{th}}$ step, the algorithm first tunes the hyperparameters across the entire time window to ensure global consistency before proceeding to the next iteration $j+1$, rather than making iterative adjustments at each time step independently. This approach allows for a more coordinated and holistic optimization of hyperparameters over the window, reducing local inconsistencies. After optimizing the hyperparameters for the entire window, the algorithm then iterates to refine the adjustments further, ensuring the minimum value of $\psi_{m_{a}-1}$ remains above the feasibility-safety threshold $\varepsilon$. By structuring the tuning process in this way, the algorithm minimizes unnecessary corrections and prevents abrupt fluctuations in hyperparameter values. Additionally, it anticipates inter-sampling effects by iteratively checking the feasibility-safety threshold, ensuring that the system satisfies safety requirements under the optimized hyperparameters $a_{i,w}^{\ast}(t_{l})$. Note that by using the optimized hyperparameters $a_{i,w}^{\ast}(t_{l})$ in CBF-CLF-QP, we obtain the optimal control input $\boldsymbol{u}^{\ast}(t_{l})$ corresponding to these optimized hyperparameters at the same time step. Since Alg. \ref{alg:parametrization-cbf} terminates at $t_{f}$, which is likely smaller than $T$, to ensure safe control over the entire interval \( [t_0, T] \), Alg. \ref{alg:parametrization-cbf} may need to be executed multiple times, and each time, the future hyperparameters should be set as $a_{i,w}(t_{k+1})=a^{\ast}_{i,w}(t_{f})$ where $k\in \{f,...,N-1\}$. The variable \( j(t_l) \) records the number of iterations required at each time step within a single execution of the algorithm. When multiple executions of the algorithm overlap in time, the values of \( j(t_l) \) from each execution must be accumulated to obtain the total number of iterations required at each time step across all executions (i.e., \( \sum j(t_l) \)). As a result, \( \sum j(t_l) \) may exceed the iteration limit \( J_m \).

\begin{remark}
\label{rem: minimum-value}
In $6^{\text{th}}$ step of Alg. \ref{alg:parametrization-cbf}, we solve the CBF-CLF-QP with \( a_{i,w}(t_{l}) \), where \( l \in \{k, \dots, f\} \), to obtain the minimum value of \( \psi_{m_{a}-1}(\boldsymbol{Z}(t_{l})) \) at corresponding time steps. A general approach is to use the min-operator to determine the minimum value as:
\begin{equation}
\psi_{\text{min}}=\min \big[ \psi_{m_{a}-1}(\boldsymbol{Z}(t_{k})), \dots, \psi_{m_{a}-1}(\boldsymbol{Z}(t_{f})) \big].
\end{equation}
However, this operator is nonsmooth, which may lead to unstable updates in gradient-based optimization. Alternatively, we can approximate the min-operator using a smooth soft-minimum approximation in \cite{lindemann2018control,rabiee2025soft}. While this approximation ensures differentiability, the soft-minimum approximation is highly sensitive to variations in \( \psi_{m_{a}-1} \), potentially leading to overly aggressive updates in hyperparameter \( a_{i,w}(t_{l}) \). This could result in large fluctuations, causing the gradient-based optimization to overfit to local variations in \( \psi_{m_{a}-1} \). Meanwhile, the soft-minimum approximation does not yield the exact minimum value, which may result in an underestimation of the minimum and lead to overly conservative hyperparameter updates. In this paper, we choose to use the min-operator instead.

\end{remark}

\begin{remark}[Limitation of Approaches with Auxiliary Inputs]
\label{rem: PACBF-AVBCBF} 
Ensuring the satisfaction of the $i^{\text{th}}$ order AVCBF constraint as shown in \eqref{eq:AVBCBF-set} when $i\in\{1,\dots,m_{a}-1\},$ i.e., $\psi_{i}(\boldsymbol{Z})\ge 0$ will guarantee $\psi_{i-1}(\boldsymbol{Z})\ge 0$ based on the proof of Thm. \ref{thm:safety-guarantee-3}, which theoretically outperforms PACBF. However, both approaches can not ensure satisfying $\psi_{m_{a}}(\boldsymbol{Z})\ge 0$ will guarantee $\psi_{m_{a}-1}(\boldsymbol{Z})\ge 0$ since the growth of $\nu_{i}$ is unbounded. Therefore in Thm. \ref{thm:safety-guarantee-3}, $\boldsymbol{Z}\in \mathcal {C}_{m_{a}-1}$ for all $t\ge 0$ also needs to be satisfied to guarantee the forward invariance of the intersection of sets. 
\end{remark}

\begin{remark}[Auxiliary Functions with Selective Application]
\label{rem: sufficient-con}
From \eqref{eq:AVCBF-rewrite}, we observe that incorporating the \( i^{\text{th}} \) auxiliary function $\mathcal{A}_{i}$ allows the conservativeness of the \( i^{\text{th}} \) class \( \kappa \) function to be mitigated through a time-varying ratio $\frac{\dot{\mathcal{A}}_{i}}{\mathcal{A}_{i}}$. Similarly, from \eqref{eq:highest-AVBCBF}, we see that the \( m_{a}^{\text{th}} \) order constraint can be further relaxed by introducing an additional unbounded auxiliary input \( \nu_i \), thereby enhancing feasibility. However it is not necessary to apply auxiliary functions as many as from $\mathcal{A}_{1}$ to $\mathcal{A}_{m_{a}}$ if solving CBF-CLF-QP is feasible with fewer auxiliary functions, which allows us to choose an appropriate number of auxiliary functions for the AVCBF constraints to reduce the complexity. In other words, the number of auxiliary functions can be less than or equal to the relative degree $m_{a}$.
\end{remark}

\section{Implementation and Results}
\label{sec:Implementation and Results}
In this section, we consider three classes of problems where the proposed approaches are applied: (i) an Adaptive Cruise Control (ACC) problem for a ego vehicle with a high-order relative degree system, (ii) an obstacle avoidance problem for a ground vehicle with a high-order relative degree system, and (iii) an obstacle avoidance problem for a ground vehicle with a mixed relative degree system. All computations and simulations were conducted in MATLAB, where \texttt{quadprog} was used to solve the quadratic programs, and \texttt{ode45} was employed to integrate the system dynamics.
\subsection{Adaptive Cruise Control for High-Order Relative Degree Systems}
\label{subsec:Adaptive Cruise Control}
\subsubsection{Vehicle Dynamics}
\label{subsubsec:Vehicle Dynamics}
We consider a nonlinear vehicle dynamics in the form
\begin{small}
\begin{equation}
\label{eq:ACC-dynamics}
\underbrace{\begin{bmatrix}
\dot{z}(t) \\
\dot{v}(t) 
\end{bmatrix}}_{\dot{\boldsymbol{x}}(t)}  
=\underbrace{\begin{bmatrix}
 v_{p}-v(t) \\
 -\frac{1}{M}F_{r}(v(t))
\end{bmatrix}}_{f(\boldsymbol{x}(t))} 
+ \underbrace{\begin{bmatrix}
  0 \\
  \frac{1}{M} 
\end{bmatrix}}_{g(\boldsymbol{x}(t))}u(t),
\end{equation}
\end{small}
where $M$ denotes the mass of the ego vehicle and $v_{p}>0$ denotes the velocity of the lead vehicle. $z(t)$ denotes the distance between two vehicles and $F_{r}(v(t))=f_{0}sgn(v(t))+f_{1}v(t)+f_{2}v^{2}(t)$ denotes the resistance force as in \cite{Khalil:1173048}, where $f_{0},f_{1},f_{2}$ are positive scalars determined empirically and $v(t)>0$ denotes the velocity of the ego vehicle. The first term in $F_{r}(t)$ denotes the Coulomb friction force, the second term denotes the viscous friction force and the last term denotes the aerodynamic drag.

\subsubsection{Vehicle Limitations}
Vehicle limitations include vehicle constraints on safe distance, speed and acceleration.

\textbf{Safe distance constraint:} The distance is considered safe if $z(t)\ge l_{p}$ is satisfied $\forall t \in [0,T]$, where $l_{p}$ denotes the minimum distance two vehicles should maintain.

\textbf{Speed constraint:} The ego vehicle should achieve a desired speed  $v_{d}>0.$

\textbf{Acceleration constraint:} The ego vehicle should minimize the following cost
\begin{small}
\begin{equation}
\label{eq:minimal-u}
\min_{u(t)} \int_{0}^{T}(\frac{u(t)-F_{r}(v(t))}{M})^{2}dt 
\end{equation}
\end{small}
when the acceleration is constrained in the form 
\begin{equation}
\label{eq:constraint-u}
-c_{d}(t)Mg\le u(t) \le c_{a}(t)Mg, \forall t \in [0,T], 
\end{equation}
where $g$ denotes the gravity constant, $c_{d}(t)>0$ and $c_{a}(t)>0$ are deceleration and acceleration coefficients respectively.

To satisfy the constraint on speed, we define a CLF $V(\boldsymbol{x}(t)) \coloneqq(v(t)-v_{d})^{2}$ with $c_{1}=c_{2}=1$ to stabilize $v(t)$ to $v_{d}$ and formulate the relaxed constraint in \eqref{eq:clf} as
\begin{equation}
\label{eq:ACC-clf}
L_{f}V(\boldsymbol{x}(t))+L_{g}V(\boldsymbol{x}(t))u(t)+c_{3}V(\boldsymbol{x}(t))\le \delta(t), 
\end{equation}
where $\delta(t)$ is a relaxation that makes \eqref{eq:ACC-clf} a soft constraint.

To satisfy the constraints on safety distance and acceleration, we will define a continuous function $b(\boldsymbol{x}(t))=z(t)-l_{p}$ as AVCBF or PACBF to guarantee $b(\boldsymbol{x}(t))\ge 0$ and constraint \eqref{eq:constraint-u}, then formulate all constraints into QP to get the optimal controller. The parameters are $v_{p}=13.89m/s, v_{d}=24m/s, M=1650kg, g=9.81m/s^{2},z(0)=100m, l_{p}=10m, f_{0}=0.1N, f_{1}=5Ns/m, f_{2}=0.25Ns^{2}/m, c_{a}(t)=0.4, \bigtriangleup t=0.1s$.

\subsubsection{Implementation with AVCBFs}
Define $b(\boldsymbol{x}(t))=z(t)-l_{p},$ the minimum relative degree $\underline{m}$ of $b(\boldsymbol{x}(t))$ with respect to dynamics \eqref{eq:ACC-dynamics} is 2. For simplicity, based on Rem. \ref{rem: sufficient-con}, we just introduce one auxiliary function as $\mathcal{A}_{1}(\boldsymbol{x},a_{1}(t))=a_{1}(t).$ We set the desired relative degree of the auxiliary function $m_{a}$ equal to $\underline{m}$.
Motivated by Sec. \ref{sec:AVCBFs}, we define the auxiliary dynamics as
\begin{small}
\begin{equation}
\label{eq:Auxiliary-dynamics1}
\underbrace{\begin{bmatrix}
\dot{a_{1}}(t) \\
\dot{\pi}_{1,2}(t) 
\end{bmatrix}}_{\dot{\boldsymbol{\pi}}_{1}(t)}  
=\underbrace{\begin{bmatrix}
 \pi_{1,2}(t) \\
 0
\end{bmatrix}}_{F_{1}(\boldsymbol{{\pi}}_{1}(t))} 
+ \underbrace{\begin{bmatrix}
  0 \\
  1 
\end{bmatrix}}_{G_{1}(\boldsymbol{{\pi}}_{1}(t))}\nu_{1}(t).
\end{equation}
\end{small}
The HOCBFs for $\mathcal{A}_{1}$ are defined as 
\begin{equation}
\label{eq:SHOCBF-sequence-ACC}
\begin{split}
&\varphi_{1,0}(\boldsymbol{{\pi}}_{1})\coloneqq a_{1},\\
&\varphi_{1,1}(\boldsymbol{{\pi}}_{1})\coloneqq \dot{\varphi}_{1,0}(\boldsymbol{{\pi}}_{1})+l_{1}\varphi_{1,0}(\boldsymbol{{\pi}}_{1}),\\
&\varphi_{1,2}(\boldsymbol{{\pi}}_{1},
\nu_{1})\coloneqq \dot{\varphi}_{1,1}(\boldsymbol{{\pi}}_{1},\nu_{1})+l_{2}\varphi_{1,1}(\boldsymbol{{\pi}}_{1}),
\end{split}
\end{equation}
where $\alpha_{1,1}(\cdot),\alpha_{1,2}(\cdot)$ are defined as linear functions. The AVCBFs are then defined as
\begin{equation}
\label{eq:AVBCBF-sequence-ACC}
\begin{split}
&\psi_{0}(\boldsymbol{x},\boldsymbol{{\pi}}_{1})\coloneqq a_{1}b(\boldsymbol{x}),\\
&\psi_{1}(\boldsymbol{x},\boldsymbol{{\pi}}_{1})\coloneqq \dot{\psi}_{0}(\boldsymbol{x},\boldsymbol{{\pi}}_{1})+k_{1}\psi_{0}(\boldsymbol{x},\boldsymbol{{\pi}}_{1}),\\
&\psi_{2}(\boldsymbol{x},\boldsymbol{{\pi}}_{1},u,
\nu_{1})\coloneqq \dot{\psi}_{1}(\boldsymbol{x},\boldsymbol{{\pi}}_{1},u,
\nu_{1})+k_{2}\psi_{1}(\boldsymbol{x},\boldsymbol{{\pi}}_{1}),
\end{split}
\end{equation}
where $\alpha_{1}(\cdot),\alpha_{2}(\cdot)$ are set as linear functions. By formulating constraints from HOCBFs \eqref{eq:SHOCBF-sequence-ACC}, AVCBFs \eqref{eq:AVBCBF-sequence-ACC}, CLF \eqref{eq:ACC-clf} and acceleration \eqref{eq:constraint-u}, we can define cost function 
 for QP as
 \begin{small}
\begin{equation}
\label{eq:AVBCBF-cost}
\begin{split}
\min_{u(t),\nu_{1}(t),\delta(t)} \int_{0}^{T}[(\frac{u(t)-F_{r}(v(t))}{M})^{2}\\+W_{1}(\nu_{1}(t)-a_{1,w})^{2}+Q\delta(t)^{2}]dt.
\end{split}
\end{equation}
\end{small}
Other parameters are set as $v(0)=6m/s, a_{1}(0)=1, \pi_{1,2}(0)=1, c_{3}=2, W_{1}=Q=1000,\epsilon=10^{-10}.$
\begin{figure*}[t]
    \vspace{3mm}
    \centering
    \begin{subfigure}[t]{0.32\linewidth}
        \centering
        \includegraphics[width=1\linewidth]{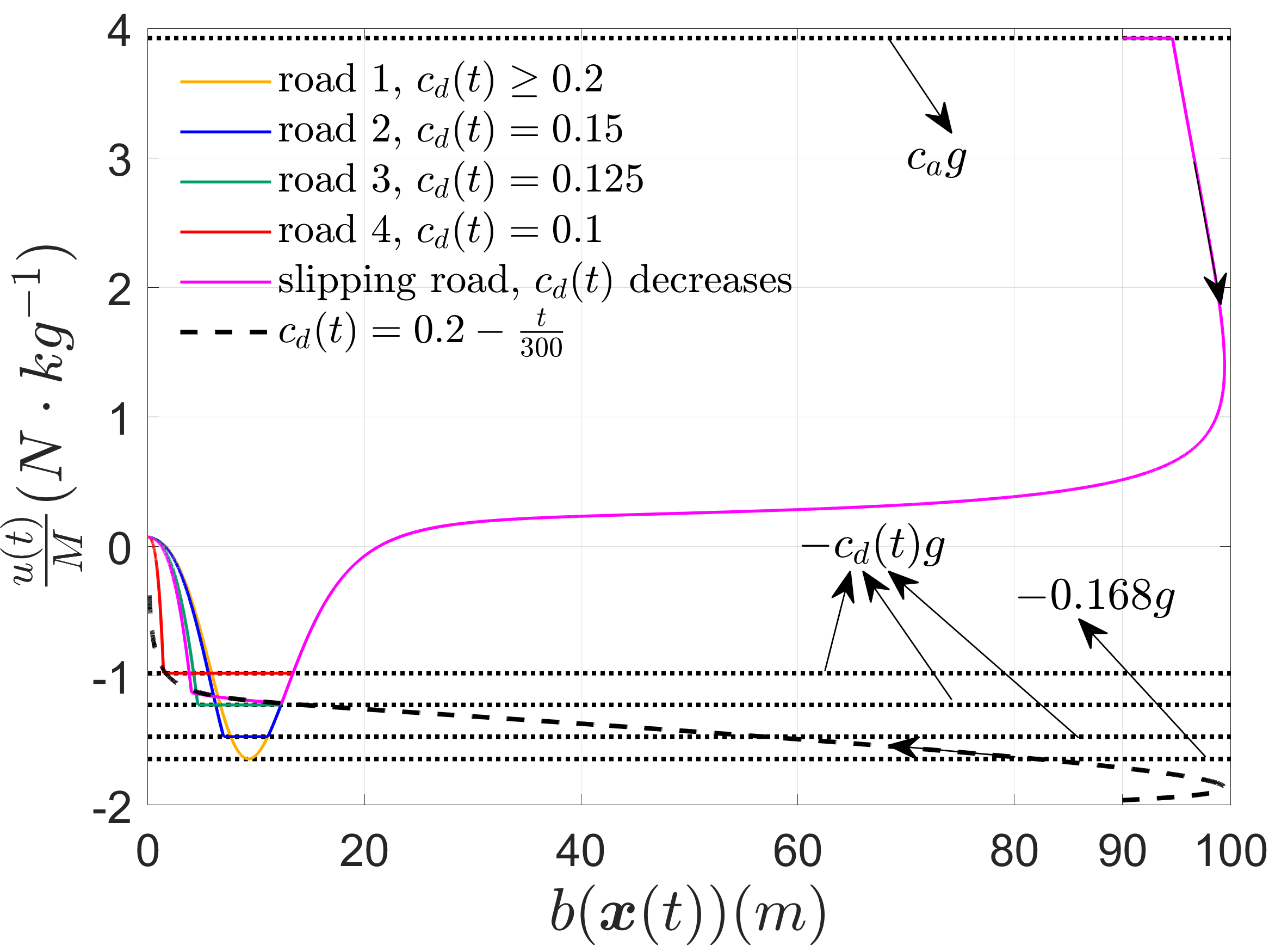}
        \caption{AVCBFs: $k_{1}=k_{2}=l_{1}=l_{2}=0.1, a_{1,w}=1$.}
        \label{fig:AVBCBFs-braking}
    \end{subfigure}
    \begin{subfigure}[t]{0.32\linewidth}
        \centering
        \includegraphics[width=1\linewidth]{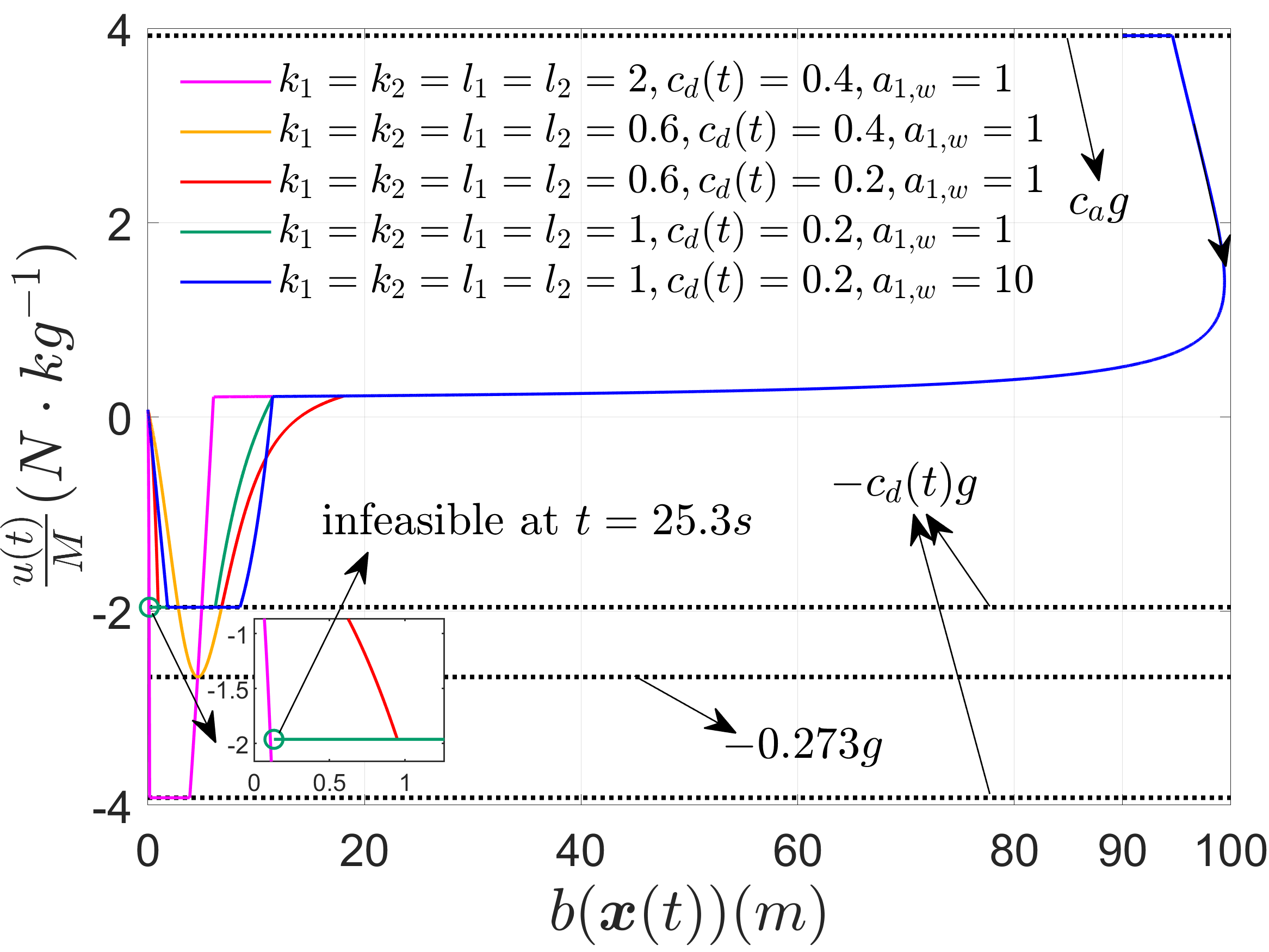}
        \caption{AVCBFs.}
        \label{fig:AVBCBFs-hyperparameters}
    \end{subfigure}  
    \begin{subfigure}[t]{0.32\linewidth}
        \centering
        \includegraphics[width=1\linewidth]{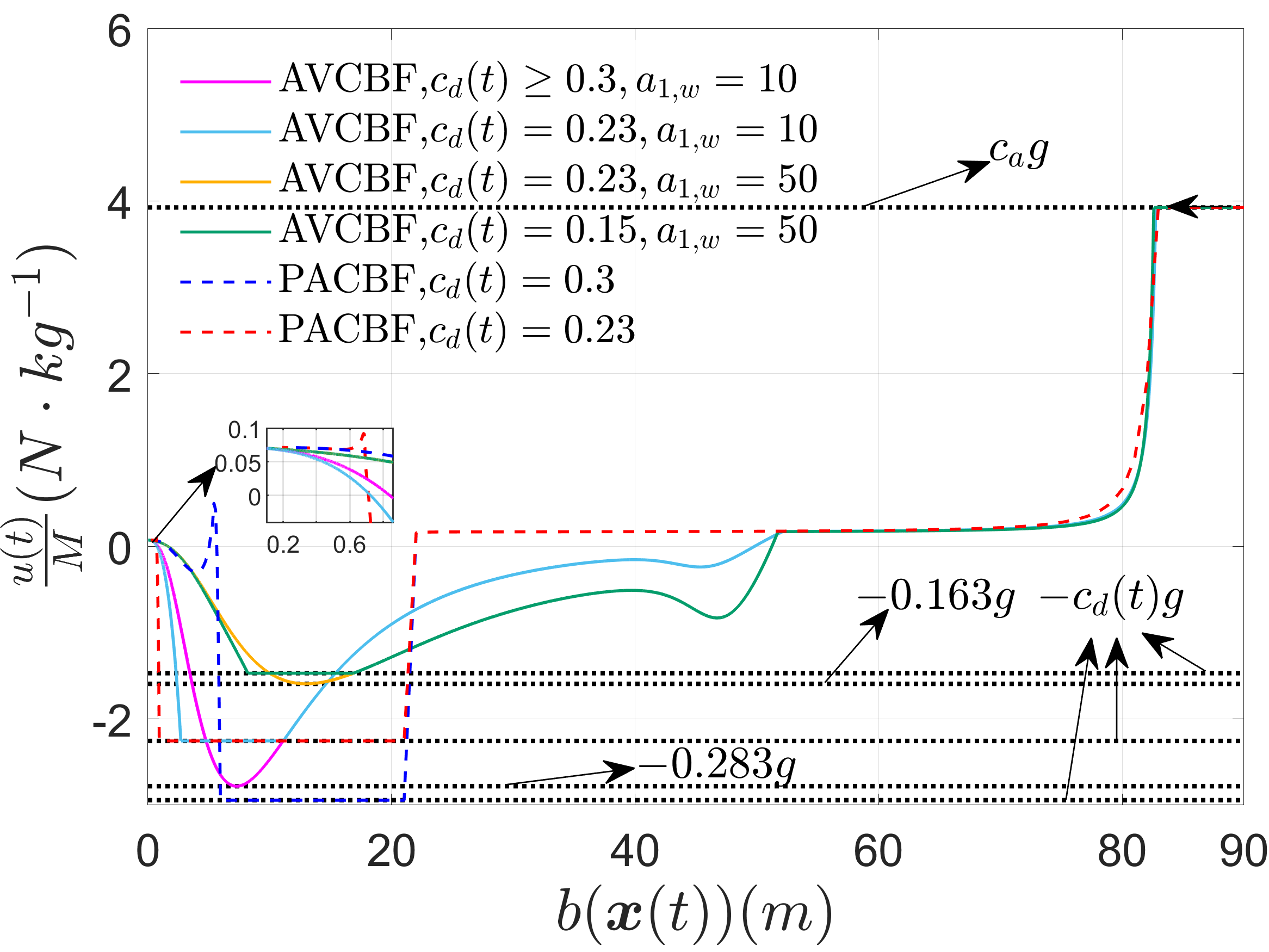}
        \caption{AVCBFs (solid curves) versus PACBFs (dashed curves)}
        \label{fig:AVBCBFs-PACBFs-1}
    \end{subfigure}
    \caption{Control input $u(t)$ varies as $b(\boldsymbol{x}(t))$ goes to 0 under different lower control bounds. The arrows denote the changing trend for $b(\boldsymbol{x}(t))$ and $c_{d}(t)$ over 50 seconds. $b(\boldsymbol{x}(0))=90$ and $b(\boldsymbol{x}(t))\ge 0$ implies safety. Different sets of hyperparameters are tested.
    } 
\end{figure*}

We evaluate adaptivity to deceleration by varying the lower control bound $-c_{d}(t)Mg$. In Fig. \ref{fig:AVBCBFs-braking}, the deceleration coefficient 
 $c_{d}(t)$ is set as a constant or a linearly decreasing function, reflecting different road conditions. In each case, the ego vehicle accelerates to the same velocity
($b(\boldsymbol{x}(t))$ reaches the same value) before decelerating simultaneously. Due to varying braking capabilities, it achieves different maximum decelerations (denoted by arrows) and eventually keeps a constant velocity the same as $v_{p}$ while maintaining the safe distance $l_{p}$ for all $t\in[0,50s]$. Notably, under poor road conditions (e.g., slippery or uneven surfaces, shown by red and magenta curves), QPs remain feasible using the AVCBFs method, demonstrating strong adaptability to control constraints.

We evaluate adaptivity to control strategy conservativeness by varying the hyperparameters 
    $k_{1},k_{2},l_{1},l_{2}$ inside the class $\kappa$ functions in \eqref{eq:SHOCBF-sequence-ACC},\eqref{eq:AVBCBF-sequence-ACC}, along with $c_{d}(t)$. For each case in Fig. \ref{fig:AVBCBFs-hyperparameters}, the ego vehicle accelerates to the same velocity ($b(\boldsymbol{x}(t))$ reaches the same value) before decelerating at different times. The orange and red curves show the earliest braking, while the green and blue curves brake later, and the magenta curve brakes latest. Different braking capabilities lead to varying maximum decelerations (denoted by arrows), after which the vehicle finally keeps a constant velocity the same as $v_{p}$, maintaining the safe distance $l_{p}$ for all $t\in[0,50s]$. Comparing the magenta and orange curves, larger hyperparameters allow later braking and greater deceleration. However, if $c_{d}(t)$ is very small (e.g., the green curve), large hyperparameters cause QP infeasibility (e.g., infeasibility at 25.3s due to insufficient braking distance). Adjusting $a_{1,w}$ in cost function \eqref{eq:AVBCBF-cost} enables faster braking (blue curve), demonstrating AVCBFs’ adaptability in managing control strategies.

\subsubsection{Implementation with PACBFs}

Similar to AVCBFs, we define $b(\boldsymbol{x}(t))=z(t)-l_{p}$ for PACBFs. We use the same penalty function and auxiliary dynamics in \cite{xiao2021adaptive} as $\boldsymbol{p}(t)=(p_{1}(t), p_{2}(t))$ and
\begin{equation}
\label{eq:AVBCBFs-PACBFs-1}
\begin{split}
\dot{p_{1}}(t)=\nu_{1}(t), p_{2}(t)=\nu_{2}(t).
\end{split}
\end{equation}
To make $p_{1}(t)$ converge to a small enough value, we define CLF constraint as
\begin{equation}
\label{eq:HOCBFs-CLFs}
\begin{split}
2(p_{1}(t)-p_{1}^{\ast })\nu_{1}(t)+\rho (p_{1}(t)-p_{1}^{\ast})^{2}\le \delta_{p}(t),
\end{split}
\end{equation}
where $\rho=10, p_{1}^{\ast}=0.103, \delta_{p}(t)$ is relaxed variable. We define HOCBF constraints for $p_{1}(t)$ as 
\begin{equation}
\label{eq:Auxiliary-PACBFs}
\begin{split}
-\nu_{1}(t)+(3-p_{1}(t))\ge0, \nu_{1}(t)+p_{1}(t)\ge0,
\end{split}
\end{equation}
which confines $p_{1}(t)$ into $[0,3].$ The PACBFs are then defined as 
\begin{equation}
\label{eq:AVBCBFs-PACBFs-2}
\begin{split}
&\psi_{0}(\boldsymbol{x})\coloneqq b(\boldsymbol{x}),\\
&\psi_{1}(\boldsymbol{x},\boldsymbol{p})\coloneqq \dot{\psi}_{0}(\boldsymbol{x})+p_{1}\psi_{0}(\boldsymbol{x})^{2},\\
&\psi_{2}(\boldsymbol{x},\boldsymbol{p},\boldsymbol{\nu})\coloneqq \dot{\psi}_{1}(\boldsymbol{x},\boldsymbol{p},\nu_{1})+\nu_{2}\psi_{1}(\boldsymbol{x},\boldsymbol{p})
\end{split}
\end{equation}
to guarantee the safety. By formulating constraints from HOCBFs \eqref{eq:Auxiliary-PACBFs}, CLFs \eqref{eq:HOCBFs-CLFs}\eqref{eq:ACC-clf}, PACBFs \eqref{eq:AVBCBFs-PACBFs-2} and acceleration \eqref{eq:constraint-u}, we can define cost function 
 for QP as
 \begin{small}
\begin{equation}
\label{eq:PACBF-cost}
\begin{split}
\min_{u(t),\nu_{1}(t),\nu_{2}(t),\delta(t),\delta_{p}(t)} \int_{0}^{T}[(\frac{u(t)-F_{r}(v(t))}{M})^{2}+W_{1}\nu_{1}(t)\\+W_{2}(\nu_{2}(t)-1)^{2}+Q\delta(t)^{2}+Q_{p}\delta_{p}(t)^{2}]dt. 
\end{split}
\end{equation}
\end{small}
Other parameters are set as $p_{1}(0)=0.103, p_{2}(0)=1, c_{3}=10, W_{1}=W_{2}=2e^{12},Q=Q_{p}=1.$
\subsubsection{Comparison between AVCBFs and PACBFs}

We compare AVCBFs with the state-of-the-art PACBFs in an urgent braking scenario, setting the initial velocity to $v(0) = 20 m/s$. In Fig.~\ref{fig:AVBCBFs-PACBFs-1}, we vary the lower control bound $-c_d(t)Mg$ to assess adaptivity. The magenta and cyan curves use CLF hyperparameter $c_3 = 70$, while the orange and green curves use $c_3 = 100$. Other AVCBF parameters are set as $k_1 = k_2 = l_1 = l_2 = 0.1, W_1 = 2e^5, Q = 7e^5$. We analyze cyan, orange, and red curves where $c_d(t) = 0.23$ further in Fig.~\ref{fig:AVBCBFs-PACBFs-2}. From Figs.~\ref{fig:AVBCBFs-PACBFs-1} and \ref{fig:AVBCBFs-PACBFs-2}, both methods accelerate the ego vehicle to $24 m/s$ before decelerating to $v_p$. While both work well, AVCBFs adjust $\frac{\dot{a}_{1}(t)}{a_{1}(t)}$ more effectively by simply tuning $a_{1,w}$, enabling earlier braking under stricter deceleration limits (e.g., smaller $c_d(t)$ for slippery roads). In contrast, at $t=5s$, PACBFs struggle with adaptivity, as hyperparameters must ensure $p_1(t)$ remains sufficiently small due to additional CLF constraints~\eqref{eq:HOCBFs-CLFs}. Additionally, AVCBFs generate a smoother optimal controller, as seen in Fig.~\ref{fig:AVBCBFs-PACBFs-1}, where PACBFs (blue and red curves) exhibit overshooting near the end, compromising smoothness.

\begin{figure}[ht]
    \centering
    \includegraphics[scale=0.58]{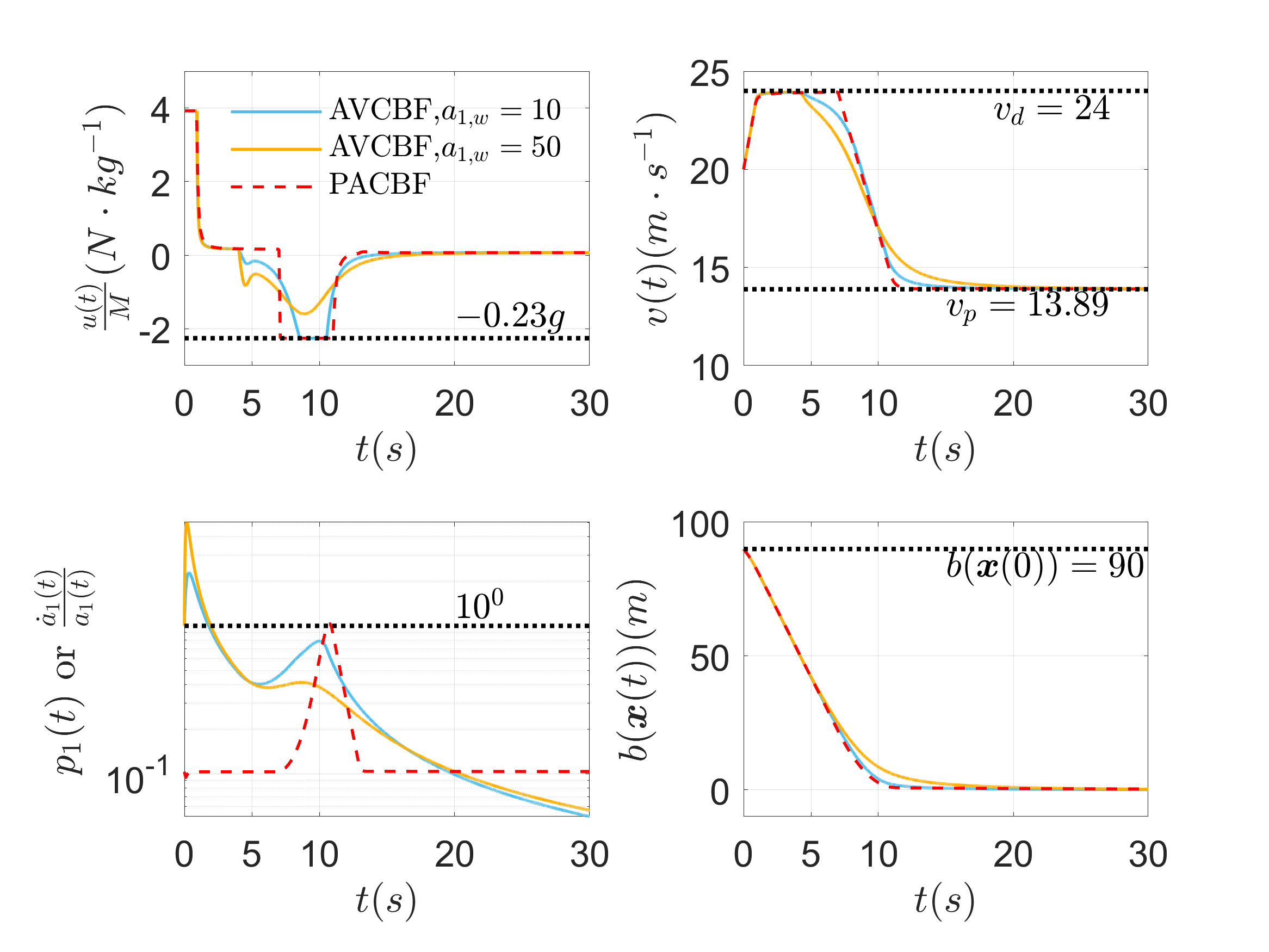}
    \caption{Control input $u(t)$, velocity $v(t)$, time-varying $p_{1}(t), \frac{\dot{a}_{1}(t)}{a_{1}(t)}$ and distance between two vehicles $b(\boldsymbol{x}(t))$ over 30 seconds for AVCBFs and PACBFs. $b(\boldsymbol{x}(t))\ge 0$ implies safety. Solid curves denote AVCBFs and dashed curve denotes PACBFs.}
    \label{fig:AVBCBFs-PACBFs-2}
\end{figure} 

\subsubsection{Implementation with AVCBFs with Reduced Relative Degree}
Define $b(\boldsymbol{x}(t))=z(t)-l_{p},$ although the minimum relative degree $\underline{m}$ of $b(\boldsymbol{x}(t))$ with respect to dynamics \eqref{eq:ACC-dynamics} is 2, based on Rem. \ref{rem:reduced degree}, we can design auxiliary function as $\mathcal{A}_{1}(\boldsymbol{x},a_{1}(t))=e^{-\frac{a_{1}(t)}{v}}$. Since the control input appears after differentiating $\mathcal{A}_{1}$ once, the desired relative degree of the auxiliary function $m_{a}$ is set to 1. We define the auxiliary dynamics as
\begin{small}
\begin{equation}
\label{eq:Auxiliary-dynamics2}
\underbrace{
\dot{a}_{1}(t) \\
}_{\dot{\boldsymbol{\pi}}_{1}(t)}  
=\underbrace{
 0 \\
}_{F_{1}(\boldsymbol{{\pi}}_{1}(t))} 
+ \underbrace{
  1 \\ 
}_{G_{1}(\boldsymbol{{\pi}}_{1}(t))}\nu_{1}(t).
\end{equation}
\end{small}
The HOCBFs for $\mathcal{A}_{1}$ are not needed in this case since $\mathcal{A}_{1}$ is always positive. The AVCBFs are then defined as
\begin{equation}
\label{eq:AVBCBF-sequence-ACC-2}
\begin{split}
&\psi_{0}(\boldsymbol{x},\boldsymbol{{\pi}}_{1})\coloneqq e^{-\frac{a_{1}}{v}}b(\boldsymbol{x}),\\
&\psi_{1}(\boldsymbol{x},\boldsymbol{{\pi}}_{1})\coloneqq \dot{\psi}_{0}(\boldsymbol{x},\boldsymbol{{\pi}}_{1})+k_{1}\psi_{0}(\boldsymbol{x},\boldsymbol{{\pi}}_{1}),
\end{split}
\end{equation}
where $\alpha_{1}(\cdot)$ is set as a linear function. By formulating constraints from AVCBFs \eqref{eq:AVBCBF-sequence-ACC-2}, CLF \eqref{eq:ACC-clf} and acceleration \eqref{eq:constraint-u}, we can define the cost function for QP in the same form as \eqref{eq:AVBCBF-cost}. Other parameters are set as $v(0)=20m/s, a_{1}(0)=-30, c_{3}=120, W_1 = 1e^5 (\text{if}~v(t) > v_p), Q = 2e^4 (\text{if}~v(t) > v_p), W_1 = \frac{1}{30} (\text{if}~v(t) \le v_p), Q = \frac{1}{150} (\text{if}~v(t) \le v_p)$. Similar to Fig. \ref{fig:AVBCBFs-hyperparameters}, we demonstrate that the adaptivity of AVCBF with a reduced relative degree improves the conservativeness of the control strategy by adjusting the hyperparameters $k_{1}$ of the class $\kappa$ function in Fig. \ref{fig:AVBCBFs-reduced degree}. Even with varying braking capability $c_{d}(t)$, the ego vehicle can always maintain a safe distance from the lead vehicle, as 
$b(\boldsymbol{x})$ remains positive at all times. Moreover, adjusting 
$a_{1,w}$ enables even earlier braking to prevent infeasibility (see the red and orange curves). Note that due to the absence of one additional order of constraint, unlike Fig. \ref{fig:AVBCBFs-PACBFs-1}, the ego vehicle's velocity $v$ will ultimately remain lower than that of the lead vehicle $v_{p}$ as shown by $\frac{u(30)}{M}= \frac{F_{r}(v(30))}{M}<\frac{F_{r}(v_{p})}{M}$. Consequently, an increase in the distance between the two vehicles $b(\boldsymbol{x})$ can be observed over time. Meanwhile, the value of $k_{1}$ exhibits a clear correlation with the ego vehicle's braking time (i.e., the smaller the value of $k_{1}$, the earlier the vehicle brakes), and small nonsmoothness can be observed in the red and cyan curves. We conclude that extending AVCBF to higher-order constraints in this case is not necessary to ensure safety and feasibility. However, higher-order constraints can be beneficial for improving the smoothness of the controller.
\begin{figure}[ht]
    \centering
    \includegraphics[scale=0.48]{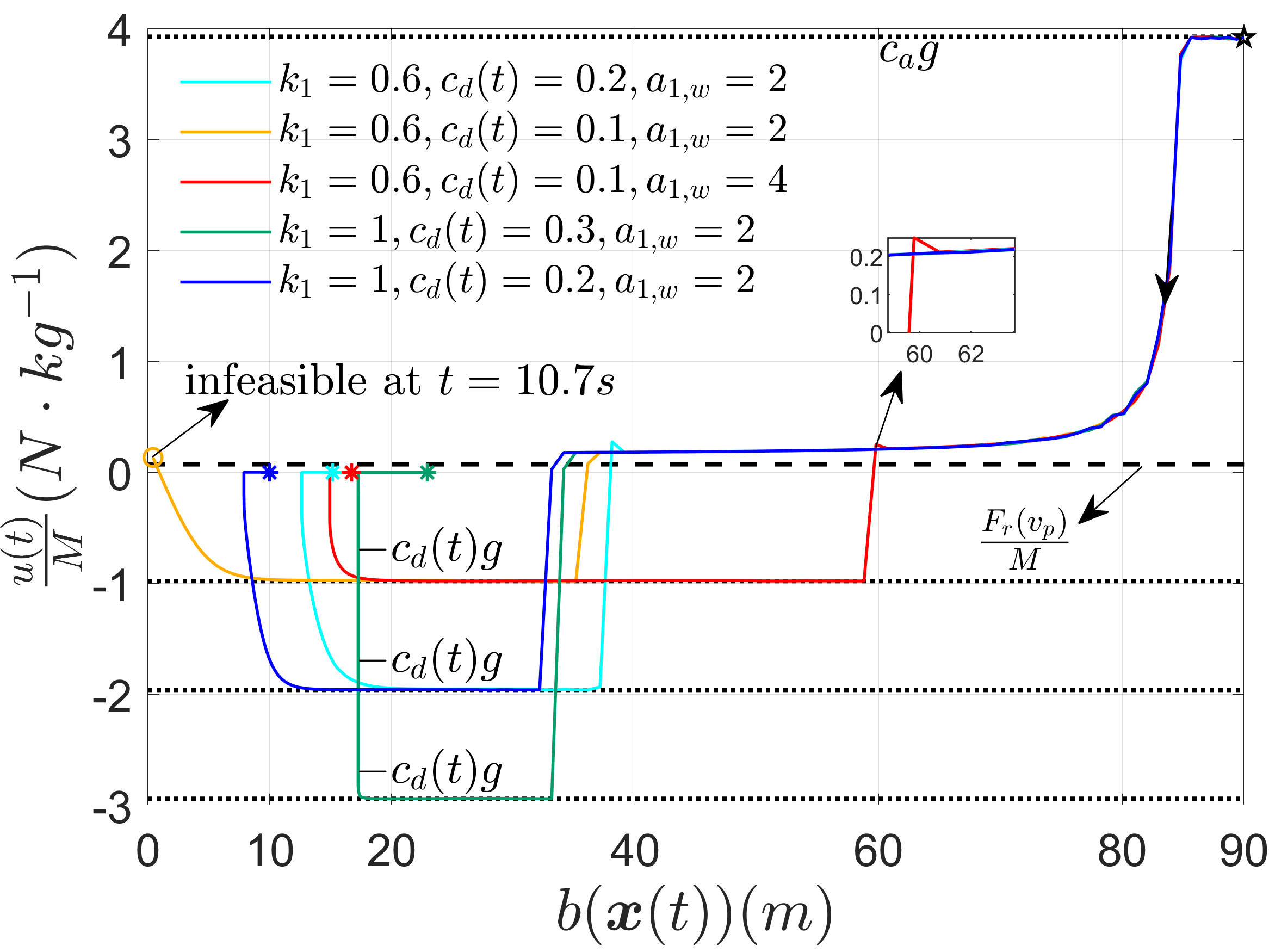}
    \caption{Control input $u(t)$ varies with $b(\boldsymbol{x}(0))$ under different lower control bounds. The arrows denote the changing trend for $b(\boldsymbol{x}(t))$ and $c_{d}(t)$ over 30 seconds. $b(\boldsymbol{x}(0))=90$ and $b(\boldsymbol{x}(t))\ge 0$ implies safety. Different sets of hyperparameters are tested. }
    \label{fig:AVBCBFs-reduced degree}
\end{figure} 

\subsection{Obstacle Avoidance for High-Order Relative Degree Systems}
\subsubsection{Vehicle Dynamics}
\label{subsubsec:Vehicle Dynamics-2}
We consider a nonlinear vehicle dynamics in the form
\begin{small}
\begin{equation}
\label{eq:UM-dynamics2}
\underbrace{\begin{bmatrix}
\dot{x}(t) \\
\dot{y}(t) \\
\dot{\theta}(t)\\
\dot{v}(t)
\end{bmatrix}}_{\dot{\boldsymbol{x}}(t)}  
=\underbrace{\begin{bmatrix}
 v(t)\cos{(\theta(t))}  \\
 v(t)\sin{(\theta(t))} \\
 0 \\
 0
\end{bmatrix}}_{f(\boldsymbol{x}(t))} 
+ \underbrace{\begin{bmatrix}
  0 & 0\\
  0 & 0\\
  1 & 0\\
  0 & \frac{1}{M} 
\end{bmatrix}}_{g(\boldsymbol{x}(t))}\underbrace{\begin{bmatrix}
   u_{1}(t)   \\
  u_{2}(t) 
\end{bmatrix}}_{\boldsymbol{u}(t)}.
\end{equation}
\end{small}
In \eqref{eq:UM-dynamics2}, $M$ denotes the mass of the vehicle and $(x, y)$ denote the coordinates of the unicycle, $v$ is its linear speed, $\theta$ denotes the heading angle, and $\boldsymbol{u}$ represent the angular velocity ($u_{1}$) and the driven force ($u_{2}$), respectively. 

\subsubsection{Vehicle Limitations}
\label{subsubsec: vehicle limitations2}
Vehicle limitations include vehicle constraints on safe distance, reachability and control input.

\textbf{Safe distance constraint:} We consider the case when the vehicle has to avoid one circular obstacle. The distance is considered safe
if $(x-x_{o})^{2}+(y-y_{o})^{2}-r_{o}^{2}\ge 0$, where $(x_{o},y_{o})$ and $r_{o}$ denote the obstacle center location and radius, respectively.

\textbf{Reachability constraint:} We consider a scenario where the vehicle aims to eventually reach a circular area centered at $(x_{d},y_{d})$ with a radius $r_{d}$.

\textbf{Control input constraint:} The vehicle should minimize the following cost
\begin{small}
\begin{equation}
\label{eq:minimal-u-2}
\min_{\boldsymbol{u}(t)} \int_{0}^{T}[\boldsymbol{u}(t)^{T}\boldsymbol{u}(t)]dt,
\end{equation}
\end{small}
when the input is constrained in the form 
\begin{equation}
\label{eq:constraint-u-2}
\boldsymbol{u}_{\text{min}}\le \boldsymbol{u}(t) \le \boldsymbol{u}_{\text{max}}, \forall t \in [0,T].
\end{equation}
To satisfy the constraint on rechability, we define a CLF $V(\boldsymbol{x}(t)) \coloneqq(\theta(t)-\theta_{d})^{2}$ with $\theta_{d}=atan2(\frac{y_{d}-y(t)}{x_{d}-x(t)}), c_{1}=c_{2}=1$ to stabilize $\theta(t)$ to $\theta_{d}$ and formulate the relaxed constraint in \eqref{eq:clf} as
\begin{equation}
\label{eq:ACC-clf-2}
L_{f}V(\boldsymbol{x}(t))+L_{g}V(\boldsymbol{x}(t))u_{1}(t) +c_{3}V(\boldsymbol{x}(t))\le \delta(t), 
\end{equation}
where $\delta(t)$ is a relaxation that makes \eqref{eq:ACC-clf-2} a soft constraint.

To satisfy the constraints on safety distance and control input, we will define a continuous function $b(\boldsymbol{x}(t))=(x-x_{o})^{2}+(y-y_{o})^{2}-r_{o}^{2}$ as HOCBF or AVCBF to guarantee $b(\boldsymbol{x}(t))\ge 0$ and constraint \eqref{eq:constraint-u-2}, then formulate all constraints into QP to get the optimal controller. The parameters are $v(0)=2m/s, \theta_{0}=0 rad, M=1650kg, (x_{o}, y_{o})= (0, 0), r_{o}=1m, r_{d}=0.1m, \bigtriangleup t=0.1s, u_{1,\text{max}}=-u_{1,\text{min}}=5rad/s, u_{2,\text{max}}=-u_{2,\text{min}}=8250N, T=10s$.

\subsubsection{Implementation with AVCBFs}
Define $b(\boldsymbol{x}(t))=(x-x_{o})^{2}+(y-y_{o})^{2}-r_{o}^{2},$ the minimum relative degree $\underline{m}$ of $b(\boldsymbol{x}(t))$ with respect to dynamics \eqref{eq:UM-dynamics2} is 2. We set the desired relative degree of the auxiliary function $m_{a}$ equal to $\underline{m}$. Based on Rem. \ref{rem: sufficient-con}, we can introduce up to two auxiliary functions as $\mathcal{A}_{1}(\boldsymbol{x},a_{1}(t))=a_{1}(t)$ and $\mathcal{A}_{2}(\boldsymbol{x},a_{2}(t))=a_{2}(t)$.
Motivated by Sec. \ref{sec:AVCBFs}, we define the auxiliary dynamics for $a_{1}(t)$ as
\begin{small}
\begin{equation}
\label{eq:Auxiliary-dynamics3-1}
\underbrace{\begin{bmatrix}
\dot{a}_{1}(t) \\
\dot{\pi}_{1,2}(t) 
\end{bmatrix}}_{\dot{\boldsymbol{\pi}}_{1}(t)}  
=\underbrace{\begin{bmatrix}
 \pi_{1,2}(t) \\
 0
\end{bmatrix}}_{F_{1}(\boldsymbol{{\pi}}_{1}(t))} 
+ \underbrace{\begin{bmatrix}
  0 \\
  1 
\end{bmatrix}}_{G_{1}(\boldsymbol{{\pi}}_{1}(t))}\nu_{1}(t).
\end{equation}
\end{small}
The HOCBFs for $\mathcal{A}_{1}$ are defined as 
\begin{equation}
\label{eq:SHOCBF-sequence-ACC-2}
\begin{split}
&\varphi_{1,0}(\boldsymbol{{\pi}}_{1})\coloneqq a_{1},\\
&\varphi_{1,1}(\boldsymbol{{\pi}}_{1})\coloneqq \dot{\varphi}_{1,0}(\boldsymbol{{\pi}}_{1})+l_{1,1}\varphi_{1,0}(\boldsymbol{{\pi}}_{1}),\\
&\varphi_{1,2}(\boldsymbol{{\pi}}_{1},\nu_{1})\coloneqq \dot{\varphi}_{1,1}(\boldsymbol{{\pi}}_{1},\nu_{1})+l_{1,2}\varphi_{1,1}(\boldsymbol{{\pi}}_{1}),
\end{split}
\end{equation}
where $\alpha_{1,1}(\cdot),\alpha_{1,2}(\cdot)$ are defined as linear functions. We define the auxiliary dynamics for $a_{2}(t)$ as
\begin{small}
\begin{equation}
\label{eq:Auxiliary-dynamics3-2}
\underbrace{
\dot{a}_{2}(t) \\
}_{\dot{\boldsymbol{\pi}}_{2}(t)}  
=\underbrace{
 0 \\
}_{F_{2}(\boldsymbol{{\pi}}_{2}(t))} 
+ \underbrace{
  1 \\ 
}_{G_{2}(\boldsymbol{{\pi}}_{2}(t))}\nu_{2}(t).
\end{equation}
\end{small}
The HOCBFs for $\mathcal{A}_{2}$ are defined as 
\begin{equation}
\label{eq:SHOCBF-sequence-ACC-3}
\begin{split}
&\varphi_{2,0}(\boldsymbol{{\pi}}_{2})\coloneqq a_{2},\\
&\varphi_{2,1}(\boldsymbol{{\pi}}_{2},\nu_{2})\coloneqq \dot{\varphi}_{2,0}(\boldsymbol{{\pi}}_{2},\nu_{2})+l_{2,1}\varphi_{2,0}(\boldsymbol{{\pi}}_{2}),\\
\end{split}
\end{equation}
where $\alpha_{2,1}(\cdot)$ is defined as a linear function. The AVCBFs are then defined as
\begin{equation}
\label{eq:AVBCBF-sequence-ACC-3}
\begin{split}
&\psi_{0}(\boldsymbol{x},\boldsymbol{{\Pi}})\coloneqq a_{1}b(\boldsymbol{x}),\\
&\psi_{1}(\boldsymbol{x},\boldsymbol{{\Pi}})\coloneqq a_{2}(\dot{\psi}_{0}(\boldsymbol{x},\boldsymbol{{\Pi}})+k_{1}\psi_{0}(\boldsymbol{x},\boldsymbol{{\Pi}})),\\
&\psi_{2}(\boldsymbol{x},\boldsymbol{{\Pi}},\boldsymbol{\nu})\coloneqq \dot{\psi}_{1}(\boldsymbol{x},\boldsymbol{{\Pi}},\boldsymbol{\nu})+k_{2}\psi_{1}(\boldsymbol{x},\boldsymbol{{\Pi}}),
\end{split}
\end{equation}
where $\alpha_{1}(\cdot),\alpha_{2}(\cdot)$ are set as linear functions, $\boldsymbol{{\Pi}}=[\boldsymbol{{\pi}}_{1},\boldsymbol{{\pi}}_{2}]^{T}$, $\boldsymbol{{\nu}}=[\nu_{1},\nu_{2}]^{T}$. By formulating constraints from HOCBFs \eqref{eq:SHOCBF-sequence-ACC-2},\eqref{eq:SHOCBF-sequence-ACC-3}, AVCBFs \eqref{eq:AVBCBF-sequence-ACC-3}, CLF \eqref{eq:ACC-clf-2} and control input \eqref{eq:constraint-u-2}, we can define cost function 
 for QP as
 \begin{small}
\begin{equation}
\label{eq:AVBCBF-cost-2}
\begin{split}
\min_{\boldsymbol{u}(t),\boldsymbol{\nu}(t),\delta(t)} \int_{0}^{T}[\boldsymbol{u}(t)^{T}\boldsymbol{u}(t)+W_{1}(\nu_{1}(t)-a_{1,w})^{2}\\
+W_{2}(\nu_{2}(t)-a_{2,w})^{2}+Q\delta(t)^{2}]dt.
\end{split}
\end{equation}
\end{small}
Since these AVCBFs include two auxiliary functions, the above method is referred to as AVCBF-2. Based on Rem. \ref{rem: sufficient-con}, for simplicity, we remove the second auxiliary function $\mathcal{A}_{2}$, which means that \eqref{eq:Auxiliary-dynamics3-2} and \eqref{eq:SHOCBF-sequence-ACC-3} are no longer needed, and the third term in the cost function \eqref{eq:AVBCBF-cost-2} is also eliminated. Since this approach includes only auxiliary function $\mathcal{A}_{1}$ (with manually tuned parameters), we refer to it as AVCBF-1. Additionally, we use the parametrization method from Alg. \ref{alg:parametrization-cbf} to tune the parameter $a_{1,w}$ in AVCBF-1, and we denote this method as AVCBF-P. Other parameters are set as $a_{1}(0)=a_{2}(0)=\pi_{1,2}(0)=0.1, c_{3}=10, W_{1}=W_{2}=1000,Q=10^{5}, \epsilon_{1}=\epsilon_{2}=10^{-10}, J_{m}=10, N_{c}=8, \varepsilon=0.1, \lambda=10, l_{1,1}=l_{1,2}=l_{2,1}=0.1$. 
\subsubsection{Implementation with HOCBFs}
Similar to AVCBFs, we define $b(\boldsymbol{x}(t))=(x-x_{o})^{2}+(y-y_{o})^{2}-r_{o}^{2}$ for HOCBF candidate. The HOCBFs are then defined as
\begin{equation}
\label{eq:HOCBF-benchmark}
\begin{split}
&\psi_{0}(\boldsymbol{x})\coloneqq b(\boldsymbol{x}),\\
&\psi_{1}(\boldsymbol{x})\coloneqq \dot{\psi}_{0}(\boldsymbol{x})+k_{1}\psi_{0}(\boldsymbol{x}),\\
&\psi_{2}(\boldsymbol{x})\coloneqq \dot{\psi}_{1}(\boldsymbol{x})+k_{2}\psi_{1}(\boldsymbol{x}),
\end{split}
\end{equation}
where $\alpha_{1}(\cdot),\alpha_{2}(\cdot)$ are set as linear functions. By formulating constraints from HOCBFs \eqref{eq:HOCBF-benchmark}, CLF \eqref{eq:ACC-clf-2} and control input \eqref{eq:constraint-u-2}, we can define cost function for QP the same as \eqref{eq:optimal control-cost}. Other parameters are set as $c_{3}=10, Q=10^{5}$. 
\subsubsection{Comparison between AVCBFs and HOCBFs}
 We test the adaptivity to conservativeness of control strategy from four different controllers by changing the hyperparameters $k_{1},k_{2}$ inside the class $\kappa$ functions in \eqref{eq:AVBCBF-sequence-ACC-3} and \eqref{eq:HOCBF-benchmark}. Since the initial heading angle is $0$, the vehicle is initially directed toward the center of the obstacle, making this a highly challenging case. The closer the initial location of the vehicle is to the obstacle, the more difficult it becomes to avoid a collision. It can be observed in Fig. \ref{fig:HOCBF-trj} that under the HOCBF-based QP, the optimization becomes infeasible (denoted by the cross symbol) before the vehicle reaches the green target area, regardless of the chosen set of hyperparameters. In contrast, AVCBF-1 enables the vehicle to reach the target area under certain initial location and hyperparameter settings shown in Fig. \ref{fig:AVCBF-1-trj}; however, infeasibility may still arise when hyperparameters are large (e.g., if $k_{1}=k_{2}=10$, the vehicle tends to brake late, which can lead to an aggressive control strategy). If we use AVCBF-P (Fig. \ref{fig:AVCBF-P-trj}) or AVCBF-2 (Fig. \ref{fig:AVCBF-2-trj}), the vehicle will safely reach the target area, regardless of the chosen initial locations or hyperparameters. By comparing AVCBF-1 and AVCBF-2, we observe that the additional auxiliary function further enhances the feasibility of solving the QP (as stated in Rem. \ref{rem: sufficient-con}). Through AVCBF-P, we demonstrate that even with only one auxiliary function, feasibility and safety can still be guaranteed by adjusting the hyperparameter $a_{1,w}(t)$ using the parametrization method. Moreover, the conservativeness of the control strategy can be reduced by AVCBF-2 or AVCBF-P, as the auxiliary function can dynamically influence $k_{1}$ and 
$k_{2}$ regardless of how conservatively they are set.

We selected the four controllers corresponding to the solid magenta curve in Fig. \ref{fig:closed-loop-trj}, and analyzed them in Fig. \ref{fig:benchmark-details} and Fig. \ref{fig:AVBCBFs-control-input} to provide a more detailed illustration of how AVCBF-2 and AVCBF-P ensure feasibility and safety. In Fig. \ref{fig:AVBCBFs-control-input} We see that due to the presence of tight control bounds, solving the QP with HOCBF is infeasible at $t=1.2s$. When we keep the hyperparameters unchanged and replace HOCBF with AVCBF-1, solving the QP at 
$t=1.2s$ remains infeasible. This is because the safety-feasibility criterion $\psi_{1}>0 $ is not satisfied (in Fig. \ref{fig:first-order-candidate}, the $\psi_{1}(\boldsymbol{x}(1.2))$ of AVCBF-1 is negative). In contrast, the controllers under AVCBF-2 and AVCBF-P remain within the control bounds without making the QP infeasible. From Fig. \ref{fig:first-order-candidate}, we can see that this is because the 
$\psi_{1}(\boldsymbol{x}(t))$ of both methods remains positive. From Fig. \ref{fig:avcbf-parameter} and Fig. \ref{fig:avcbf-iteration}, we can see that although AVCBF-P has one fewer auxiliary function compared to AVCBF-2, it starts iteratively updating $a_{1,w}(t)$ from $t=1.2-0.1N_{c}=0.4s$, within the time window $[0.4s, 1.2s]$, ensuring that solving the QP at $t=1.2s$ and future time steps remains feasible. Fig. \ref{fig:cbf-candidate} shows that both AVCBF-2 and AVCBF-P guarantee safety over time. Moreover, AVCBF-P allows the vehicle to get extremely close to obstacles between 2s and 6s. We conclude that increasing the number of auxiliary functions or using the proposed parametrization method for AVCBFs can ensure both safety and feasibility. Additionally, the proposed parametrization method significantly reduces the conservatism of the control strategy.
\begin{figure*}[t]
    \vspace{3mm}
    \centering
    \begin{subfigure}[t]{0.24\linewidth}
        \centering
        \includegraphics[width=1\linewidth]{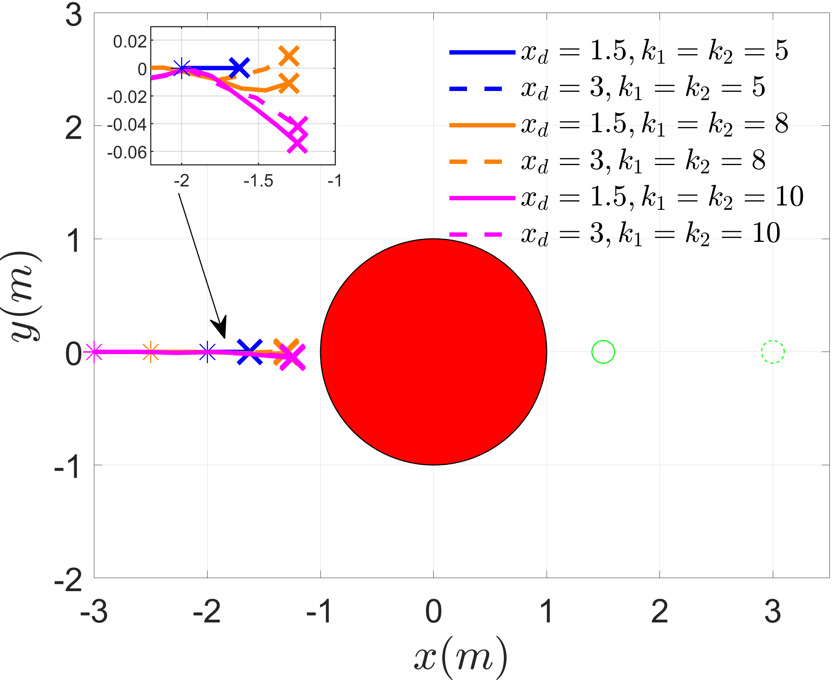}
        \caption{HOCBF.}
        \label{fig:HOCBF-trj}
    \end{subfigure}
    \begin{subfigure}[t]{0.24\linewidth}
        \centering
        \includegraphics[width=1\linewidth]{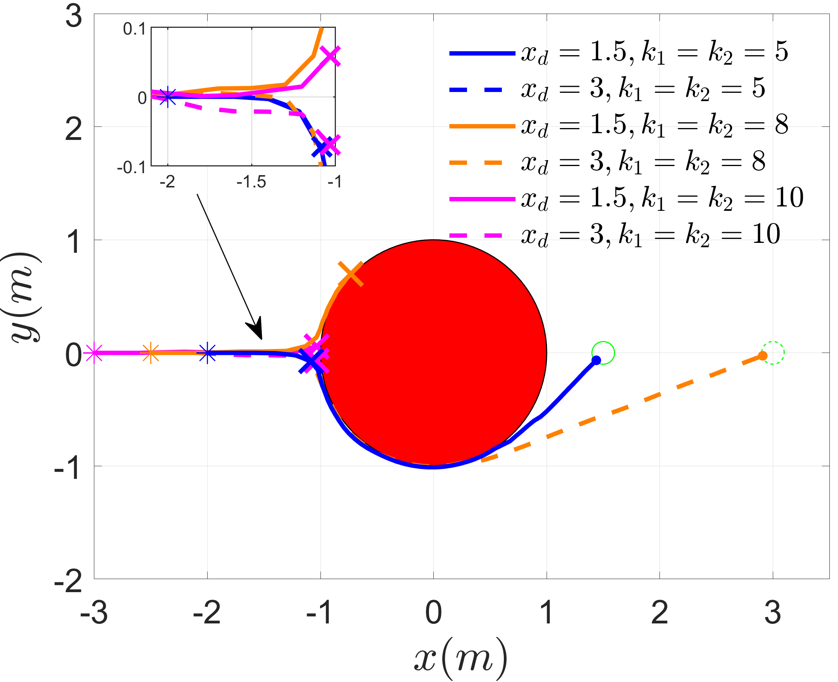}
        \caption{AVCBF-1.}
        \label{fig:AVCBF-1-trj}
    \end{subfigure}  
    \begin{subfigure}[t]{0.24\linewidth}
        \centering
        \includegraphics[width=1\linewidth]{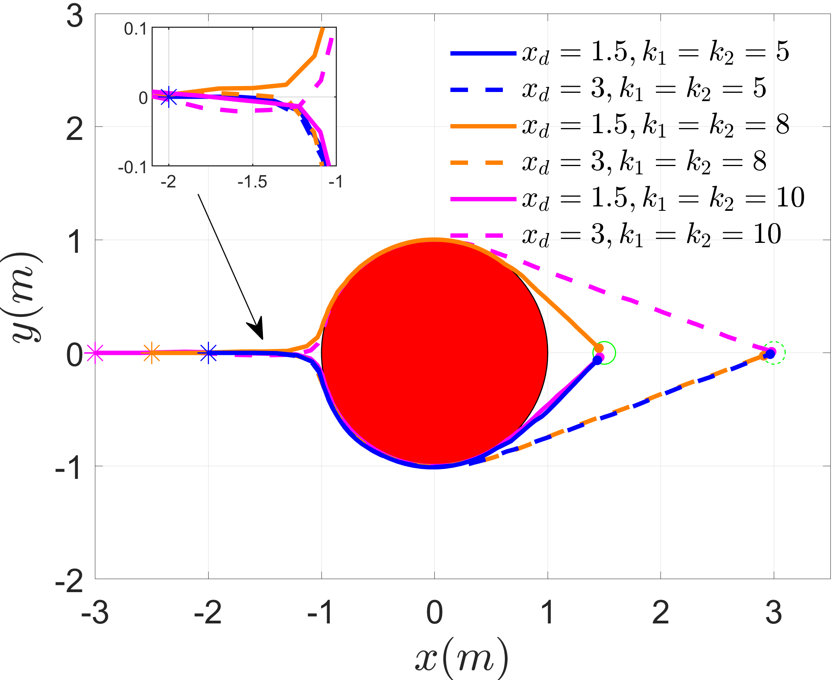}
        \caption{AVCBF-P.}
        \label{fig:AVCBF-P-trj}
    \end{subfigure}
    \begin{subfigure}[t]{0.24\linewidth}
        \centering
        \includegraphics[width=1\linewidth]{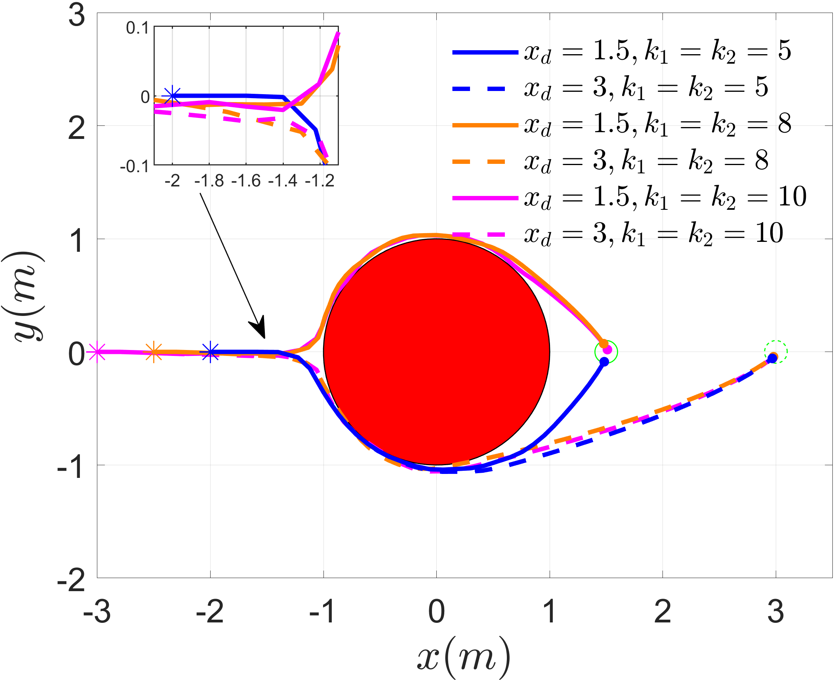}
        \caption{AVCBF-2.}
        \label{fig:AVCBF-2-trj}
    \end{subfigure}
    \caption{Closed-loop trajectories over time with different controllers: several safe closed-loop trajectories starting at the initial locations (depicted by asterisk, magenta: $(x(0), y(0))=(-3,0)$; orange: $(x(0), y(0))=(-2.5,0)$; blue: $(x(0), y(0))=(-2,0)$) terminates within the target areas (depicted by green circle, solid: $(x_{d}, y_{d})=(1.5,0)$; dashed: $(x_{d}, y_{d})=(3,0)$). The cross symbol indicates that the QP is infeasible at this time step. Different sets of hyperparameters are tested. 
    } 
    \label{fig:closed-loop-trj}
\end{figure*}
\begin{figure*}[t]
    \vspace{3mm}
    \centering
    \begin{subfigure}[t]{0.26\linewidth}
        \centering
        \includegraphics[width=1\linewidth]{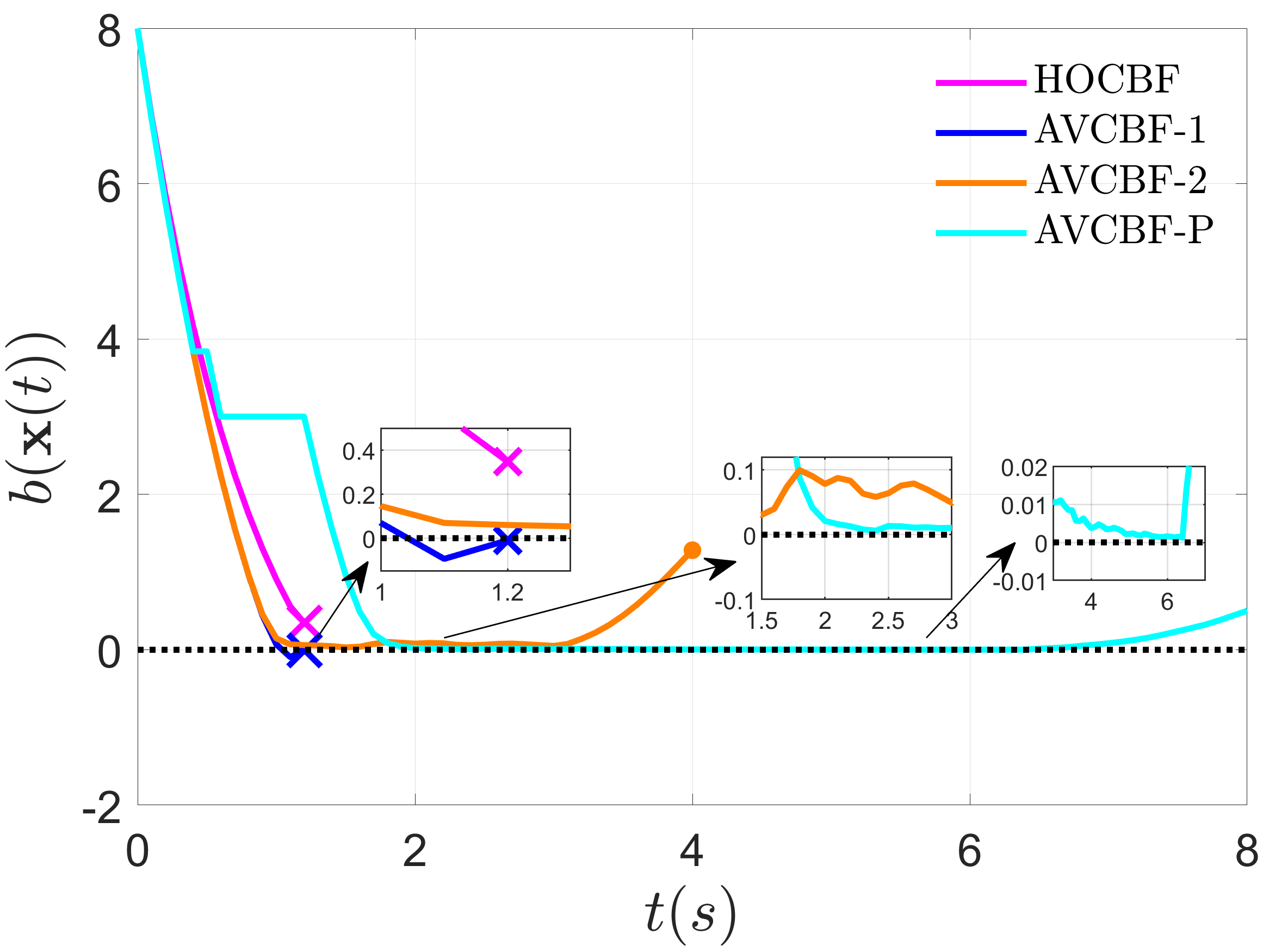}
        \caption{Safe distance over time.}
        \label{fig:cbf-candidate}
    \end{subfigure}
    \begin{subfigure}[t]{0.26\linewidth}
        \centering
        \includegraphics[width=1\linewidth]{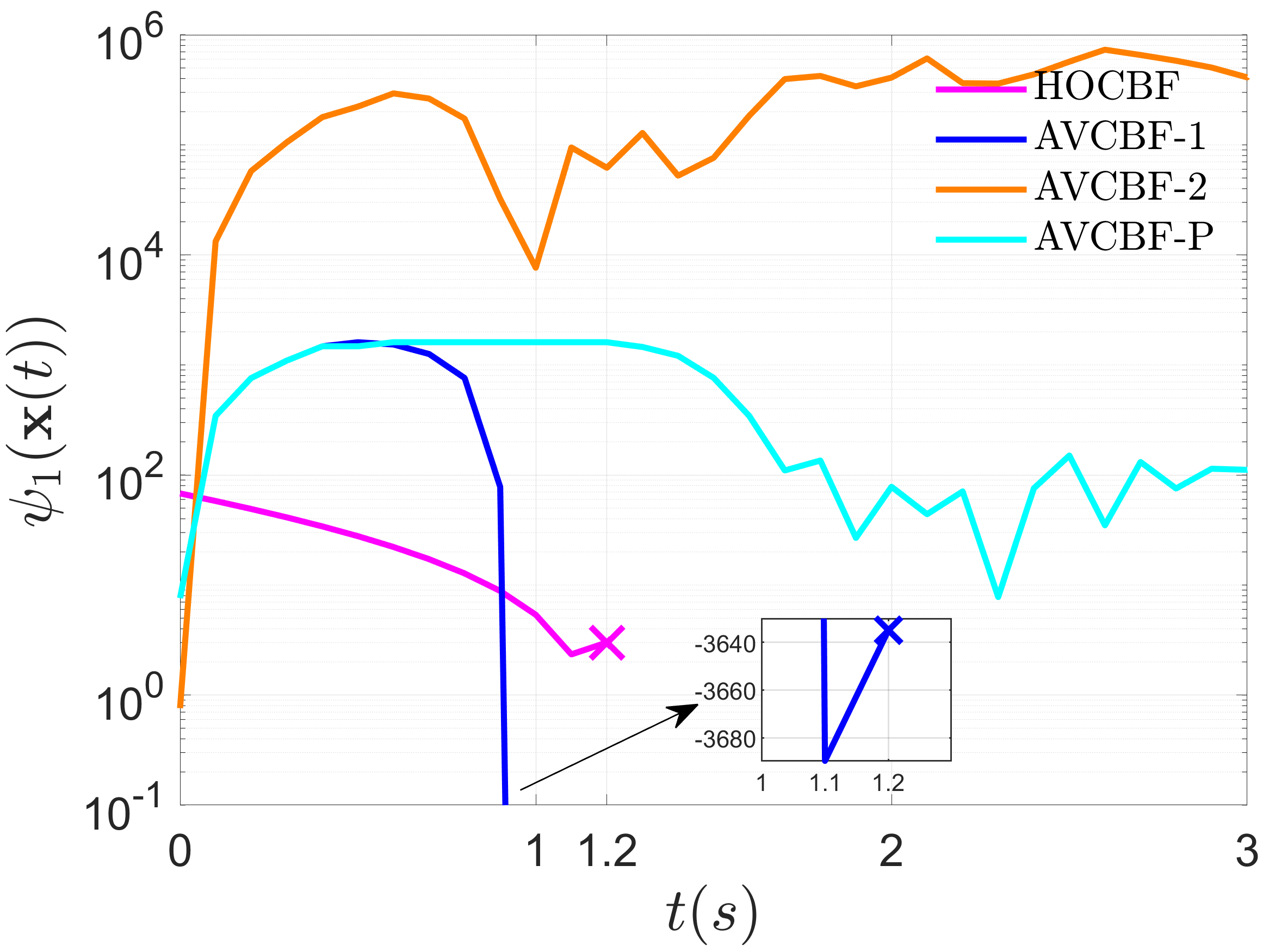}
        \caption{Safety-feasibility criterion check.}
        \label{fig:first-order-candidate}
    \end{subfigure}  
    \begin{subfigure}[t]{0.23\linewidth}
        \centering
        \includegraphics[width=1\linewidth]{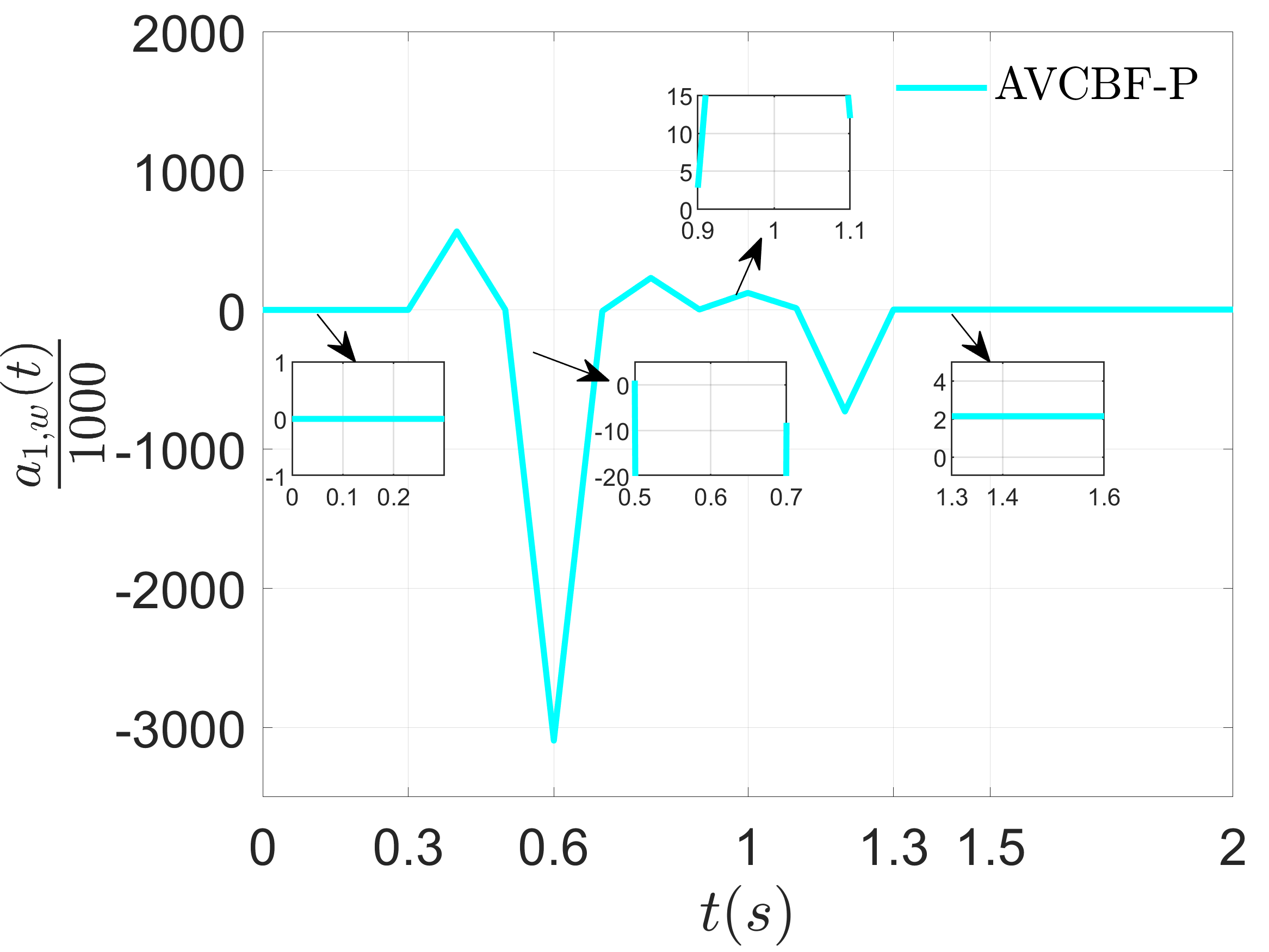}
        \caption{AVCBF-P: hyperparameter tuning over time.}
        \label{fig:avcbf-parameter}
    \end{subfigure}
    \begin{subfigure}[t]{0.23\linewidth}
        \centering
        \includegraphics[width=1\linewidth]{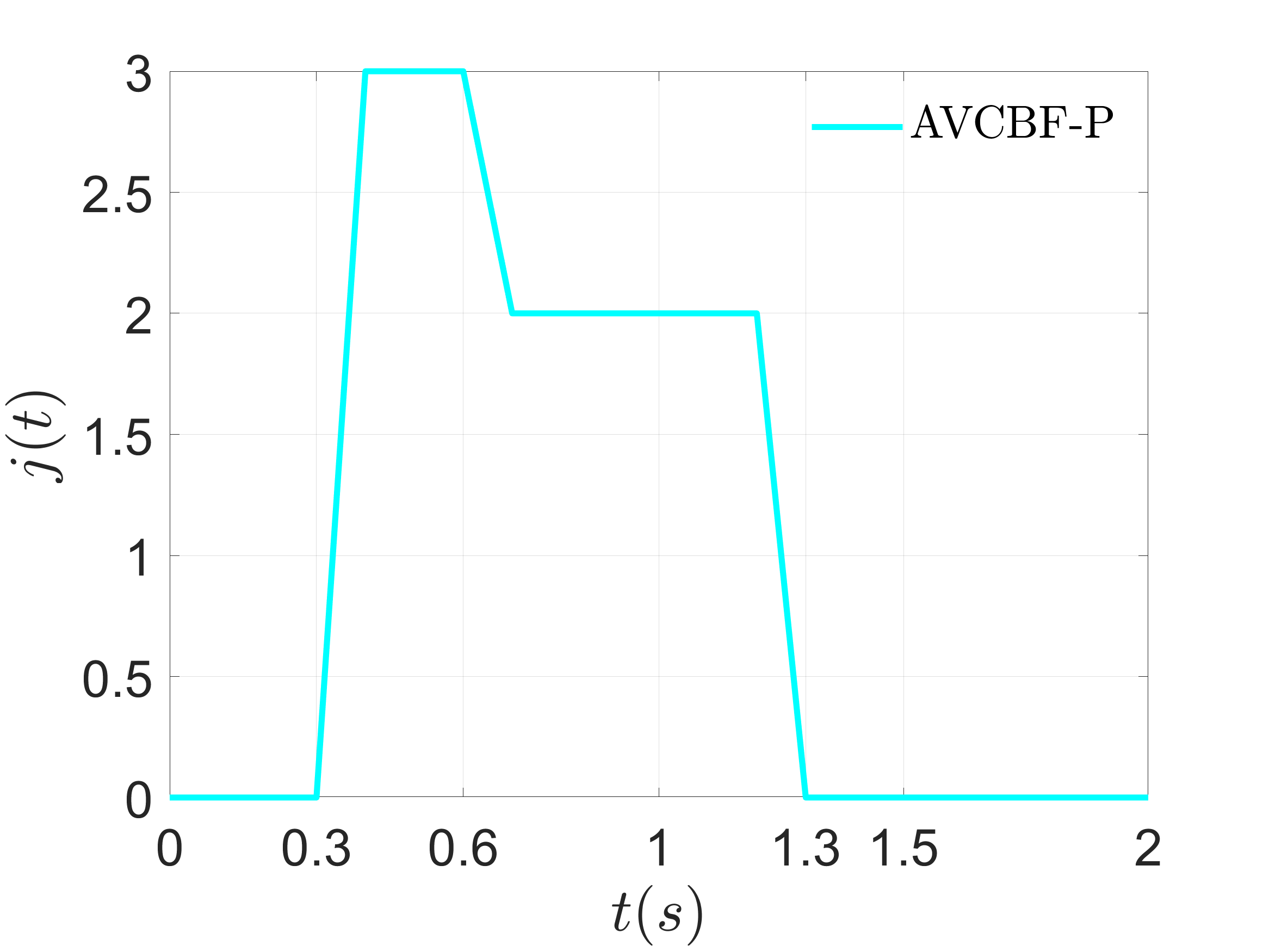}
        \caption{AVCBF-P: total number of iterations over time.}
        \label{fig:avcbf-iteration}
    \end{subfigure}
    \caption{Analysis of safety and feasibility for different controllers (Fig. \ref{fig:cbf-candidate} and Fig. \ref{fig:first-order-candidate}), and hyperparameter adaptation in AVCBF-P (Fig. \ref{fig:avcbf-parameter} and Fig. \ref{fig:avcbf-iteration}), where $x(0)=-3m,x_{d}=1.5m,k_{1}=k_{2}=10, a_{1,w}(0)=0$.
    } 
    \label{fig:benchmark-details}
\end{figure*}

\begin{figure}[ht]
    \centering
    \includegraphics[scale=0.48]{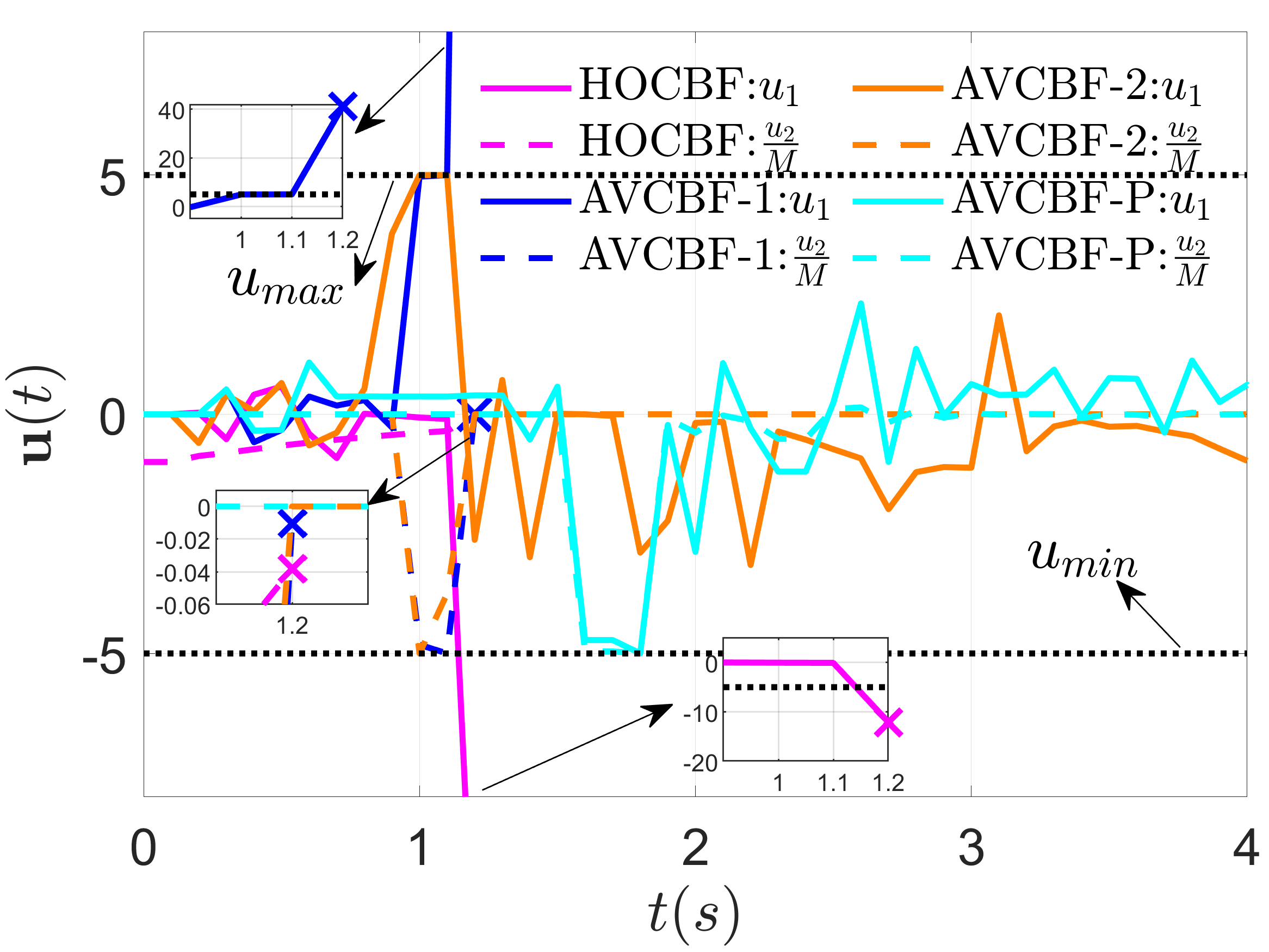}
    \caption{Control input $\boldsymbol{u}(t)$ varies over time with different controllers. The cross symbol indicates that the QP is infeasible at this time step. $x(0)=-3m,x_{d}=1.5m,k_{1}=k_{2}=10, a_{1,w}(0)=0$.}
    \label{fig:AVBCBFs-control-input}
\end{figure} 

\subsubsection{Implementation with AVCBFs with Reduced Relative Degree}
Define $b(\boldsymbol{x}(t))=(x-x_{o})^{2}+(y-y_{o})^{2}-r_{o}^{2}$, although the minimum relative degree $\underline{m}$ of $b(\boldsymbol{x}(t))$ with respect to dynamics \eqref{eq:UM-dynamics2} is 2, based on Rem. \ref{rem:reduced degree}, we can design auxiliary function as $\mathcal{A}_{1}(\boldsymbol{x},a_{1}(t))=a_{1}(t)+v(t)+\theta(t)$. Since the control input appears after differentiating $\mathcal{A}_{1}$ once, the desired relative degree of the auxiliary function $m_{a}$ is set to 1. We define the auxiliary dynamics the same as \eqref{eq:Auxiliary-dynamics2}. The HOCBFs for $\mathcal{A}_{1}$ are defined as 
\begin{small}
\begin{equation}
\label{eq:SHOCBF-sequence-ACC-4}
\begin{split}
&\varphi_{1,0}(\boldsymbol{x},\boldsymbol{{\pi}}_{1})\coloneqq a_{1}+v+\theta,\\
&\varphi_{1,1}(\boldsymbol{x}, \boldsymbol{{\pi}}_{1},\boldsymbol{u},\nu_{1})\coloneqq \dot{\varphi}_{1,0}(\boldsymbol{x}, \boldsymbol{{\pi}}_{1},\boldsymbol{u},\nu_{1})+l_{1,1}\varphi_{1,0}(\boldsymbol{x}, \boldsymbol{{\pi}}_{1}),\\
\end{split}
\end{equation}
\end{small}
where $\alpha_{1,1}(\cdot)$ is defined as a linear function. The AVCBFs are then defined as
\begin{equation}
\label{eq:AVBCBF-sequence-ACC-4}
\begin{split}
&\psi_{0}(\boldsymbol{x},\boldsymbol{{\pi}}_{1})\coloneqq (a_{1}+v+\theta)b(\boldsymbol{x}),\\
&\psi_{1}(\boldsymbol{x}, \boldsymbol{{\pi}}_{1},\boldsymbol{u},\nu_{1})\coloneqq \dot{\psi}_{0}(\boldsymbol{x}, \boldsymbol{{\pi}}_{1},\boldsymbol{u},\nu_{1})+k_{1}\psi_{0}(\boldsymbol{x},\boldsymbol{{\pi}}_{1}),
\end{split}
\end{equation}
where $\alpha_{1}(\cdot)$ is set as a linear function. By formulating constraints from HOCBFs \eqref{eq:SHOCBF-sequence-ACC-4}, AVCBFs \eqref{eq:AVBCBF-sequence-ACC-4}, CLF \eqref{eq:ACC-clf-2} and control input \eqref{eq:constraint-u-2}, we can define the cost function for QP in the same form as \eqref{eq:optimal control-cost new}. We refer to this method as AVCBF-R (reduced relative degree). Additionally, we use the parametrization method from Alg. \ref{alg:parametrization-cbf} to tune the hyperparameter $a_{1,w}$ in AVCBF-R, and we denote this method as AVCBF-R-P. Other parameters are set as $a_{1}(0)=50, c_{3}=10, W_{1}=1000,Q=10^{5}, \epsilon_{1}=10^{-10}, J_{m}=10, N_{c}=6, \varepsilon=0.1, \lambda=10, l_{1,1}=0.5, a_{1,w}(0)=0$. 

We test the adaptivity to conservativeness of control strategy from AVCBF-R and AVCBF-R-P by changing the hyperparameters $k_{1}$ inside the class $\kappa$ function in \eqref{eq:AVBCBF-sequence-ACC-4}. In Fig. \ref{fig:avcbf-reduced-trj}, we can see that starting from $(-3, 0)$, AVCBF-R enables the vehicle to reach the green target areas under certain hyperparameter settings. However, infeasibility (denoted by the cross symbol) arises when hyperparameter is large (e.g., if $k_{1}=3$, the vehicle tends to brake late, which can lead to an aggressive control strategy). In contrast, Fig. \ref{fig:avcbf-reduced-P-trj} shows that AVCBF-R-P enables the vehicle to safely reach same target areas with large hyperparameter ($k_{1}=3$) for any initial location used in AVCBF-R. Moreover, AVCBF-R-P assists the vehicle in starting from a position closer to the obstacle ($x(0)=-2m$) and successfully reaching an area closer to the obstacle ($x_{d}=1.2m$). We extract the infeasible solid blue and solid orange curves from Fig. \ref{fig:avcbf-reduced-trj} and the feasible solid magenta curve from Fig. \ref{fig:avcbf-reduced-P-trj} and compare them in Figs. \ref{fig:cbf-candidate-2} and \ref{fig:first-order-candidate-2}. We find that, similar to Fig. \ref{fig:first-order-candidate}, AVCBF-R-P can ensure $\psi_{0}>0 $ (safety-feasibility criterion) through hyperparameter tuning, whereas AVCBF-R cannot. Therefore, AVCBF-R cannot guarantee feasibility and safety for the given set of hyperparameters. We conclude that AVCBFs with a reduced relative degree can ensure both safety and feasibility if the safety-feasibility criterion is satisfied. Additionally, the proposed parametrization method significantly reduces the conservatism of the control strategy.

\begin{figure*}[t]
    \vspace{3mm}
    \centering
    \begin{subfigure}[t]{0.26\linewidth}
        \centering
        \includegraphics[width=1\linewidth]{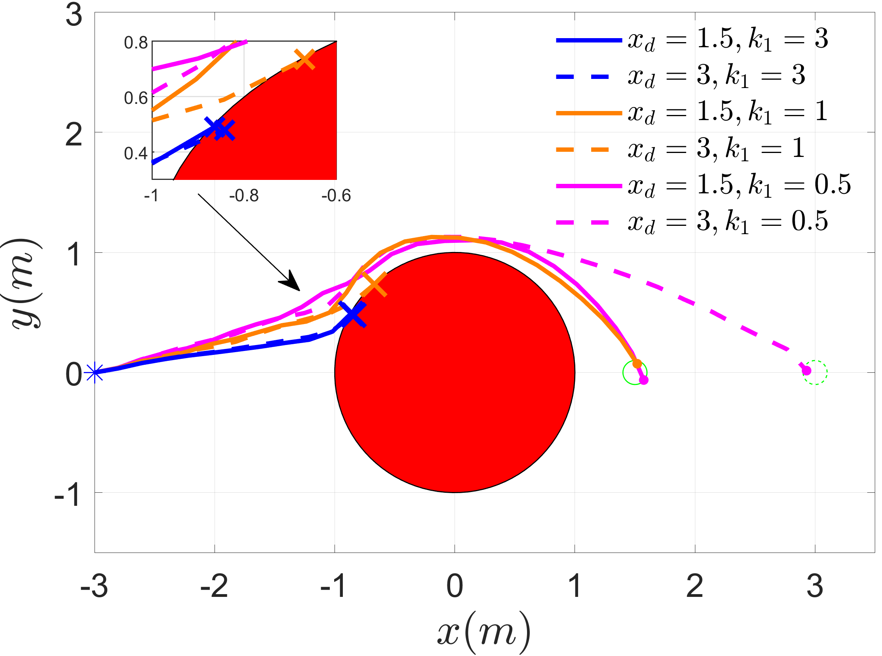}
        \caption{AVCBF-R: Closed-loop trajectories over time.}
        \label{fig:avcbf-reduced-trj}
    \end{subfigure}
    \begin{subfigure}[t]{0.26\linewidth}
        \centering
        \includegraphics[width=1\linewidth]{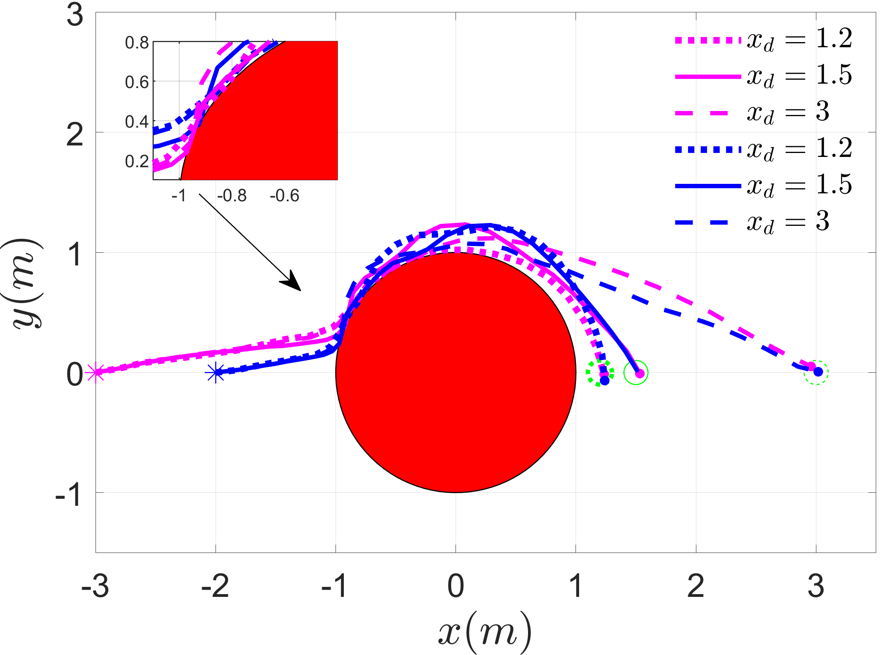}
        \caption{AVCBF-R-P: Closed-loop trajectories over time, $k_{1}=3$.}
        \label{fig:avcbf-reduced-P-trj}
    \end{subfigure}  
    \begin{subfigure}[t]{0.23\linewidth}
        \centering
        \includegraphics[width=1\linewidth]{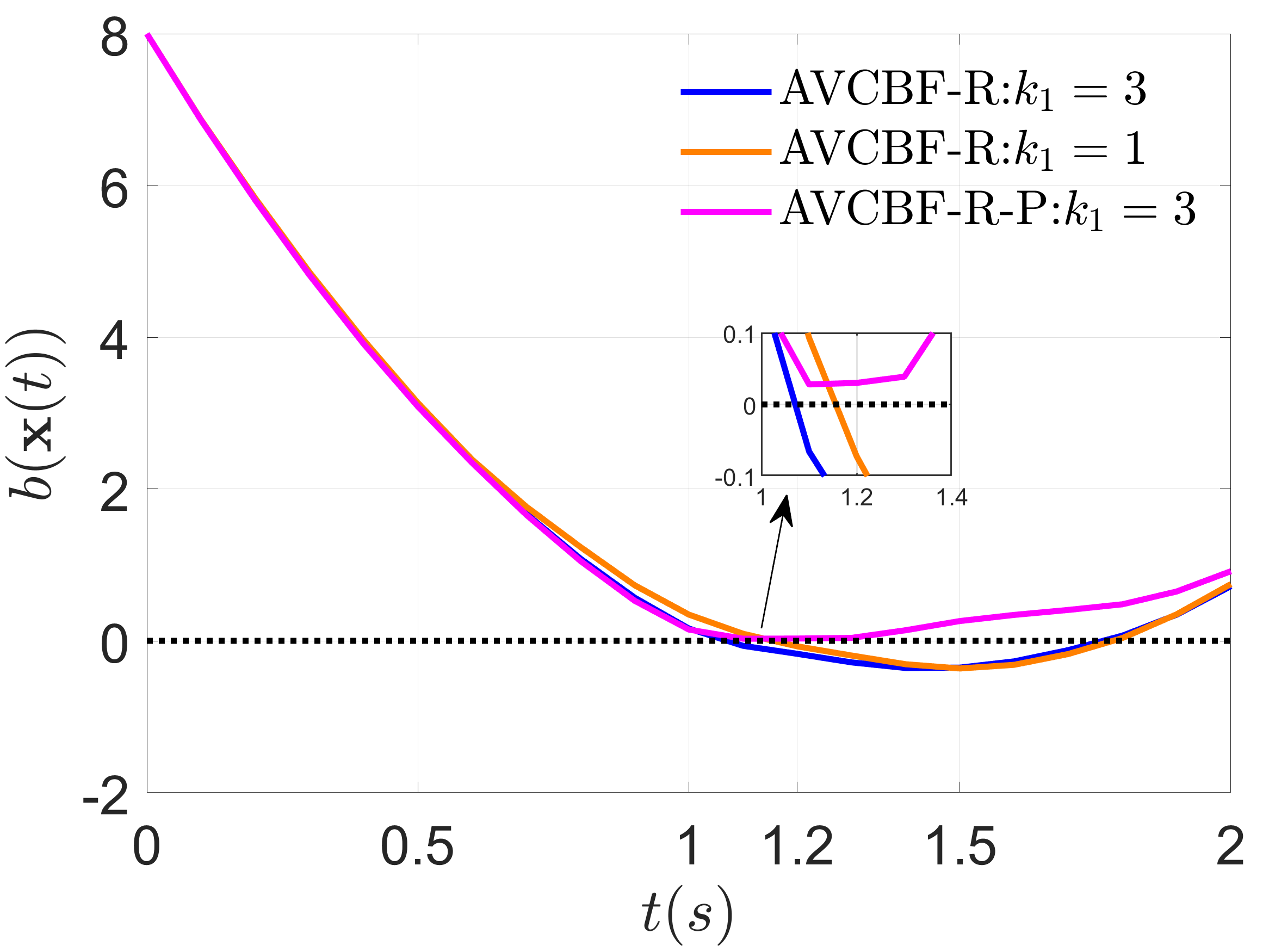}
        \caption{Safe distance over time.}
        \label{fig:cbf-candidate-2}
    \end{subfigure}
    \begin{subfigure}[t]{0.23\linewidth}
        \centering
        \includegraphics[width=1\linewidth]{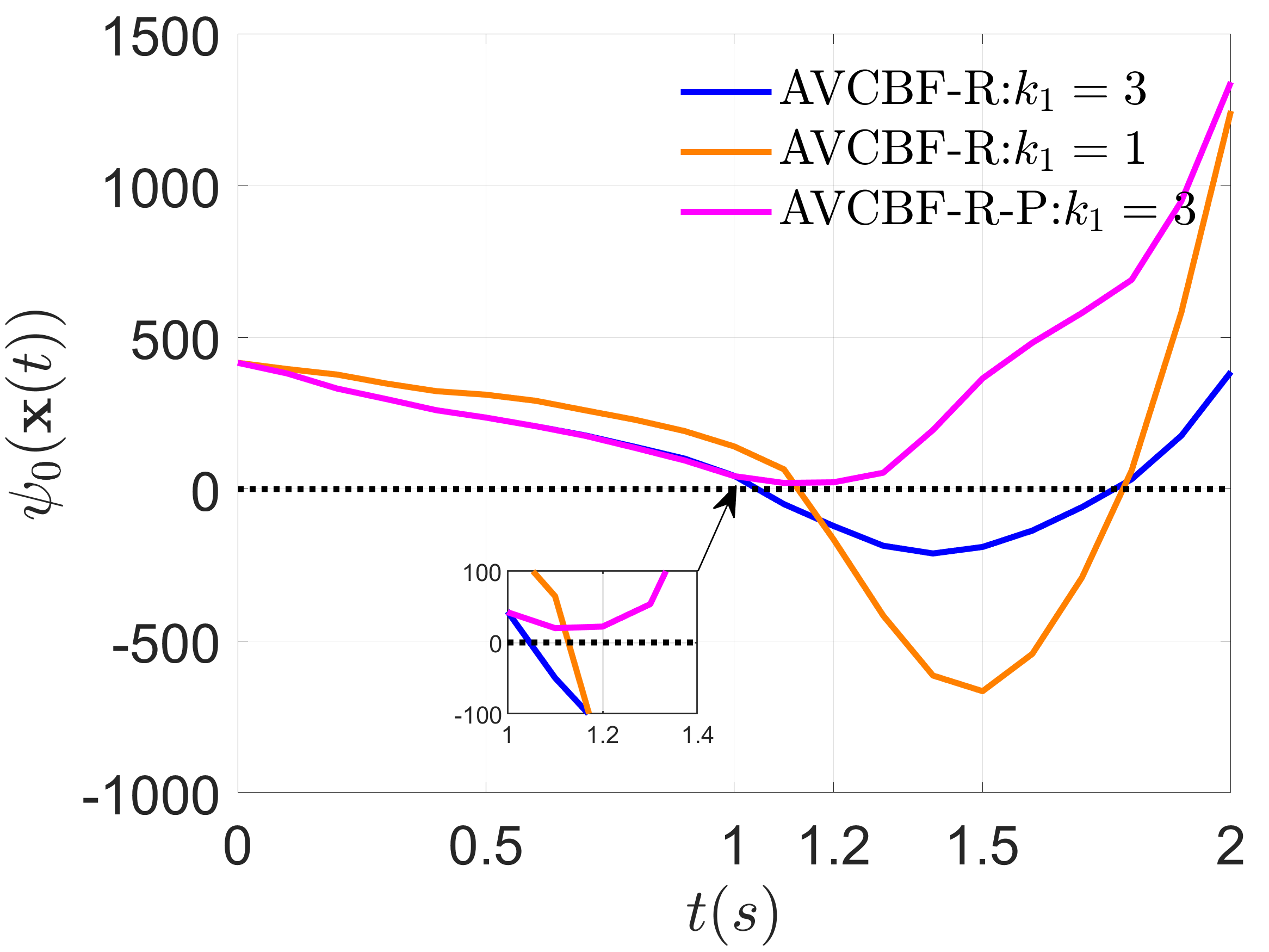}
        \caption{Safety-feasibility criterion check.}
        \label{fig:first-order-candidate-2}
    \end{subfigure}
    \caption{Closed-loop trajectories over time with AVCBF-R (Fig. \ref{fig:avcbf-reduced-trj}) and AVCBF-R-P (Fig. \ref{fig:avcbf-reduced-P-trj}) and analysis of safety and feasibility for two controllers (in Fig. \ref{fig:cbf-candidate-2}, Fig. \ref{fig:first-order-candidate-2}). The cross symbol indicates that the QP is infeasible at this time step. Different sets of hyperparameters are tested. 
    } 
\end{figure*}

\subsection{Obstacle Avoidance for Mixed Relative Degree Systems}
We consider a nonlinear vehicle dynamics in the form
\begin{small}
\begin{equation}
\label{eq:UM-dynamics3}
\underbrace{\begin{bmatrix}
\dot{x}(t) \\
\dot{y}(t) \\
\dot{\theta}(t)\\
 \dot{\phi} (t)\\
\dot{v}(t)
\end{bmatrix}}_{\dot{\boldsymbol{x}}(t)}  
=\underbrace{\begin{bmatrix}
 v(t)\cos{(\theta(t))}  \\
 v(t)\sin{(\theta(t))} \\
 \phi (t) \\
 0 \\
 0
\end{bmatrix}}_{f(\boldsymbol{x}(t))} 
+ \underbrace{\begin{bmatrix}
  0 & 0\\
  0 & 0\\
  0 & 0\\
  1 & 0\\
  0 & \frac{1}{M} 
\end{bmatrix}}_{g(\boldsymbol{x}(t))}\underbrace{\begin{bmatrix}
   u_{1}(t)   \\
  u_{2}(t) 
\end{bmatrix}}_{\boldsymbol{u}(t)}.
\end{equation}
\end{small}
In \eqref{eq:UM-dynamics3}, $M$ denotes the mass of the vehicle and $(x, y)$ denote the coordinates of the unicycle, $v$ is its linear speed, $\theta$ denotes the heading angle, $\phi$ denotes the rotation speed, and $\boldsymbol{u}$ represent the angular acceleration ($u_{1}$) and the driven force ($u_{2}$), respectively. Vehicle limitations are the same as those in Sec. \ref{subsubsec: vehicle limitations2}. 
To satisfy the constraint on rechability, we define a CLF $V(\boldsymbol{x}(t)) \coloneqq(0.1(\theta(t)-\theta_{d})+\phi(t))^{2}$ with $\theta_{d}=atan2(\frac{y_{d}-y(t)}{x_{d}-x(t)}), c_{1}=c_{2}=1$ to stabilize $\theta(t)$ to $\theta_{d}$ and $\phi(t)$ to 0. The relaxed constraint in \eqref{eq:clf} is then formulated the same as \eqref{eq:ACC-clf-2} as a soft constraint. The parameters are $v(0)=2m/s, \phi(0)=0.01 rad/s, M=1650kg, (x_{o}, y_{o})= (0, 0), r_{o}=1m, r_{d}=0.2m, \bigtriangleup t=0.01s, u_{1,\text{max}}=-u_{1,\text{min}}=5rad/s^{2}, u_{2,\text{max}}=-u_{2,\text{min}}=8250N, T=5s$.

We use a continuous function $b(\boldsymbol{x}(t))=(x-x_{o})^{2}+(y-y_{o})^{2}-r_{o}^{2}$ to measure safe distance. We can see that the relative
degree of $b(\boldsymbol{x}(t))$ with respect to $u_{1}$ is 3, and the relative degree
of $b(\boldsymbol{x}(t))$ with respect to $u_{2}$ is 2. Therefore, the minimum relative degree $\underline{m}$ of $b(\boldsymbol{x}(t))$ with respect to \eqref{eq:UM-dynamics3} is 2. If we implement $b(\boldsymbol{x}(t))\ge0 $ with a standard HOCBF, only the control input $u_{2}$ shows up in the HOCBF constraint and the vehicle cannot use angular acceleration to avoid the obstacle. In order to make all control inputs show up in one constraint, based on Rem. \ref{rem:reduced degree}, we can design auxiliary function as $\mathcal{A}_{1}(\boldsymbol{x},a_{1}(t))=a_{1}(t)+v(t)+\phi(t)$. The auxiliary dynamics is the same as \eqref{eq:Auxiliary-dynamics2}. The HOCBFs for $\mathcal{A}_{1}$ are defined as 
\begin{equation}
\label{eq:SHOCBF-sequence-ACC-5}
\begin{split}
&\varphi_{1,0}(\boldsymbol{x},\boldsymbol{{\pi}}_{1})\coloneqq a_{1}+v+\phi,\\
&\varphi_{1,1}(\boldsymbol{x},\boldsymbol{{\pi}}_{1},\boldsymbol{u},\nu_{1})\coloneqq \dot{\varphi}_{1,0}(\boldsymbol{x},\boldsymbol{{\pi}}_{1},\boldsymbol{u},\nu_{1})+l_{1,1}\varphi_{1,0}(\boldsymbol{x},\boldsymbol{{\pi}}_{1}),\\
\end{split}
\end{equation}
where $\alpha_{1,1}(\cdot)$ is defined as a linear function. The AVCBFs are then defined as
\begin{equation}
\label{eq:AVBCBF-sequence-ACC-5}
\begin{split}
&\psi_{0}(\boldsymbol{x},\boldsymbol{{\pi}}_{1})\coloneqq (a_{1}+v+\phi)b(\boldsymbol{x}),\\
&\psi_{1}(\boldsymbol{x}, \boldsymbol{{\pi}}_{1},\boldsymbol{u},\nu_{1})\coloneqq \dot{\psi}_{0}(\boldsymbol{x}, \boldsymbol{{\pi}}_{1},\boldsymbol{u},\nu_{1})+k_{1}\psi_{0}(\boldsymbol{x},\boldsymbol{{\pi}}_{1}),
\end{split}
\end{equation}
where $\alpha_{1}(\cdot)$ is set as a linear function. By formulating constraints from HOCBFs \eqref{eq:SHOCBF-sequence-ACC-5}, AVCBFs \eqref{eq:AVBCBF-sequence-ACC-5}, CLF \eqref{eq:ACC-clf-2} and control input \eqref{eq:constraint-u-2}, we can define the cost function for QP in the same form as \eqref{eq:optimal control-cost new}. We refer to this method as AVCBF-M (mixed relative degree). Other parameters are set as $a_{1}(0)=0.1, c_{3}=10, W_{1}=1,Q=10^{3}, \epsilon_{1}=10^{-10}, k_{1}=l_{1,1}=0.1, a_{1,w}(t)=0$. 

In the above subfigure of Fig. \ref{fig:AVBCBFs-mixed-order}, we can see that AVCBF-M enables the vehicle to reach the green target area from different initial locations and heading angles. If we select the solid blue curve and present its control inputs over time in the below subfigure of Fig. \ref{fig:AVBCBFs-mixed-order}, we observe that due to the design of the auxiliary function, $u_{1}$ appears in the AVCBF-M constraint and remains within the input bounds while influencing the vehicle’s obstacle avoidance behavior. We conclude that auxiliary functions can be specifically designed for AVCBFs to ensure both safety and feasibility when the system's control inputs have mixed relative degrees.
\begin{figure}[ht]
    \centering
    \includegraphics[scale=0.28]{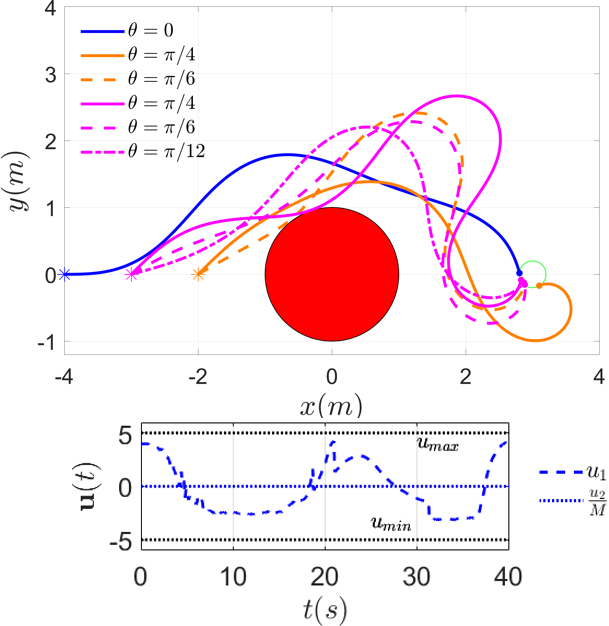}
    \caption{Closed-loop trajectories with AVCBF-M (above): multiple safe trajectories from different initial positions (asterisks) converge to the target area (green solid circle). Initial locations: blue $(-4,0)$, magenta $(-3,0)$, orange $(-2,0)$; target: $(x_d, y_d) = (3,0)$. Control inputs $\boldsymbol{u}(t)$ (below) vary over time for the trajectory starting at $(-4,0)$ with $\theta(0) = 0 rad.$}
    \label{fig:AVBCBFs-mixed-order}
\end{figure} 
\section{Conclusion}
\label{sec:Conclusion}
In this paper, we introduced Auxiliary-Variable Adaptive Control Barrier Functions (AVCBFs) to address the feasibility and safety challenges associated with standard CBF methods. We demonstrated that AVCBFs effectively mitigate infeasibility issues caused by mixed relative degrees, input nullification, and tight or time-varying control bounds. By incorporating auxiliary functions, our method ensures that all control inputs explicitly appear in the desired-order safety constraint, thereby improving feasibility while maintaining safety guarantees.

Additionally, we proposed an automatic hyperparameter tuning method, which dynamically adjusts AVCBF hyperparameters to enhance feasibility and reduce conservatism, eliminating the need for excessive manual hyperparameter tuning.

We demonstrate the effectiveness of AVCBFs and compare them with the state-of-the-art adaptive CBF method across a series of control problems. Future work will integrate AVCBFs into differentiable quadratic programs via machine learning to eliminate auxiliary system design, enhancing efficiency and performance.

\bibliographystyle{IEEEtran}
\balance
\bibliography{references.bib}

\vspace{10pt}

\setlength{\intextsep}{2pt}
\begin{wrapfigure}{l}{25mm} 
 \includegraphics[width=1in,height=1.25in,clip,keepaspectratio]{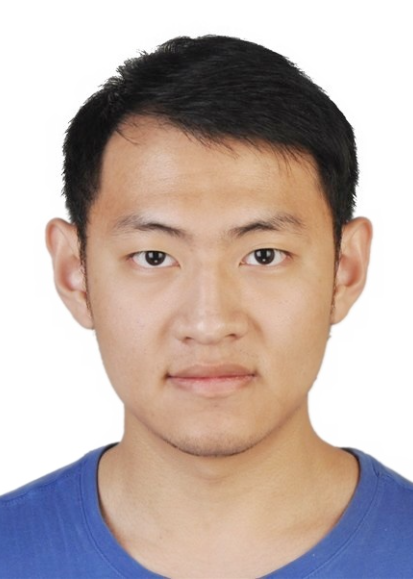}
\end{wrapfigure}\par
{\small \textbf{Shuo Liu} (Student Member, IEEE) received his M.S. degree in Mechanical Engineering from Columbia University, New York, NY, USA, in 2020 and B.Eng. degree in Mechanical Engineering from Chongqing University, Chongqing, China, in 2018. He is currently a Ph.D. candidate in Mechanical Engineering at Boston University, Boston, USA and his research interests include optimization, nonlinear control, deep learning and robotics. \par}

\vspace{10pt}

\setlength{\intextsep}{2pt}
\begin{wrapfigure}{l}{25mm} 
 \includegraphics[width=1in,height=1.25in,clip,keepaspectratio]{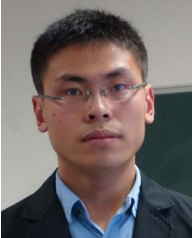}
\end{wrapfigure}\par
{\small \textbf{Wei Xiao} (Member, IEEE) received the B.Sc. degree
in mechanical engineering and automation from the University of Science and Technology Beijing, Beijing, China, the M.Sc. degree in robotics from the Chinese Academy of Sciences (Institute of Automation), Beijing, China, and the Ph.D. degree in systems
engineering from Boston University, Boston, MA, USA, in 2013, 2016, and 2021, respectively.

He is currently a Postdoctoral Associate with the Massachusetts Institute of Technology, Cambridge, MA, USA. His current research interests include control theory and machine learning, with particular emphasis on robotics and traffic control. \par}

\vspace{10pt}

\setlength{\intextsep}{2pt}
\begin{wrapfigure}{l}{25mm} 
 \includegraphics[width=1in,height=1.25in,clip,keepaspectratio]{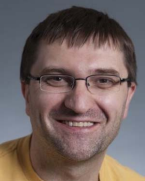}
\end{wrapfigure}\par
{\small \textbf{Calin Belta} (Fellow, IEEE) received the B.Sc.
and M.Sc. degrees from the Technical University of Iasi, Iasi, Romania, in 1995 and 1997, respectively, and the M.Sc. and Ph.D. degrees from the University of Pennsylvania, Philadelphia, PA, USA, in 2001 and 2003. 

He is currently the Brendan Iribe Endowed Professor of Electrical and Computer Engineering and Computer Science at the University of Maryland, College Park, MD, USA. His research focuses on control theory and formal methods, with particular emphasis on hybrid and cyber-physical systems, synthesis and verification, and applications in robotics and biology.

Dr. Belta was a recipient of the Air Force Office of Scientific Research
Young Investigator Award and the National Science Foundation CAREER Award.\par}
\end{document}